

\newcommand{\FullOrShort}{full}   
\newcommand{\final}{nonfinal}
\newcommand{\CommentsOnOffError}{error}  


\RequirePackage{ifthen}

\ifthenelse{\equal{\FullOrShort}{full}}{
		\documentclass[11pt]{article}
	  \newcommand{\fullOnly}[1]{#1}
	  \newcommand{\shortOnly}[1]{}
  }{
		\documentclass[conference]{IEEEtran}
	  \newcommand{\fullOnly}[1]{}
	  \newcommand{\shortOnly}[1]{#1}
  }

\ifthenelse{\equal{\CommentsOnOffError}{on}}{
\newcommand{\commentbh}[1]{{\bfseries BH: #1}}
\newcommand{\commentmm}[1]{{\bfseries MM: #1}}
}{}
\ifthenelse{\equal{\CommentsOnOffError}{off}}{
\newcommand{\commentbh}[1]{}
\newcommand{\commentmm}[1]{}
}{}

\usepackage[cmex10]{amsmath}
\usepackage{amssymb,amsthm}
\usepackage{cite}
\usepackage{color}
\usepackage{graphicx}
\usepackage{algorithm}
\usepackage{algorithmicx}
\usepackage[noend]{algpseudocode}
\usepackage[capitalize]{cleveref}
\usepackage{xspace}

\newcommand{\refalgSimpleCompute}{Algorithm 1\xspace}
\newcommand{\refalgRobustRandomnessExchange}{Algorithm 2\xspace}
\newcommand{\refalgComputeOblivious}{Algorithm 3\xspace}
\newcommand{\refalgComputeAdv}{Algorithm 4\xspace}

\ifthenelse{\equal{\final}{final}}{
		\newcommand{\finalOnly}[1]{#1}
	  \newcommand{\finalDel}[1]{}
		\renewcommand{\paragraph}[1]{\smallskip \emph{#1: }}
		\newcommand{\algorithmfontsize}{\footnotesize}
		\renewcommand{\Statex}{}
  }{
	  \newcommand{\finalOnly}[1]{}
	  \newcommand{\finalDel}[1]{#1}
		\newcommand{\algorithmfontsize}{\small}
		\usepackage{fullpage}
  }

\newcommand{\Trans}{\texttt{T}}
\newcommand{\MPone}{\texttt{MP1}}
\newcommand{\MPtwo}{\texttt{MP2}}
\newcommand{\VP}{\tilde{\texttt{k}}}
\newcommand{\verif}{\texttt{k}}
\newcommand{\voteone}{\texttt{v1}}
\newcommand{\votetwo}{\texttt{v2}}
\newcommand{\error}{\texttt{E}}
\newcommand{\BVC}{BVC}

\newcommand{\stringx}{\texttt{X}}
\newcommand{\stringy}{\texttt{Y}}
\newcommand{\strings}{\texttt{S}}

\newcommand{\twopartdef}[4]
{
	\left\{
		\begin{array}{ll}
			#1 & \mbox{if } #2  \smallskip \\
			#3 & \mbox{if } #4
		\end{array}
	\right.
}

\newcommand{\floor}[1]{\left\lfloor #1 \right\rfloor}
\newcommand{\ceil}[1]{\left\lceil #1 \right\rceil}

\newtheorem{theorem}{Theorem}[section]
\newtheorem{template}{Template}[section]

\newtheorem{definition}[theorem]{Definition} 
\newtheorem{lemma}[theorem]{Lemma}
\newtheorem{conjecture}[theorem]{Conjecture}
\newtheorem{corollary}[theorem]{Corollary}

\Crefname{theorem}{Theorem}{Theorems}
\Crefname{lemma}{Lemma}{Lemmas}
\Crefname{claim}{Claim}{Claims}
\Crefname{remark}{Remark}{Remarks}
\Crefname{corollary}{Corollary}{Corollaries}
\Crefname{proposition}{Proposition}{Propositions}
\Crefname{definition}{Definition}{Definitions}
\Crefname{observation}{Observation}{Observations}

\begin{document}


\date{}
\title{Interactive Channel Capacity Revisited}


\author{Bernhard Haeupler\\ Microsoft Research\\ \texttt{haeupler@cs.cmu.edu} }

\maketitle

\shortOnly{\setcounter{page}{0}}
\thispagestyle{empty}

\newcommand{\listen}{\text{listen}}
\newcommand{\eps}{\epsilon}

\begin{abstract}
We provide the first capacity approaching coding schemes that robustly simulate any interactive protocol over an \emph{adversarial channel} that corrupts any $\eps$ fraction of the transmitted symbols.  Our coding schemes achieve a communication rate of $1 - O(\sqrt{\eps \log \log 1/\eps})$ over any adversarial channel. This can be improved to $1 - O(\sqrt{\eps})$ for random, oblivious, and computationally bounded channels, or if parties have shared randomness unknown to the channel. 

\finalDel{\medskip}

\newcommand{\firstfootnote}{Our protocols work for the standard setting in which the input protocol is alternating and the simulation has an alternating or non-adaptive, i.e., fixed, communication order. The impossibility result of \cite{KR} does not hold for alternating input protocols. Instead, an input protocol with a more complex communication order is assumed while the simulations are restricted to be non-adaptive. We point out that insisting on non-adaptive simulations is too restrictive for  general input protocols: Independently of the amount of noise most (non-alternating) input protocols \emph{cannot} be simulated robustly in a non-adaptive manner, i.e., a rate of $1 - o(1)$ is impossible even if the channel introduces merely a single random error. The $1 - O(\sqrt{H(\eps)})$-rate coding scheme of \cite{KR} avoids this barrier by restricting the input protocols that can be simulated. Our coding scheme naturally works for \emph{any} input protocol by allowing adaptive coding schemes as introduced in \cite{GHS14}.}

Surprisingly, these rates exceed\finalDel{\footnote{\firstfootnote}} the $1 - \Omega(\sqrt{H(\eps)}) = 1 - \Omega(\sqrt{\eps \log 1/\eps})$ interactive channel capacity bound which [Kol and Raz; STOC'13] recently proved for \emph{random errors}. We conjecture $1 - \Theta(\sqrt{\eps \log \log 1/\eps})$ and $1 - \Theta(\sqrt{\eps})$ to be the \emph{optimal} rates for their respective settings and therefore to capture the \emph{interactive channel capacity} for \emph{random and adversarial errors}.

\finalDel{\medskip}

In addition to being very communication efficient, our randomized coding schemes have multiple other advantages. They are computationally efficient, extremely natural, and significantly simpler than prior (non-capacity approaching) schemes. In particular, our protocols do not employ any coding but allow the original protocol to be performed \emph{as-is}, interspersed only by short exchanges of hash values. When hash values do not match, the parties backtrack. 
Our approach is, as we feel, by far the simplest and most natural explanation for why and how robust interactive communication in a noisy environment is possible. 

\end{abstract}

\finalDel{\newpage}

\section{Introduction}

We study the \emph{interactive channel capacity} of random and adversarial error channels, that is, the fundamental limit on the communication rate up to which any interactive communication can be performed in the presence of noise. We give novel coding schemes which we conjecture to achieve the corresponding interactive channel capacity for a wide variety of channels, up to a constant in the second order term for any small error rate $\eps$. Our coding schemes are extremely simple, computationally efficient, and give the most natural and intuitive explanation for why and how error correction can be performed in interactive communications. 

\subsection{Prior Work}

\paragraph{From Error Correcting Codes $\ldots$}
The concept of (forward) error correcting codes has fundamentally transformed the way information is communicated and stored and has had profound and deep connections in many sub-fields of mathematics, engineering, and beyond. Error correcting codes allow to add redundancy to any message consisting of $n$ symbols, e.g., bits, and transform it into a coded message with $\alpha n + o(n)$ symbols from a finite alphabet $\Sigma$ from which one can recover the original message even if any $\eps$-fraction of the symbols are corrupted in an arbitrary way. This can be used to store and recover information in a fault tolerant way (e.g., in CDs, RAM, $\ldots$) and also leads to robust transmissions over any noisy channel. In the later case, one denotes with $R = \frac{1}{\alpha \log |\Sigma|}$ the \emph{communication rate} at which such a communication can be performed with negligible probability of failure and for a given channel one denotes with the \emph{channel capacity} $C$ the supremum of the achievable rates for large $n$.

The groundbreaking works of Shannon and Hamming showed that the capacity $C$ of any binary channel with an $\eps$ fraction of noise satisfies $C = 1 - \Theta(H(\eps))$, where $H(\eps) = \eps \log \frac{1}{\eps} + (1 - \eps) \log \frac{1}{1 - \eps}$ denotes the binary entropy function which behaves like $H(\eps) = \eps \log \frac{1}{\eps} + O(\eps)$ for $\eps \rightarrow 0$. More precisely, for random errors as modeled by the \emph{binary symmetric channel} (BSC), which flips each transmitted bit with probability $\eps$, Shannon's celebrated theorem states that the rate $R = 1 - H(\eps)$ is the exact asymptotic upper and lower bound for achieving reliable communication. Furthermore, for arbitrarily, i.e., adversarially, distributed errors Hamming's work shows that the rate $R = 1 - \Theta(H(\eps))$ remains achievable. Determining the optimal rate of a binary code or even just the constant hidden in the second order term is a fundamental and widely open question in coding theory. The popular guess or conjecture is that the Gilbert-Varshamov bound of $R \leq 1 - H(2 \eps)$ is essentially tight.


\paragraph{$\ldots$ to Coding Schemes for Interactive Communications}
These results apply to \emph{one-way} communications in which one party, say Alice, wants to communicate information to another party, say Bob, in the presence of 
 noise. In this paper we are interested in the same concept but for settings in which Alice and Bob have a \emph{two-way} or \emph{interactive} communication which they want to make robust to noise by adding redundancy. More precisely, Alice and Bob have some conversation in mind which in a noise-free environment can be performed by exchanging $n$ symbols in total. 
They want a \emph{coding scheme} which adds redundancy and transforms any such conversation into a slightly longer $\alpha n$-symbol conversation from which both parties can recover the original conversation outcome even when any $\eps$ fraction of the coded conversation is corrupted.

The reason why one cannot simply use error correcting codes for each message is that misunderstanding just one message, which corresponds only to a $1/n$ fraction of corruptions for an $n$ message interaction, leads to the remainder of the conversation becoming irrelevant and useless. It is therefore a priori not clear that tolerating some (even tiny) constant fraction of errors $\eps$ is even possible. 

In 1993 Schulman\cite{Schulman} was the first to address this question. In his beautiful and at the time surprising work he showed that tolerating an $\eps = 1/240$ fraction of the adversarial errors is possible for some constant overhead $\alpha = \Theta(1)$. This also directly implies that any error rate bounded away by a constant from $1/2$ can be tolerated for the easier random errors setting, since one can easily reduce the error rate by repeating symbols multiple times. Later, Braverman and Rao~\cite{BR11} showed that an adversarial error rate of up to $\eps < 1/4$ could be tolerated with a constant overhead $\alpha = \Theta(1)$. Lastly, \cite{GHS14,GH13,EffremenkoBraverman,FGOS13,EGeH14} determined the full error rate region in which a non-zero communication rate is possible in a variety of settings. While the initial coding schemes were not computationally efficient, because they relied on powerful but hard to construct tree codes, later results \cite{BK12,BN13,GH13,GMS11} provided polynomial time schemes using randomization. 

\paragraph{Capacity Approaching Coding Schemes}
None of these works considers the communication rate that can be maintained for a given noise level $\eps$. In fact, each of these schemes has a relatively large unspecified constant factor overhead $\alpha$ even for negligible amounts of noise $\eps \rightarrow 0$. This is unsatisfactory as one would expect the necessary amount of redundancy to vanish in this case. However, capacity approaching schemes were considered out of scope of techniques up to this point~\cite[Problem 8]{BravermanAllerton}. The work by Kol and Raz~\cite{KR} was the first to consider the communication rate up to which interactive communication could be maintained over a noisy channel. In particular, they studied random errors, as modeled by the BSC, with an error probability $\eps \rightarrow 0$. Under seemingly reasonable assumptions they proved an upper bound of $1 - \Omega(\sqrt{H(\eps)})$, that is, that some protocols cannot be robustly simulated with a rate exceeding $1 - \Omega(\sqrt{H(\eps)})$. This is significantly lower than the $1 - H(\eps)$ bound for the standard one-way setting. They also gave a coding scheme that achieves a matching rate of $1 - O(\sqrt{H(\eps)})$ for the same setting. This was seen as a characterizing the BSC \emph{interactive channel capacity} up to constants in the second order term, a notable breakthrough.

\subsection{Our Results}\label{sec:ourresults}

\paragraph{Coding Schemes for Adversarial Errors}
This work started out as an attempt to address the question of communication rate and interactive channel capacity in the much harsher adversarial noise setting. In particular, the hope was to design a \emph{capacity approaching} coding scheme for adversarial errors that could achieve a communication rate approaching one as the noise level $\eps$ goes to zero.

To our shock the communication rates of our final coding schemes exceed the $1 - \Omega(\sqrt{H(\eps)}) = 1 - \Omega(\sqrt{\eps \log \frac{1}{\eps}})$ bound of \cite{KR} even though they work in the much harder adversarial noise setting. In particular, the new communication schemes operate at a communication rate of $1 - O(\sqrt{\eps})$ against any worst-case oblivious channel and even channels controlled by a fully adaptive adversary as long as this adversary is computationally efficient or as long as the parties share some randomness unknown to the adversary. These trivially include the case of i.i.d.\ random errors. For the unrestricted fully adaptive adversarial channel we achieve a rate of $1 - O(\sqrt{\eps \log\log \frac{1}{\eps}})$ which is still lower than the bound of the impossibility result of \cite{KR}.

\paragraph{Interactive Channel Capacity Revisited}
We uncover that these differences stem from important but subtle variations in the assumptions on the \emph{communication order}, that is, the order in which which Alice and Bob speak, both for the original (noiseless) input protocol $\Pi$ and for its \finalDel{noise-robust} simulation $\Pi'$:

The standard setting, used by all works prior to \cite{KR}, assumes that $\Pi$ is alternating, that is, that the parties take turns sending one symbol each. Both our $1 - O(\sqrt{\eps})$-rate and our $1 - O(\sqrt{\eps \log \log \frac{1}{\eps}})$-rate coding scheme as well as the $1 - O(\sqrt{H(\eps)})$-rate coding scheme of \cite{KR} work in this setting. For all three protocols the simulation $\Pi'$ can furthermore be chosen to have the same fixed alternating communication order. We remark that if one does not care about constant factors in the communication rate assuming an alternating input protocol is essentially without loss of generality because any protocol can be transformed to be alternating while only increasing its length by at most a factor of two. However, this transformation and wlog-assumption cannot be applied to design capacity approaching protocols. Regardless, the setting of alternating protocols and simulations is a very simple and clean setting which still encapsulates the characteristics and challenges of the problem.

The impossibility result of \cite{KR} however does not apply to alternating protocols. Instead, an input protocol with a more complex communication order is assumed. More importantly, the simulations $\Pi'$ are restricted to be non-adaptive, that is, have an a priori fixed communication order which defines for each time step which party sends and which listens. The $1 - \Omega(\sqrt{\eps \log 1/\eps})$ lower bound of \cite{KR} subtly but nonetheless crucially builds on this non-adaptivity assumption. 

After understanding these issues better, we point out that insisting on non-adaptive simulations $\Pi'$ is too restrictive for general input protocols $\Pi$ in a very strong sense: Many non-alternating input protocols $\Pi$ simply cannot be simulated robustly without losing at least a constant factor in the communication rate. This impossibility is furthermore essentially unrelated to the type of noise in the channel and therefore unrelated to the question of interactive channel capacity. More precisely, we conjecture the following $1 - \Omega(1)$ impossibility result to hold\finalDel{ (see also \Cref{sec:importantremarks})}:

\begin{conjecture}\label{claim:adaptiveunstructured}
Any protocol $\Pi$ with a sufficiently non-regular, e.g., pseudo-random, communication order cannot be robustly simulated by any non-adaptive protocol $\Pi'$ with a rate of $R = 1 - o(1)$. This is true for essentially any channel introducing some error, in particular, even for channels introducing merely a single random error or erasure.
\end{conjecture}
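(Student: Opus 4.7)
The plan is to construct an explicit input protocol $\Pi$ with a pseudo-random communication order and then show that any non-adaptive simulation $\Pi'$ at rate $1-o(1)$ breaks on a single flipped or erased bit. I would take $\Pi$ to be an $n$-round protocol in which both parties hold uniform random inputs of length $n/2$, each round carries one bit, and the speaker $\sigma_i \in \{A,B\}$ at round $i>1$ is a pseudo-random function of the transcript through round $i-1$ (a random oracle suffices for the analysis). Two properties of $\Pi$ are crucial. First, conditioned on any prefix, $\sigma_i$ is essentially uniform, so the Alice/Bob pattern is incompressible. Second, flipping a single bit of the transcript causes the induced schedule on later rounds to disagree with the original at $\Omega(n)$ positions in expectation.

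I would next argue that any non-adaptive $\Pi'$ of length $m = (1+o(1))n$ that correctly simulates $\Pi$ in the noiseless case must essentially transmit the bits of $\Pi$ ``raw''. Formally, each party's view induces a partial map $\phi \colon [m] \to [n]$ matching the $\Pi'$-round $j$ to the $\Pi$-round $\phi(j)$ whose bit is being transmitted. Writing $\tau$ for the fixed non-adaptive schedule, one can show by an entropy/counting argument that $\tau_j = \sigma_{\phi(j)}$ on all but $o(n)$ indices: a scheduled $\Pi'$-speaker cannot produce a $\Pi$-bit for a $\Pi$-round in which the other party is the scheduled $\Pi$-speaker, because that bit depends on the other's private input, and there are only $o(n)$ spare $\Pi'$-rounds available for such cross-talk. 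Hence the mapping $\phi$ is tightly determined by the transcript and by the pseudo-random $\sigma$.

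The heart of the argument is then to combine these two facts with a single channel error. Flip one transmitted bit at an early $\Pi'$-round $j^\star$ with $i^\star = \phi(j^\star) \ll n$. The receiver's view of $\Pi$-bit $i^\star$ now differs from the sender's, so by the pseudo-random property their induced schedules $\sigma^A,\sigma^B$ differ at $\Omega(n)$ of the later rounds, and their induced maps $\phi^A,\phi^B$ disagree on $\Omega(n)$ of the subsequent $\Pi'$-rounds. At each such disagreement the two parties interpret the same transmitted bit as belonging to a different $\Pi$-round, injecting a constant fraction of errors into the reconstructed $\Pi$-transcript. To correct for this, $\Pi'$ would have to reserve $\Omega(n)$ bits of redundancy to disambiguate between the two candidate schedules, contradicting the rate $1-o(1)$ assumption. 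The argument for a random erasure is identical: the receiver simply has no candidate value at $j^\star$, which is at least as damaging.

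The main obstacle I expect is to rule out clever encodings that implicitly spread schedule information across the $\Pi'$-transcript, since a priori a $\Pi'$-bit need not equal a $\Pi$-bit but could be a function of many of them. The right tool is likely an information-theoretic argument in the spirit of Kol and Raz~\cite{KR}, lower-bounding the mutual information between the $\Pi'$-transcript and the $\Omega(n)$ schedule bits that differ between the two branches induced by the error. Turning this into a quantitative $1-\Omega(1)$ rate loss for a single error (as opposed to an $\eps$-fraction of errors) and for arbitrary non-adaptive (rather than alternating) schedules will be the key technical challenge.
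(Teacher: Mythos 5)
The first thing to note is that the paper never proves this statement: it is explicitly labeled a \emph{conjecture}, and the only support offered is an informal heuristic in the remarks on why adaptivity is necessary. There is therefore no official proof to compare yours against. The paper's heuristic runs as follows: the number of $\Pi'$-rounds needed to simulate a given block of $\Pi$ depends on where errors occur, so a single erasure/deletion (or the re-transmissions forced by a single flip) shifts the correspondence between $\Pi'$-rounds and $\Pi$-rounds by an amount that cannot be known in advance; because a pseudo-random \emph{fixed} speaker sequence $\sigma$ disagrees with any shift of itself on a constant fraction of positions, a predetermined schedule $\tau$ for $\Pi'$ cannot serve both the shifted and unshifted alignments, forcing an $\Omega(1)$ fraction of wasted rounds.

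Your sketch captures a related intuition but has two genuine gaps. First, you make the speaker $\sigma_i$ a function of the \emph{transcript}, which turns $\Pi$ itself into an adaptive protocol. The conjecture concerns protocols with a fixed, predetermined (merely non-regular) communication order; for such a $\Pi$ a single bit flip does not by itself desynchronize the two parties' schedules — they remain positionally aligned and only disagree on content — so the divergence must come from the positional shift induced by re-doing corrupted rounds, a mechanism your argument does not engage with. Second, and more fundamentally, the crux of any actual proof — ruling out simulations that do not transmit $\Pi$-bits ``raw'' but instead spread content and schedule information across the $\Pi'$-transcript — is precisely the step you defer as ``the key technical challenge.'' Your claim that $\tau_j = \sigma_{\phi(j)}$ on all but $o(n)$ indices presupposes the existence of a well-defined round-to-round map $\phi$, which is itself most of the theorem; the ``entropy/counting argument'' asserted for it is not supplied. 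As it stands the proposal is a reasonable plan, roughly parallel in spirit to the paper's own heuristic, but it is not a proof — and the paper does not contain one either.
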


In this work we show that there is a natural way to circumvent this impossibility barrier without having to restrict the input protocols $\Pi$ that can be simulated. In particular, our protocols very naturally simulate \emph{any} general input protocol $\Pi$ if the simulation $\Pi'$ is allowed to be have an adaptive communication order, as introduced in \cite{GHS14}.

We furthermore conjecture that the $1 - O(\sqrt{\eps})$ bound we present in this work is the natural and tight bound on the maximum rate that can be achieved in a wide variety of interactive communication settings:

\begin{conjecture}\label{claim:capacity}
The maximal rate achievable by an interactive coding scheme for any binary error channel with random or oblivious errors is $1 - \Theta(\sqrt{\eps})$ for a noise rate $\eps \rightarrow 0$. This also holds for for fully adversarial binary error channels if the adversary is computationally bounded or if parties have access to shared randomness that is unknown to the channel. It also remains true for all these settings regardless whether one restricts the input protocols to be alternating or not. 
\end{conjecture}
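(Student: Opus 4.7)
The conjecture has two directions. The achievability part $R \geq 1 - O(\sqrt{\eps})$ is essentially handed to us by the paper's main theorems: the coding schemes described in the introduction already work against oblivious and computationally bounded (or shared-randomness) adversarial channels, and therefore a fortiori against random i.i.d.\ errors. So any honest plan for the conjecture is really a plan for the matching lower bound $R \leq 1 - \Omega(\sqrt{\eps})$ in each of the three settings (random, oblivious, adaptive-with-shared-randomness), and an argument that this bound is insensitive to whether $\Pi$ is forced alternating.

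My plan for the lower bound would be to start from the Kol--Raz hard instance and try to understand \emph{why} their proof loses the $\sqrt{\log 1/\eps}$ factor under adaptivity. The KR bound ultimately charges non-adaptive simulations for ``wasting'' symbols on rounds where the adaptive speaker is undetermined; removing this charge should leave a clean $\sqrt{\eps}$ term coming from the Hellinger-type cost of distinguishing transcripts that differ in an $\eps$-fraction of locations. Concretely, I would consider the canonical hard problem where Alice and Bob hold independent random inputs and must execute an alternating pointer-jumping or random-binary-tree protocol of length $n$. I would then define the \emph{information deficit} of a simulation $\Pi'$ as the mutual information between the transcript and a random subset of $\approx \eps n$ coordinates that the channel may flip. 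A careful accounting should show that if the rate is $1 - o(\sqrt{\eps})$, then with constant probability the two parties end the simulation with transcripts that differ in $\omega(1)$ coordinates of the original protocol, driving the simulation error away from zero. For the oblivious and computationally bounded cases, the channel can be taken as a fixed random error pattern (oblivious) or as a pseudorandom pattern indistinguishable from random (computationally bounded), so the same information-theoretic core should apply; the shared-randomness case reduces to the oblivious one by standard arguments.

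The main obstacle I foresee is handling adaptivity itself. Adaptive simulations can dynamically shorten the ``who speaks next'' overhead to essentially zero amortized cost (as the paper already exploits in the upper bound), so any lower bound must not rely on each party committing in advance to a speaking schedule. The cleanest way I can see around this is to prove that \emph{any} adaptive protocol can be refactored into a sequence of short atomic exchanges after which both parties share a confirmed prefix, and then to lower bound the confirmation cost per block by $\Omega(\sqrt{\eps})$ via a direct sum / Hellinger argument over the blocks. A secondary obstacle is ruling out that shared randomness or pseudorandomness lets the parties drop below $\sqrt{\eps}$ by using very rare but highly informative hash-mismatch signals: here one would need a compactness argument showing that the contribution of such rare events is dominated by the typical error pattern.

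Finally, to handle the ``with or without alternation'' clause, I would prove the lower bound for the alternating hard instance described above and separately exhibit a length-preserving reduction from non-alternating inputs, so that both forms of the conjecture reduce to the same core estimate. I expect that closing the constant in $\Theta(\sqrt{\eps})$ exactly will be out of reach for the same reasons it is in the one-way setting (analogous to the Gilbert--Varshamov gap), and so the realistic target is a matching asymptotic, not an exact constant.
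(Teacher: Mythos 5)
This statement is a \emph{conjecture} in the paper, not a theorem: the paper proves only the achievability half (via its coding schemes for oblivious, computationally bounded, and shared-randomness channels, which indeed subsume the i.i.d.\ case) and supports the converse only with the informal balancing heuristic of Section~3, namely that one check-bit every $r$ rounds costs $1/r$ while undetected errors cost $r\eps$, balanced at $r=1/\sqrt{\eps}$. You correctly identify this structure --- that the real content of any proof would be a rigorous $1-\Omega(\sqrt{\eps})$ lower bound holding against \emph{adaptive} simulations --- and your diagnosis of why the Kol--Raz argument does not transfer (it charges for a predetermined speaking order, which is exactly what adaptivity removes) matches the paper's own discussion. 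Your proposed route via a Hellinger/direct-sum accounting over confirmed blocks is more ambitious than anything the paper attempts, but it is a plan rather than a proof: the step of refactoring an arbitrary adaptive protocol into atomic exchanges with a per-block $\Omega(\sqrt{\eps})$ confirmation cost is precisely the open part, and the paper even notes (in a footnote) that the claim ``detecting a failure requires $\Theta(1)$ bits of entropy'' is nontrivial and fails for erasure or feedback channels, where the rate improves to $1-\Theta(H(\eps))$. So any converse must use the specific structure of substitution errors, and neither you nor the paper closes this. In short: no error on your part, but be aware that you are proposing a research program for a statement the paper itself leaves unproven, and your write-up should present it as such rather than as a completed argument.
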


We suspect this claim also extends to larger alphabets. We feel that the broadness and robustness of this bound might justify regarding $1 - \Theta(\sqrt{\eps})$ as \emph{the} interactive channel capacity of a random error channel, even though this paper clearly demonstrates how careful and precise one needs to specify how protocols can use a channel before being able to talk about rates and capacities. 

For the fully adversarial with no shared randomness and a binary channel alphabet we conjecture our rate of $1 - O(\sqrt{\eps \log \log \frac{1}{\eps}})$ to be tight as well. This bound does not hold for larger alphabets (see \finalDel{\Cref{sec:importantremarks}}\finalOnly{\cite{H14arxiv}}):

\begin{conjecture}\label{claim:capacityadv}
The interactive channel capacity for the fully adversarial binary error channels in the absence of shared randomness is $1 - \Theta\left(\sqrt{\eps \log \log \frac{1}{\eps}}\right)$ for a noise rate $\eps \rightarrow 0$. 
\end{conjecture}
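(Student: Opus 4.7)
The conjecture has an upper and a lower bound. The upper bound $R \leq 1-O(\sqrt{\eps\log\log(1/\eps)})$ is already established by the capacity-approaching scheme developed in this paper, which applies to the fully adversarial binary channel even without shared randomness; my plan therefore concentrates on the matching lower bound $R \leq 1-\Omega(\sqrt{\eps\log\log(1/\eps)})$.

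The strategy is to adapt the Kol--Raz information-theoretic framework \cite{KR} so that it applies to \emph{adaptive} simulations against an \emph{adaptive} adversary without shared randomness, and to track precisely where the $\log\log$ overhead arises. I would fix a hard input protocol $\Pi$ of length $n$ that at every step forces the speaking party to distinguish among $\Theta(\log(1/\eps))$ possible continuations (for example, a pointer-jumping protocol on a tree of branching factor $\log(1/\eps)$; the $\log(1/\eps)$ branching factor is exactly what separates this setting from the shared-randomness regime of \Cref{claim:capacity}, where the KR-style hard instance needs only constant branching). Given any candidate simulation $\Pi'$ of length $N$, I would partition $\Pi'$ into rounds of length $\Theta(1/\sqrt{\eps})$ and, following the KR analysis, show that each round can make only $O(1)$ units of progress on $\Pi$ unless the parties exchange additional synchronization bits between rounds.

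The core new ingredient is an adaptive adversary that makes these synchronization bits expensive in the no-shared-randomness model. Because the adversary sees every transmitted symbol and only the parties' private coins are hidden, a checkpoint of fewer than $\log\log(1/\eps)$ bits per round leaves enough slack for the adversary, using at most $O(\eps)$ flips inside the round, to steer the parties to a second history that is consistent with the same checkpoint value with constant probability. A pigeonhole/encoding argument across the $\sqrt{\eps}\cdot N$ rounds then shows that the checkpoints must collectively carry $\Omega(\sqrt{\eps}\cdot N\cdot \log\log(1/\eps))$ bits that do no work for $\Pi$ itself, forcing the overall rate down to at most $1-\Omega(\sqrt{\eps\log\log(1/\eps)})$.

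The main obstacle is formalizing the claim that checkpoints of size $o(\log\log(1/\eps))$ cannot be made to work, even when the parties amortize over many rounds or use their private coins to randomize checkpoint choices. This amounts to a tight \emph{privacy--robustness tradeoff}: any hash or checkpoint exchange of length $\ell$, against an adversary controlling an $\eps$ fraction of the surrounding block, must admit an undetected-collision probability of at least $2^{-\ell}\cdot \poly(1/\eps)$, together with a direct-sum style argument that accumulates the per-round cost cleanly. A secondary difficulty is that we must rule out \emph{adaptive} simulations (the very feature our upper bound exploits); extending the KR transcript-entropy measure to adaptive communication orders without destroying the clean round decomposition is, I expect, the most delicate piece of the plan.
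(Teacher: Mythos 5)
This statement is a \emph{conjecture} in the paper, not a theorem: the paper proves only the achievability half (rate $1-O(\sqrt{\eps\log\log\frac{1}{\eps}})$ is attained by \refalgComputeAdv, \Cref{lem:mainadv}); the matching impossibility bound $R\le 1-\Omega(\sqrt{\eps\log\log\frac{1}{\eps}})$ is left open. You correctly identify this split, but your proposal for the converse is a research plan with the decisive steps explicitly deferred, so it does not close the gap. The two pieces you yourself flag as obstacles --- the claim that any checkpoint of $o(\log\log\frac{1}{\eps})$ bits admits an adversarially exploitable collision, and the extension of a KR-style round decomposition to adaptive simulations --- are precisely the content of the conjecture; without them nothing is proved. (Also, a small terminological slip: the scheme establishes $R\ge 1-O(\cdot)$, which you call an ``upper bound.'')

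On the substance of the plan: your core intuition for where the $\log\log\frac{1}{\eps}$ comes from does match the paper's own (unproven) heuristic, namely that detecting adversarial corruptions in a transcript of length $\eps^{-\Theta(1)}$ requires $\Omega(\log\log\frac{1}{\eps})$ check bits per verification round --- the paper frames this as a minimum seed length for distinguishing a string from all strings at Hamming distance two, by the same pigeonhole argument as \Cref{lem:seedlengthLB}, and then feeds $r_c=\Omega(\log\log\frac{1}{\eps})$ into the $\max\{r_c/r,\eps r\}\ge\sqrt{\eps r_c}$ tradeoff of \Cref{sec:impossibility}. However, your choice of hard instance looks misdirected. You propose a pointer-jumping protocol with branching factor $\log\frac{1}{\eps}$, importing the KR mechanism in which hardness comes from the input protocol's structure and the non-adaptivity of the simulation. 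The paper argues at length that exactly this mechanism evaporates for alternating inputs and adaptive simulations, and the conjectured $\log\log$ loss is supposed to come from the \emph{channel} being adversarial (the cost of error detection without a hidden seed), not from the input protocol's branching --- indeed the conjecture should hold already for alternating binary input protocols, and fails for oblivious or shared-hidden-randomness channels where the same hard instance would be available. Building the hardness into the input protocol risks either reproving a KR-type bound that does not apply here, or proving a statement only for a restricted input class. A converse along the paper's intended lines would instead have to show a privacy-free analogue of the seed-length lower bound that holds against \emph{any} verification strategy the parties might embed in the simulation, which remains open.
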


Lastly, we remark that subsequent to this work it was shown in \cite{GeH14} that a higher rate of $1 - \Theta(H(\eps))$ is possible for coding schemes that robustly simulate any (alternating) protocol over random or adversarial channels with feedback or over random or adversarial erasure channels.

\paragraph{Simple, Natural, Communication and Computationally Efficient Coding Schemes}
In addition to being capacity approaching for worst case errors with an optimal asymptotic dependence on the noise level $\eps$ the new coding schemes also have the advantage of being much simpler and more natural than prior coding schemes tolerating adversarial errors. In particular, they are essentially the first such schemes that do not rely on tree-codes. In fact, they do not perform any coding at all. Instead, they operate along the following extremely natural template, which so far seemed only applicable to the setting with random noise~\cite{Schulman92,KR}:

\begin{template}[Making a Conversation robust to Noise] \mbox{}\\
Both parties have their original conversation as if there were no noise except that
\begin{enumerate}
	\item sporadically a concise summary (an $\Theta(1)$ or $\Theta(\log \log \frac{1}{\eps})$ bit random hash value) of the conversation up to this point is exchanged.
	\item If the summaries match the conversation continues. 
	\item If the summaries do not match, because the noise caused a misunderstanding, then the parties backtrack.
\end{enumerate}
\end{template}

Some details go into how to compute the summaries and how to coordinate the backtracking steps. Still, the protocol stays simple and this outline is so intuitive that it can be easily explained to non-experts. In that respect it can be seen as demystifying Schulman's result that interactive communication can be performed at a constant rate in the presence of a constant fraction of adversarial noise. Our proofs for why these constant or $\Theta(\log \log \frac{1}{\eps})$ size hash values are sufficient (and necessary) are simple and completely elementary with the standard hash functions from \cite{NaorNaor} being the only non-trivial black-box used. Furthermore, since our alternative proof is based solely on hashing it directly leads to a computationally efficient ``coding'' scheme. This scheme works without any assumptions on the structure of the original protocol and is so simple that real-world use-cases and implementations become a possibility. In fact, Microsoft has a utility patent pending. All in all, we feel that this paper gives the simplest, most natural, and most intuitive explanation for why robust interactive communication in an environment with adversarial noise is possible and how it can be achieved.

\subsection{Organization}	

The remainder of this paper is organized as follows. In \Cref{sec:models} we provide preliminaries and the channel and interactive communication models. In \Cref{sec:impossibility} we explain the fundamental difference between error correction for interactive communications and the classical one-way setting. In particular, we explain why a communication rate of $1 - \Omega(\sqrt{\eps})$ is the best one can hope for in an interactive setting. In \Cref{sec:simpleprot} we provide a simple coding scheme achieving this rate against adversarial errors if the channel operates on a  logarithmic bit-size alphabet. In \Cref{sec:problemsandsolutions} we explain the barriers in extending this algorithm to a constant size alphabet and give an overview of the techniques we use to overcome them. In \Cref{sec:protocol} we then provide our new coding schemes and we use \Cref{sec:proof} to prove their correctness. 

\section{Definitions and Preliminaries}\label{sec:models}


\finalDel{
In this section, we define the interactive coding setting including \emph{adaptive interactive protocols} as defined by \cite{GHS14}. We provide several important remarks regarding the use of this setting in this paper \finalOnly{in the full version \cite{H14arxiv}.} \finalDel{and in \cite{KR} in \Cref{sec:importantremarks}.}
}

%

\subsection{Interactive Protocols and Communication Order}

An \emph{interactive protocol} $\Pi$ defines some communication performed by two parties, Alice and Bob, over a channel with alphabet $\Sigma$. After both parties are given an input the protocol operates in $n$ \emph{rounds}. For each round of communication each party decides independently whether to listen or transmit a symbol from $\Sigma$ based on its state, its input, its randomness, and the history of communication, i.e., the symbols received by it so far. All our protocols will utilize \emph{private randomness} which is given to each party in form of its own infinite string of independent uniformly random bits. It is also interesting to consider settings with \emph{shared randomness} in which both parties in every round $i$ have access to the same infinite random bit-string $R_i$. 

We call the order in which Alice and Bob speak or listen the \emph{communication order} of a protocol. Prior works have often studied \emph{non-adaptive} protocols for which this communication order is predetermined. In this case, which player transmits or listens depends only on the round number and it is deterministically ensured that exactly one party transmits in each round. If the such a non-adaptive communication order repeats itself in regular intervals we call the protocol \emph{periodic} and call the smallest length of such an interval the \emph{period}. The simplest communication order has Alice and Bob take taking turns. We call such a protocol, with period two, \emph{alternating}.

\subsection{Adversarial and Random Communication Channels}

The communication between the two parties goes over a \emph{channel} which delivers a possibly corrupted version of the chosen symbol of a transmitting party to a listening party. In particular, if exactly one party listens and one transmits then the listening party receives the symbol chosen by the transmitting party unless the channel interferes and corrupts the symbol.

In the \emph{fully adversarial channel} model with \emph{error rate} $\eps$ the number of such interferences is at most $\eps N$ for an $N$ round protocol and the adversary chooses the received symbol arbitrarily. In particular, the adversary gets to know the length $N$ of the protocol and therefore also how many corruptions it is allowed to introduce. In each round it can then decide whether to interfere or not and what to corrupt a transmission to based on its state, its own randomness, and the communication history observed by it, that is, all symbols sent or received by the parties so far. The adversary does not get to know or base its decision on the private randomness of Alice or Bob (except for what it can learn about it through the communicated symbols). In the shared randomness setting we differentiate between the default setting, in which the adversary gets to know the shared randomness as well, that is, base its decisions in round $i$ also on any $R_j$ with $j \leq i$, and the \emph{hidden shared randomness} setting in which the the shared randomness is a secret between Alice and Bob which the adversary does not know. 

We also consider various relaxations of this all powerful fully adversarial channel model: We call an adversary \emph{computationally bounded} if the decisions of the adversary in each round are required to be computable in time polynomial in $N$. We call an adversary \emph{oblivious} if it makes all its decisions in advance, in particular, independently of the communication history. A particular simple such oblivious adversary is the \emph{random error channel} which introduces a corruption in each round independently with probability $\eps$. We will consider random channels mostly for binary channels for which a corruption is simply a bit-flip.

\subsection{Adaptive Interactive Protocols}

It is natural and in many cases important to also allow for \emph{adaptive} protocols which base their communication order decisions on the communication history.\finalDel{ In particular, it is natural to design adaptive protocols in which both parties decide separately which part of the original protocol to simulate based on (estimates of) where errors have happened so far. When these error estimates differ this can lead to rounds in which both parties transmit or listen simultaneously, in particular when the protocol to be simulated is non-periodic.} We follow \cite{GHS14} in formalizing the working of the channel in these situations. In particular, in the case of both parties transmitting no symbol is delivered to either party because neither party listens anyway. In the case of both parties listening the symbols received are undetermined with the requirement that the protocol works for any received symbols. In many cases it is easiest to think of the adversary being allowed to choose the received symbols, without it being counted as a corruption.

\subsection{Robust Simulations}

A protocol $\Pi'$ is said to \emph{robustly simulate} a deterministic protocol $\Pi$ over a channel $C$ if the following holds: Given any inputs to $\Pi$, both parties can (uniquely) decode the transcript of the execution of $\Pi$ over the noise free channel on these inputs from the transcript of an execution of $\Pi'$ over the channel $C$. For \emph{randomized protocols} we say protocol $\Pi'$ \emph{robustly simulates} a protocol $\Pi$ with \emph{failure probability} $p$ over a channel $C$ if, for any input and any adversary behind $C$, the probability that both parties correctly decode is at least $1 - p$. We note that the simulation $\Pi'$ typically uses a larger number of rounds, e.g., $\alpha n$ rounds for some $\alpha > 1$. 

%
%
%
%

\newcommand{\importantremarks}{

\fullOnly{\subsection{Important Remarks Regarding the Interactive Coding Settings}\label{sec:importantremarks}}

\paragraph{Structure of the Communication Order of the Original Protocol} 
It is possible to convert the original protocol, which is to be simulated, into an alternating protocol while only increasing the amount of communication by a factor of two. This was used in all prior works essentially without loss of generality because it preserves the overhead $\alpha$ and communication rate $R$ of the simulation up to a constant factor. However, in works concerned with achieving an overhead factor of $\alpha \rightarrow 1$, such as \cite{KR} and this work, any assumption on the original protocol having a very nice and regular structure is restricting the generality and applicability of results. In contrast to \cite{KR}, all results in this paper do not need to make any such assumptions.


\paragraph{Alphabet Sizes} 
Similarly, prior works which happily tolerated constant factor losses in the communication rate, assumed that the original protocol is binary while the simulation protocol uses a large finite alphabet. However, if one wants to determine communication rates or talk about the capacity of a channel it is natural, convenient, and essentially necessary to keep the alphabet of a simulation $\Pi'$ to be the same alphabet $\Sigma$ as in the original protocol $\Pi$. Throughout this paper, unless otherwise noted, we choose $\Sigma$ to be the binary alphabet $\Sigma = \{0,1\}$. This constitutes the hardest case and, in fact, all our algorithmic results work directly as described for any alphabet size. Furthermore, the rate for the adversarial channel can be easily improved/generalized to be $1 - O\left(\sqrt{\eps \max\left\{1,\frac{\log \log \frac{1}{\eps}}{\log |\Sigma|}\right\}}\right)$ which becomes $1 - \Theta(\sqrt{\eps})$ for alphabets of bit-size $\Theta(\log \log \frac{1}{\eps})$. We believe this bound to be the tight channel capacity bound for any alphabet size in the fully adversarial setting.

\paragraph{Adaptive Robust Simulations vs. Non-Adaptive Simulations with Predetermined Communication Order} 
In \cite{GHS14} extensive arguments are given for why the adaptive protocols we use here is the right ``adaptivity model'': Most importantly, it does not allow for any signaling or time-coding (e.g., transmitting bits by being silent/sending a message or encoding information in the length of silence between breaks) which could be incorruptible or allow to send more than $\log_2 |\Sigma|$ bits of information per symbol sent. Another nice property is that for non-adaptive protocols that perfectly coordinate a designated sender and receiver in each round our model matches the standard setting. In \cite{BR11} it was furthermore shown that any protocol that is not non-adaptive can lead to rounds in which the adaptive parties fail to coordinate a designated sender/receiver. In this case our model carefully precludes any information exchange. This matches the intuition that one should not be able to gain an advantage from both parties speaking or listening at the same time. This also provides the guarantee that the total amount of information exchanged between the two parties (in either direction) is at most $\log_2 |\Sigma|$. This is particularly important when considering communication rates and capacities as one can use $n \log_2 |\Sigma|$ as the baseline of how much information can at most be exchanged in $n$ rounds.

\paragraph{Why Adaptivity is Necessary in Simulating General Protocols} 
The importance of being adaptive when simulating non-regular protocols was already mentioned in \Cref{claim:adaptiveunstructured}. The reason for it is simple: During a simulation it is essentially unavoidable that, due to errors, the parties are temporarily in disagreement regarding which part of the original conversation is to be computed next. In particular, if the communication order of the original does not have a regular structure both parties might think they should speak next. A different way to see the same thing is to note that any simulation $\Pi'$ of a protocol $\Pi$ cannot be communication efficient if it does not have (nearly) the same communication order. However, if the communication order of $\Pi$ is non-regular and the communication order of $\Pi'$ needs to be chosen in advance then one needs to know a priori which part of $\Pi$ is to be computed at what time period of the execution of $\Pi'$. Since $\Pi$ needs to be simulated in order and the time to compute a block of $\Pi$ highly depends on the occurrences of errors during the simulation this is not possible. 

Taking these considerations into account it is clear that most non-regular protocols will be hard to simulate robustly and non-adaptively without losing a constant factor in the communication rate as stated in \Cref{claim:adaptiveunstructured}. The protocols for which this is most easily seen are protocols in which the communication order of the original protocol is pseudo random. For such a protocol two different parts of the protocol would not have communication orders that do not match in a constant fraction of the rounds. This means that if one tries to simulate a (full entropy) protocol with a such a pseudrandom communication order and the simulation gets off-synch by just one round, due to a deletion or erasure, the simulation will be progressing with a constant factor slower speed until extra rounds are introduced to catch up. If the position of the error is random and therefore not known in advance it is necessary to introduce ``extra steps'' every constant number of rounds on average which leads to the $1 - \Omega(1)$ bound stated in \Cref{claim:adaptiveunstructured}.
}
\fullOnly{\importantremarks}

\section{Channel Capacities for Interactive vs. One-way Communication}\label{sec:impossibility}

In this section we explain the important difference in correcting errors for interactive communications in contrast to the standard one-way setting of error correcting codes. We then quantify this difference and give the high-level argument for why $1 - \Omega(\sqrt{\eps})$ is the best possible communication rate one can expect for coding schemes that make interactive communications robust to even just random noise. The impossibility result of \cite{KR} can be seen as formalizing a very similar argument. 


\subsection{The Difficulties of Coding for Interactive Communications}

We first explain, on a very intuitive level, why making interactive communications resilient to noise is much harder than doing the same for one-way communications. In particular, we want to contrast the task of Alice transmitting some information to Bob with the task of Alice and Bob having a conversation, e.g., an interview of Bob by Alice, both in a noisy environment. The main difference between the two tasks is that in the one-way communication Alice knows everything she wants to transmit a priori. This allows her to mix this information together in an arbitrary way by using redundant transmissions that protects everything equally well. This contrasts with an interactive communication where Alice's transmissions depend highly on the answers given by Bob. In our interview example, Alice cannot really ask the second question before knowing the answer to her first question as what she wants to know from Bob might highly depend on Bob's first answer. Even worse, Alice misunderstanding the first answer might completely derail the interview into a direction not related to the original (noise free) conversation. In this case, everything talked about in the continuing conversation will be useless, even without any further noise or misunderstandings, until this first misunderstanding is detected. Lastly, in order to resolve a detected misunderstanding the conversation has to backtrack all the way to where the misunderstanding happened and continue from there.

\subsection{The $1 - \Omega(\sqrt{\eps})$ Fundamental Rate Limit}

Next, we aim to quantify this difference and argue for $1 - \Theta(\sqrt{\eps})$ being a fundamental rate limit of interactive communication, even for random errors. 
 We pick the simplest noise model and assume, for now, that Alice and Bob try to communicate over a binary symmetric channel, that is, a binary random error channel which flips each bit transmitted independently with the small error probability $\eps$. For the one-way communication task Alice can simply transmit everything she wants to send followed by a few check-sums over the complete message. It is a classical result that for a full error recovery it suffices to add to the transmission approximately $H(\eps)$ as many randomly picked  linear check-sums as transmitted symbols, which in the binary case are simply parities over a randomly chosen subset. This leads to a rate of $1 - H(\eps)$ which is also optimal. 

Now we consider Alice and Bob having an interactive conversation over the same channel. Because of the noise Alice and Bob will need to add some redundancy to their conversation at some point in order to at least detect whether a misunderstanding has happened. For this one (check-)bit is necessary\finalDel{\footnote{The fact that for a full-entropy protocol Alice and Bob cannot detect a failure without communicating $\Theta(1)$ symbols worth of entropy is non-trivial and crucial for this argument. We point out that this does not hold for erasure channels or channels with feedback as demonstrated in \cite{GeH14}. For the adversarial setting even more communication is necessary. In particular, in order to detect any two adversarial bit flips in an $\eps^{-\Theta(1)}$ long string $\log \log \frac{1}{\eps}$ (check-)bits are necessary. This quantity can be seen as the minimum seed length required to distinguish with constant probability between a string and all strings of hamming distance two. The proof for this bound follows the same idea as the proof for \Cref{lem:seedlengthLB}. This leads to the extra $\sqrt{\log \log \frac{1}{\eps}}$ factor in the capacity upper and lower bound for fully adversarial binary error channels.}}. Say they do this after $r$ steps. The likelihood for an error to have happened is $r \eps$ at this point and the length of the conversation that needs to be redone because of such a preceding misunderstanding is of expected length $\frac{r}{2}$. This leads to an expected rate loss of $\Theta(r \eps)$ because every $r$ steps an expected $\frac{r^2 \eps}{2}$ steps are wasted. With this reasoning one would like to make $r$ as small as possible. However, adding one unit of redundancy every $r$ steps leads to a rate loss of $\frac{1}{r}$ itself regardless of whether errors have occurred. Balancing $r$ to minimize these two different causes of rate loss leads to an optimal rate loss of $\Theta(\min_r \{r\eps + 1/r\}) = \Theta(\sqrt{\eps})$ assuming $r$ is set optimally to $r = \frac{1}{\sqrt{\eps}}$. This argument applies in essentially any interactive coding setting and explains why the fundamental channel capacity drops from $1 - H(\eps)$ to $1 - \Theta(\sqrt{\eps})$ for interactive communications.

\section{A Simple Coding Scheme for Large Alphabets}\label{sec:simpleprot}

In this section\finalOnly{, as a warmup,} we give a simple coding scheme that achieves a rate of $1 - \Theta(\sqrt{\eps})$ against any fully adversarial error channel, albeit while assuming that the original protocol and the channel operate on words, that is, on the same $\Theta(\log n)$ bit-sized alphabet. 

\finalDel{
There are at least two compelling reasons to start with describing this algorithm as a warmup: (1) The outline of the coding scheme and the structure of its proofs are closely related to those of our main results. While nicely demonstrating the type of analysis and its main components the proofs here are much simpler, shorter, and easier to follow. Also, given this simple coding scheme it is much easier to understand the problems and barriers that need to be addressed in the small alphabet setting. (2) We view the assumption that parties communicate via small logarithmic size messages, instead of on a bit by bit basis, as very reasonable\finalDel{\footnote{Super-constant alphabet sizes are (unfortunately?) not a very common assumption for the interactive coding setting. We note that, while they do make some details easier, such as, the construction of good tree codes~\cite{Schulman}, designing good coding schemes remains a highly non-trivial task even when one just wants to achieve some small constant communication rate against random errors.}}. In fact, such an assumption is standard in many areas, like the theory of distributed computing~\cite{peleg2000distributed}. 
 We believe this simple algorithm to be a great candidate for a practically relevant and easily implementable coding scheme. 
}


\finalOnly{

\subsection{Overview}

To design our coding scheme we follow the idea is outlined in \Cref{sec:ourresults} and turn the $1 - \Omega(\sqrt{\eps})$ bound from \Cref{sec:impossibility} it into a converse, that is, prove it to be tight by designing a protocol achieving this rate. Our protocol can also be seen as a drastically simplified version of \cite{BK12}. 

In particular, the parties start with performing the original interactive protocol without any coding for $r = \Theta(\frac{1}{\sqrt{\eps}})$ steps as if no noise is present at all. Both parties then try to verify whether an error has occurred. They do this by randomly sampling a hash function and sending both the description of the hash function as well as the hash value of their complete communication transcript up to this point. They use $r_c = O(1)$ symbols for this check and initiate a backtracking step if a mismatch is detected. 

The (standard) hash functions used for this and throughout this paper are derived from the $\eps$-biased probability spaces constructed in \cite{NaorNaor}. These hash functions give the following guarantees:

\begin{lemma}[\cite{NaorNaor}] \label{lem:hashes}
For any $n$, any alphabet $\Sigma$, and any probability $0<p<1$, there exist $s = \Theta(\log (n \log |\Sigma|) + \log \frac{1}{p})$, $o = \Theta(\log \frac{1}{p})$, and a simple function $h$, which given an $s$-bit uniformly random seed $S$ maps any string over $\Sigma$ of length at most $n$ into an $o$-bit output, such that the collision probability of any two $n$-symbol strings over $\Sigma$ is at most $p$. In short:
\finalOnly{$\forall n,\Sigma,0<p<1: \ \exists s = \Theta(\log (n \log |\Sigma|) + \log \frac{1}{p}), o = \Theta(\log \frac{1}{p}), h: \{0,1\}^s \times \Sigma^n \mapsto \{0,1\}^o\ s.t.$ $\forall \stringx,\stringy \in \Sigma^{\leq n}, \stringx \neq \stringy, \strings \in \{0,1\}^s \ iid\ Bernoulli(1/2): \ \ P[h_\strings(\stringx) = h_\strings(\stringy)] \leq p$}
\finalDel{%
$$\forall n,\Sigma,0<p<1: \ \exists s = \Theta(\log (n \log |\Sigma|) + \log \frac{1}{p}), o = \Theta(\log \frac{1}{p}), h: \{0,1\}^s \times \Sigma^n \mapsto \{0,1\}^o\ s.t.$$
$$\forall \stringx,\stringy \in \Sigma^{\leq n}, \stringx \neq \stringy, \strings \in \{0,1\}^s \ iid\ Bernoulli(1/2): \ \ P[h_\strings(\stringx) = h_\strings(\stringy)] \leq p$$
}
\end{lemma}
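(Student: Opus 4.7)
My approach is to derive this as a standard consequence of the Naor--Naor $\epsilon$-biased sample space construction. First I would reduce to the binary case by encoding each symbol of $\Sigma$ as a block of $\lceil \log_2 |\Sigma| \rceil$ bits, so that any two distinct strings $\stringx, \stringy \in \Sigma^{\le n}$ become distinct binary strings $\tilde{\stringx}, \tilde{\stringy} \in \{0,1\}^N$ of length $N = n \lceil \log_2 |\Sigma| \rceil$, and their XOR $\tilde{\stringz} = \tilde{\stringx} \oplus \tilde{\stringy}$ is a fixed nonzero vector in $\{0,1\}^N$.

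Next I would invoke Naor--Naor to obtain an $\epsilon$-biased distribution $\mathcal{D}$ over $\{0,1\}^{oN}$, viewed equivalently as a distribution over $o \times N$ matrices $A$ over $\mathbb{F}_2$, with seed length $s = O(\log(oN) + \log(1/\epsilon))$. I would set $o = \lceil \log_2(2/p) \rceil = \Theta(\log 1/p)$ and $\epsilon = p/2$, which yields $s = O(\log(n \log |\Sigma|) + \log(1/p))$ matching the target seed length. The hash is then defined by $h_\strings(\stringx) := A_\strings \cdot \tilde{\stringx} \pmod 2$, where $A_\strings$ is the matrix produced by the Naor--Naor generator on the $s$-bit seed $\strings$.

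Then I would bound the collision probability. For any $\stringx \neq \stringy$, the event $h_\strings(\stringx) = h_\strings(\stringy)$ is equivalent to $A_\strings \tilde{\stringz} = 0$, i.e.\ all $o$ parities of the fixed nonzero pattern $\tilde{\stringz}$ vanish simultaneously. Using the Fourier characterization of $\epsilon$-bias, one expands
$$\Pr_\strings[A_\strings \tilde{\stringz} = 0] = 2^{-o} \sum_{c \in \{0,1\}^o} \mathbb{E}_\strings\bigl[(-1)^{\langle c,\, A_\strings \tilde{\stringz}\rangle}\bigr],$$
in which the $c = 0$ term contributes exactly $2^{-o}$ and each remaining character $(-1)^{\langle c, A_\strings \tilde{\stringz}\rangle}$ is itself a nontrivial parity of the $oN$ bits output by the generator (nontrivial because both $c \neq 0$ and $\tilde{\stringz} \neq 0$). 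The $\epsilon$-bias hypothesis bounds each such term by $\epsilon$, giving $\Pr[h_\strings(\stringx) = h_\strings(\stringy)] \le 2^{-o} + \epsilon \le p/2 + p/2 = p$.

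The main obstacle I anticipate is the parity-to-collision reduction: one must check carefully that for every nonzero $c \in \{0,1\}^o$ the indicator $\langle c, A_\strings \tilde{\stringz}\rangle$ is a genuinely nontrivial parity over the Naor--Naor output bits, so that the $\epsilon$-bias guarantee actually applies. This follows from the bilinearity of the inner product together with $c \neq 0$ and $\tilde{\stringz} \neq 0$, but it is the only step where one must be vigilant; once verified, the remainder is routine Fourier expansion over $\mathbb{F}_2^o$ and the parameter choices above.
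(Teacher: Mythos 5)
The paper does not actually prove this lemma -- it imports it as a black box from [NaorNaor] -- so there is no paper proof to compare against line by line. Your derivation is, however, exactly the standard route and coincides with the machinery the paper itself builds later for its ``inner product hash function'' (\Cref{def:innerproducthash} and \Cref{lem:independenceoflinearhash}): a linear hash $A_\strings\tilde{\stringx}$ whose seed comes from a small-bias generator, with the collision bound $2^{-o}+\delta$ obtained by observing that each of the $2^o-1$ nontrivial characters of the output is a nontrivial parity of the generator's output bits. Your parameter choices and the Fourier expansion are correct, and your flagged ``obstacle'' (nontriviality of $\langle c, A_\strings\tilde{\stringz}\rangle$ for $c\neq 0$, $\tilde{\stringz}\neq 0$) is indeed the only point requiring care and is handled correctly by bilinearity.

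One small gap to close: the lemma quantifies over $\stringx,\stringy\in\Sigma^{\leq n}$, i.e.\ strings of possibly \emph{different} lengths, and your reduction ``distinct strings in $\Sigma^{\leq n}$ become distinct binary strings in $\{0,1\}^N$'' silently assumes an injective encoding. Plain zero-padding is not injective on variable-length inputs (e.g.\ the binary strings $1$ and $10$ pad to the same vector), so the XOR $\tilde{\stringz}$ could vanish for distinct inputs. The fix is exactly what the paper does in \Cref{def:innerproducthash}: concatenate the input with an encoding of its length before hashing (or use any prefix-free encoding), at the cost of only a constant factor in $N$ and hence no change to the asymptotics of $s$. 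With that amendment your argument is complete.
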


The setting of interest to our first algorithm is that any two $\Theta(n)$ bit strings can, with high probability, be distinguished by a random hash function which creates $o = \Theta(\log n)$ size fingerprints and requires a seed of only $\Theta(\log n)$ bits. This allows us to communicate both the selected hash function (seed) and the corresponding hash value of the $O(n)$ long transcript using only $r_c = \Theta(1)$ symbols of $\Theta(\log n)$ bits each.

A last minor technical detail is a simple preprocessing step in which the original protocol $\Pi$ is modified by adding $\Theta(\sqrt{\eps} n)$ steps at the end, in which both parties send each other a fixed symbol. These steps should be interpreted as confirmation steps which reaffirm the correctness of the previous conversation. This ensures that the simulation $\Pi'$ never runs out of steps of $\Pi$ to simulate.

}

\finalDel{	
\subsection{Overview}

\subsubsection{High Level Idea}

To design our coding scheme we follow the argument of the $1 - \Omega(\sqrt{\eps})$ impossibility result from \Cref{sec:impossibility} and convert it into a converse, that is, prove it to be tight by designing a protocol achieving this rate. 

The general idea is as outlined in \Cref{sec:ourresults} and can also be seen as a drastically simplified version of \cite{BK12}. In particular, the parties start with performing the original interactive protocol without any coding for $r$ steps as if no noise is present at all. Both parties then try to verify whether an error has occurred. They do this by randomly sampling a hash function and sending both the description of the hash function as well as the hash value of the their complete communication transcript up to this point. They use $r_c$ symbols for this check. They then evaluate the hash function they received from the other party on their own transcript and check whether the hash values match. If they do, a party simply continues the conversation for the next $r$ steps. If the hash values do not match for a party then this party backtracks. 

\subsubsection{Setting the parameters $r$ and $r_c$}\label{sec:settingrandrc}

We now look at how one needs to set the parameters $r$ and $r_c$ in order to arrive at a robust protocol with optimal communication rate. In essence, we want to choose $r$ and $r_c$ such that the communication overhead which is done in addition to the $n$ rounds of original communication is a small fraction in dependence on the error rate $\eps$. Similar to the impossibility result the communication overhead in our general approach comes from two places: 

Firstly, verification overhead is the fraction of communication which is dedicated to communicating verification information instead of performing the original protocol. This overhead is $\frac{r_c}{r}$ since for every $r$ rounds of the original protocol $r_c$ rounds of verification are used. This fraction of the simulation is lost even if no error occurs. 

The second type of overhead is caused by the errors. To determine this error overhead we note that the adversary can make an iteration useless by investing only one error, e.g., at the beginning of an iteration. The fraction of iterations the adversary can corrupt in this way, without introducing more than an $\eps$ fraction of corruptions, is $\eps (r + r_c)$. Just from this one can already observe that both $r$ and $r_c$ should better be independent of $n$ as $r+r_c$ being anything larger than $1/\eps$ results in the adversary being able to completely stall the algorithm by corrupting every single iteration. If, on the other hand, $r+r_c< 1/\eps$ then the error overhead is  $\frac{1}{1 - \eps (r + r_c)} - 1$ which is at most $\eps (r + r_c)$. 

Putting both the verification and error overhead together leads to the overall communication overhead being at least $\max\{\frac{r_c}{r},\eps (r + r_c)\}$, a tradeoff already familiar from the impossibility result from \Cref{sec:impossibility}. In particular, since $\max\{\frac{r_c}{r},\eps (r + r_c)\} \geq \sqrt{\eps r_c}$ this shows that the only way one can hope to achieve the optimal round complexity of $n (1 + \sqrt{\eps})$ and therefore the rate of $1 - \Theta(\sqrt{\eps})$ is by setting $r = \Theta(\frac{1}{\sqrt{\eps}})$ and $r_c = \Theta(1)$. This is the approach taken by the algorithm presented in this section. 

\subsubsection{Verification using Hashing for String Comparison}

The last  tool needed before we can give a description of the algorithm is an explanation for how the verification is performed. Ideally, the verification would ensure that the transcripts of both parties are in agreement. Of course, this cannot be done in just $r_c = \Theta(1)$ rounds. Instead, we use randomized hashes to at least catch any disagreement with good probability. 

A hash function $h$ can be seen as generating a short fingerprint $h(\Trans)$ for any long string $\Trans$. It is clear that a hash function cannot be injective, that is, for any hash function it is possible to find two strings which have the same fingerprint under $h$. However, good families of hash functions have the nice property that fingerprints of two strings are different with good probability if a random hash function is picked from the family. Typically, a random bit string $\strings$, which is called the \emph{seed}, is used to select this hash function $h_\strings$. Important parameters for families of hash functions are the \emph{seed length}, that is, the amount of randomness needed to select a hash function $h_\strings$, and the probability that a \emph{hash collision} $h_\strings(\stringx) = h_\strings(\stringy)$ happens for two unequal strings $\stringx \neq \stringy$\finalDel{, and the ease with which the fingerprint $h_\strings(\Trans)$ can be computed}. 

Throughout this paper we make use of the following standard hash functions which are derived from the $\eps$-biased probability spaces constructed in \cite{NaorNaor}. These hash functions give the following guarantees:

\begin{lemma}[from \cite{NaorNaor}] \label{lem:hashes}
For any $n$, any alphabet $\Sigma$, and any probability $0<p<1$, there exist $s = \Theta(\log (n \log |\Sigma|) + \log \frac{1}{p})$, $o = \Theta(\log \frac{1}{p})$, and a simple function $h$, which given an $s$-bit uniformly random seed $S$ maps any string over $\Sigma$ of length at most $n$ into an $o$-bit output, such that the collision probability of any two $n$-symbol strings over $\Sigma$ is at most $p$. In short:
\finalOnly{$\forall n,\Sigma,0<p<1: \ \exists s = \Theta(\log (n \log |\Sigma|) + \log \frac{1}{p}), o = \Theta(\log \frac{1}{p}), h: \{0,1\}^s \times \Sigma^n \mapsto \{0,1\}^o\ s.t.$ $\forall \stringx,\stringy \in \Sigma^{\leq n}, \stringx \neq \stringy, \strings \in \{0,1\}^s \ iid\ Bernoulli(1/2): \ \ P[h_\strings(\stringx) = h_\strings(\stringy)] \leq p$}
\finalDel{%
$$\forall n,\Sigma,0<p<1: \ \exists s = \Theta(\log (n \log |\Sigma|) + \log \frac{1}{p}), o = \Theta(\log \frac{1}{p}), h: \{0,1\}^s \times \Sigma^n \mapsto \{0,1\}^o\ s.t.$$
$$\forall \stringx,\stringy \in \Sigma^{\leq n}, \stringx \neq \stringy, \strings \in \{0,1\}^s \ iid\ Bernoulli(1/2): \ \ P[h_\strings(\stringx) = h_\strings(\stringy)] \leq p$$
}
\end{lemma}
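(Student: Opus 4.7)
My plan is to reduce the hashing problem to the construction of small-bias probability spaces, which is the core object built in \cite{NaorNaor}. First I would encode every string $x \in \Sigma^{\le n}$ as a binary string $\tilde{x}$ of length $N := n \lceil \log_2 |\Sigma|\rceil + O(\log n)$, where the extra $O(\log n)$ bits record the original length so that distinct strings $x \ne y$ always yield distinct $\tilde{x} \ne \tilde{y}$ regardless of padding. It then suffices to build a family $h_S : \{0,1\}^N \to \{0,1\}^o$ with seed length $s$ and collision probability at most $p$ for distinct binary inputs.

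The hash function I would use is
\[
h_S(\tilde{x}) \;=\; \bigl(\langle v_1, \tilde{x}\rangle, \ldots, \langle v_o, \tilde{x}\rangle\bigr) \in \{0,1\}^o,
\]
where inner products are taken over $\mathbb{F}_2$ and $(v_1,\ldots,v_o) \in \{0,1\}^{oN}$ is drawn from a $\delta$-biased distribution on $\{0,1\}^{oN}$ using seed $S$. The Naor--Naor construction provides such a distribution with seed length $s = O(\log(oN) + \log 1/\delta)$, which can be sampled in time polynomial in $N$ and $\log 1/\delta$.

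For the analysis, fix $\tilde{x} \ne \tilde{y}$ and set $z := \tilde{x} \oplus \tilde{y} \ne 0$. A collision occurs iff $\langle v_i, z\rangle = 0$ for every $i \in [o]$. For any fixed nonzero $w \in \{0,1\}^o$, the map $(v_1,\ldots,v_o) \mapsto \sum_i w_i \langle v_i, z\rangle$ is a nonzero $\mathbb{F}_2$-linear form on $\{0,1\}^{oN}$, so the $\delta$-biased guarantee gives $|\mathbb{E}[(-1)^{\sum_i w_i \langle v_i,z\rangle}]| \le \delta$. Expanding the indicator $\mathbf{1}[(\langle v_i,z\rangle)_i = 0] = 2^{-o}\sum_{w \in \{0,1\}^o} (-1)^{\sum_i w_i \langle v_i,z\rangle}$, taking expectations, and isolating the $w=0$ term yields collision probability at most $2^{-o} + \delta$. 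Choosing $o = \lceil \log_2(2/p)\rceil$ and $\delta = p/2$ gives collision probability $\le p$, and the resulting $s = O(\log(oN) + \log 1/\delta) = \Theta(\log(n\log|\Sigma|) + \log 1/p)$ and $o = \Theta(\log 1/p)$ match the claimed bounds.

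The main obstacle is the Fourier step that converts a bound on individual nonzero linear forms into a bound on the probability of landing at a specific point of $\{0,1\}^o$; this is the classical reason small-bias distributions behave like $\delta$-close-to-uniform on any fixed low-dimensional linear image. Everything else is bookkeeping: verifying that the length-prefix encoding preserves distinctness, and checking that the seed length for a $\delta$-biased space on $oN$ bits absorbs $\log o = O(\log\log 1/p)$ into the $\log N = O(\log(n\log|\Sigma|))$ term. Explicit efficient constructions of $\delta$-biased spaces with the stated seed length (e.g., via concatenated Reed--Solomon and Hadamard codes, or via powering in $\mathbb{F}_{2^k}$) are taken as a black box from \cite{NaorNaor}.
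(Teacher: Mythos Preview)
The paper does not actually prove this lemma; it is stated as a black-box result from \cite{NaorNaor} and used without proof. Your derivation is correct and is in fact the standard one---indeed, the paper itself later introduces exactly the inner-product hash with a $\delta$-biased seed (\Cref{def:innerproducthash} and \Cref{lem:independenceoflinearhash}) that you describe, so your argument is fully aligned with how the authors think about this construction. One tiny bookkeeping remark: the stray $\log o = O(\log\log 1/p)$ term in the seed length is absorbed by the $\log 1/p$ summand rather than the $\log N$ summand, but either way it disappears inside the stated $\Theta(\cdot)$.
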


The setting of interest to our first algorithm is that any two $\Theta(n)$ bit strings can, with high probability, be distinguished by a random hash function which creates $o = \Theta(\log n)$ size fingerprints and requires a seed of only $\Theta(\log n)$ bits. This allows us to communicate both the selected hash function (seed) and the corresponding hash value of the $O(n)$ long transcript using only $r_c = \Theta(1)$ symbols of $\Theta(\log n)$ bits each.

\subsubsection{Adding Confirmation Steps}

A last minor technical detail is a simple preprocessing step in which the original protocol $\Pi$ is modified by adding $\Theta(\sqrt{\eps} n)$ steps at the end, in which both parties send each other a fixed symbol. These steps should be interpreted as confirmation steps which reaffirm the correctness of the previous conversation. This ensures that the simulation $\Pi'$ never runs out of steps of $\Pi$ to simulate. \fullOnly{It furthermore makes sure that an adversary cannot simply let both parties first compute $\Pi$ correctly and then wait for the end of $\Pi'$ to add a few errors which make both parties backtrack and end in an intermediate step of $\Pi$. With the extra steps the protocol $\Pi'$ is guaranteed to end up in a confirmation step of a valid conversation outcome of $\Pi$. This outcome can then safely be selected as an output of the simulation $\Pi'$.}

}

\subsection{Algorithm}

Putting these ideas together leads to our first, simple coding scheme which achieves the optimal $1 - \Theta(\sqrt{\eps})$ error rate for any logarithmic bit-sized alphabet:

\begin{algorithm}[htb!]
\caption{Simple Coding Scheme for $\Theta(\log n)$-bit alphabet $\Sigma$}
\begin{algorithmic}[1]
\algorithmfontsize

\Statex
\State $\Pi \gets$ $n$-round protocol to be simulated + final confirmation steps
\State $hash \gets$ hash family\finalDel{ from \Cref{lem:hashes}} with $p = 1/n^5$ and $o = s = \Theta(\log n)$    \label{algline:simple:hashdef}

\Statex

\finalDel{\State Initialize Parameters: $r_c \gets \Theta(1)$; \ $r \gets \ceil{\sqrt{\frac{r_c}{\eps}}}$; \ $R_{total} \gets \ceil{n / r + 32n\eps}$; \ $\Trans \gets \emptyset$}
\finalOnly{\State $r_c \gets \Theta(1)$; $r \gets \ceil{\sqrt{\frac{r_c}{\eps}}}$; $R_{total} \gets \ceil{n / r + 32n\eps}$; $\Trans \gets \emptyset$}

\Statex

\For {$R = 1$ to $R_{total}$}
	
	\Statex
	\State $\strings \gets s$ uniformly random bits 		\label{line:simplerandsample}	\Comment{{\bfseries Verification Phase}}
	\State Send $(\strings,hash_\strings(\Trans),|\Trans|)$; Receive $(\strings',H'_{\Trans},l')$
	\State $H_{\Trans} \gets hash_{\strings'}(\Trans)$; $l \gets |\Trans|$

	\Statex
	
	\If{$H_{\Trans} = H_{\Trans}'$}				 \Comment{{\bfseries Computation Phase}}
		\State continue computation of $\Pi$ for $r$ communications and record those in $\Trans$
	\Else 
		\State do $r$ dummy communications keeping $\Trans$ unchanged
	\EndIf

	\Statex

	\If{$H_{\Trans} \neq H_{\Trans}'$ and $l \geq l'$}	\label{line:comparison} \Comment{{\bfseries Transition Phase}}
			\State rollback computation of $\Pi$ and transcript $\Trans$ by $r$ steps
		\EndIf

\EndFor	
	
\Statex
\State Output the outcome of $\Pi$ corresponding to transcript $\Trans$	
	
\end{algorithmic}
\label{alg:SimpleCompute}
\end{algorithm}

\subsection{Proof of Correctness}\label{sec:simpleproof}


\begin{theorem}\label{lem:simplealg}
Suppose any $n$-round protocol $\Pi$ using an alphabet $\Sigma$ of bit-size $\Theta(\log n)$. \refalgSimpleCompute is a computationally efficient randomized coding scheme which given $\Pi$, with probability $1 - 2^{-\Theta(n \eps)}$, robustly simulates it over any fully adversarial error channel with alphabet $\Sigma$ and error rate $\eps$. The simulation uses $n (1 + \Theta(\sqrt{\eps}))$ rounds and therefore achieves a communication rate of $1 - \Theta(\sqrt{\eps})$.
\end{theorem}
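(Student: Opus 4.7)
The plan is to combine three ingredients: a direct round-count for the rate, a potential-function progress argument against the adversary's bounded error budget, and the Naor--Naor hash guarantee to rule out hash collisions. Each iteration uses $r+r_c+O(1)$ symbols of $\Sigma$ and there are $R_{total}=\lceil n/r+32n\eps\rceil$ iterations, so substituting $r=\Theta(1/\sqrt{\eps})$ and $r_c=\Theta(1)$ gives a total of $n+\Theta(nr_c/r)+\Theta(n\eps r)=n(1+\Theta(\sqrt{\eps}))$ transmitted symbols and hence the claimed rate; this also caps the adversary's corruption budget at $\eps N=\Theta(n\eps)$.

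\textbf{Potential function.} I would define $\Phi_R$ after iteration $R$ to be the length of the longest common prefix of Alice's transcript $\Trans_A$ and Bob's transcript $\Trans_B$, provided this prefix agrees with the noise-free transcript of $\Pi$ (else $\Phi_R=0$). The key invariant to prove is that in every iteration either $\Phi$ advances by exactly $r$, or the adversary spent at least one corruption in that iteration, or one of the hashes sampled in that iteration collided. The argument splits on whether the parties start the iteration in sync ($\Trans_A=\Trans_B$) or out of sync. In the in-sync case, if verification and computation are clean then both seed/hash exchanges match and both parties extend by $r$ correct steps of $\Pi$; in the out-of-sync case, \Cref{lem:hashes} ensures that (absent a collision) both parties detect the mismatch, and the deterministic tie-breaker $l\ge l'$ on line~\ref{line:comparison} forces the party with the longer (or both, when equal) transcript to roll back, restoring sync after a bounded number of iterations. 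Any departure from these clean outcomes---a spurious match, a spurious rollback, a miscomputed symbol---can be charged to an adversarial corruption.

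\textbf{Counting.} Combining the invariant with the $O(n\eps)$ corruption budget gives at most $O(n\eps)$ ``error-charged'' iterations. Conditioning on no hash collision, the remaining at least $R_{total}-O(n\eps)\ge n/r+\Omega(n\eps)$ iterations each contribute $r$ to $\Phi$, so the final potential is at least $n+\Omega(n\sqrt{\eps})$. This is long enough to cover the $n$ symbols of $\Pi$ together with its appended $\Theta(\sqrt{\eps}\,n)$ confirmation tail, so both parties output a transcript ending in confirmation symbols, which decodes uniquely to the correct outcome of $\Pi$.

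\textbf{Failure probability and main obstacle.} By \Cref{lem:hashes} each hash distinguishes any two fixed transcripts with probability $1-1/n^5$, and a crude union bound over the $O(n)$ iterations already yields failure probability $1/\poly(n)$. To reach the stronger bound $2^{-\Theta(n\eps)}$ claimed in the theorem, I would observe that a collision can hurt progress only in the $O(n\eps)$ iterations where the parties are actually out of sync, and apply a Chernoff-type concentration on the number of such iterations in which the adversary would need to simultaneously invest both corruptions and a collision. The main technical obstacle is the out-of-sync case analysis: a single verification corruption can flip a genuine hash match into a mismatch (triggering a spurious rollback that erases $r$ of potential) or hide a genuine mismatch (leaving the parties silently diverged), and one must argue that every such unproductive iteration is chargeable to a \emph{distinct} corruption or collision despite the length-based tie-breaker possibly chaining several iterations of mutual recovery together. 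Making this amortization precise, so that no corruption or collision is double counted, is the heart of the argument.
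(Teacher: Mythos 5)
Your rate computation, your use of \Cref{lem:hashes} with fresh per-iteration seeds, and your overall ``potential plus error budget'' architecture all match the paper. But there is a genuine gap, and it is exactly the point you yourself defer to at the end: the amortization. Your potential $\Phi_R$ is one-sided --- it only rewards growth of the common prefix. After the adversary invests a burst of $k$ corruptions it can leave the two transcripts diverged by $\Theta(k)$ blocks, and the protocol then needs $\Theta(k)$ entirely clean iterations (no corruption, no collision) just to roll the incorrect suffixes back. In each of those iterations your $\Phi$ does not advance, so your stated invariant ``either $\Phi$ advances by $r$, or that iteration contains a corruption, or a hash collided'' is false as a per-iteration statement. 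You would have to charge each unproductive rollback iteration back to an earlier corruption, which is the cross-iteration bookkeeping you flag as ``the heart of the argument'' but do not carry out.

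The paper's proof dissolves this difficulty by choosing a two-sided potential: with $l^{+}$ the number of $r$-blocks in the common prefix and $l^{-}$ the number of blocks on which the transcripts disagree (i.e.\ $(|\Trans_A|+|\Trans_B|)/r - 2l^{+}$), it sets $\Phi = l^{+}-l^{-}$. Now a clean rollback iteration \emph{is} progress: it decreases $l^{-}$ by at least one (the length-based tie-breaker on line~\ref{line:comparison} guarantees the longer party, or both when equal, retracts a block), so every error- and collision-free iteration increases $\Phi$ by at least one whether the parties are in or out of sync, while any iteration whatsoever changes $\Phi$ by at most an additive $3$. The theorem then follows from a purely local count over the $R_{total}$ iterations with no amortization across iterations needed; reaching $\Phi \ge \lceil n/r\rceil$ forces $l^{-}=0$ and $l^{+}\ge \lceil n/r\rceil$, i.e.\ agreement on the first $n$ symbols. (Your proviso that the common prefix agree with the noise-free execution is automatic, since a shared prefix is by induction a genuine prefix of $\Pi$.) On the failure probability, your conditioning on out-of-sync iterations is more elaborate than necessary here: with $p=1/n^{5}$ and independently resampled seeds, a direct Chernoff bound over the at most $2n$ iterations already gives more than $6n\eps$ collisions with probability $p^{\Theta(n\eps)} \le 2^{-\Theta(n\eps)}$; the finer ``collisions only in diverged iterations'' argument is what the paper reserves for the later constant-size-hash variants.
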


\paragraph{Proof Outline}
We show that the algorithm terminates correctly by defining an appropriate potential $\Phi$. We prove that any iteration without an error or hash collision increases the potential by at least one while any error or hash collision reduces the potential by at most some fixed constant. Lastly, we show that with very high probability the number of hash collisions is at most $O(\eps n)$ and therefore negligible. 
 This guarantees an overall potential increase that suffices to show that the algorithm terminates correctly after the fixed $R_{total}$ number of iterations.

\paragraph{Potential}
The potential $\Phi$ is based on the transcript $\Trans$ of both parties. We use $\Trans_A$ and $\Trans_B$ to denote the transcript of Alice and Bob respectively. We first define the following intermediate quantities: The agreement $l^{+}$ between the two transcripts at Alice and Bob is the number of blocks of length $r$ in which they agree, that is, %
\finalDel{$$l^{+} = \floor{\frac{1}{r} \max \left\{l' \in [1,\min\{|\Trans_A|,|\Trans_B|\}] \ \text{s.th.}\  \Trans_A[1, l'] = \Trans_B[1, l']\right\}}.$$}
\finalOnly{$l^{+} = \floor{\frac{1}{r} \max \left\{l' \ \text{s.th.}\  \Trans_A[1, l'] = \Trans_B[1, l']\right\}}.$}
Similarly, we define the amount of disagreement $l^{-}$ as the number of blocks they do not agree on:
\finalDel{$$l^{-} = \frac{|\Trans_A|+|\Trans_B|}{r} - 2 l^+.$$}
\finalOnly{$l^{-} = \frac{|\Trans_A|+|\Trans_B|}{r} - 2 l^+.$}
The potential $\Phi$ is now simply defined as $\Phi = l^+ - l^-$.


\finalDel{\paragraph{Proofs}}

\begin{corollary}\label{lem:simplenoerrorpotential}
Each iteration of \refalgSimpleCompute without a hash collision or error increases the potential $\Phi$ by at least one.
\end{corollary}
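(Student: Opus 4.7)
The plan is to split into cases based on whether $\Trans_A = \Trans_B$ at the start of the iteration, and in each case directly track the changes to $l^+$ and $l^-$. A minor convention that will be used throughout: both $|\Trans_A|$ and $|\Trans_B|$ are multiples of $r$ (since transcripts are extended or shortened in $r$-blocks), so it is convenient to set $d_A := |\Trans_A|/r - l^+$ and $d_B := |\Trans_B|/r - l^+$, so that $l^- = d_A + d_B$.

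For the agreement case $\Trans_A = \Trans_B$ (i.e.\ $d_A = d_B = 0$), the absence of channel errors guarantees that each party receives the other's announced seed and hash value faithfully, and the hashes match trivially because the underlying strings are equal. Both parties therefore enter the Computation Phase and extend their transcripts by $r$ more rounds of $\Pi$; with no channel error these extensions coincide. Thus $l^+$ grows by $1$, $l^-$ stays at $0$, and $\Phi$ increases by exactly $1$. For the disagreement case $\Trans_A \neq \Trans_B$, the faithful hash exchange together with the no-collision hypothesis (applied to both seeds $\strings,\strings'$, each used against a pair of distinct strings) forces both parties to see $H_\Trans \neq H_\Trans'$. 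Both parties do dummy communications leaving transcripts unchanged, then the Transition Phase rolls back by $r$ whichever party satisfies $l \geq l'$. I would then do a small sub-case split: if $|\Trans_A| > |\Trans_B|$ (the symmetric case is identical) only the longer party rolls back, removing one block beyond the agreed prefix, so $d_A$ drops by $1$, $l^+$ is unchanged, and $\Phi$ gains $1$; if $|\Trans_A| = |\Trans_B|$ both parties roll back, so $d_A$ and $d_B$ each drop by $1$ (both are strictly positive since the transcripts differ while having equal lengths), $l^+$ is unchanged, and $\Phi$ gains $2$.

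The one subtle point, and the only step that really requires care, is justifying that the rolled-back block lies strictly beyond the agreed prefix so that $l^+$ cannot change. This is what forces me to verify, in each sub-case of the disagreement analysis, that the party doing the rollback has $d > 0$; this is immediate from $\Trans_A \neq \Trans_B$ together with the fact that the party rolling back has the (weakly) longer transcript. Since the transcripts are only being truncated, $l^+$ cannot increase either, pinning down $\Delta l^+ = 0$ in the disagreement case. Combining the two cases yields $\Delta\Phi \geq 1$ in every error-free, collision-free iteration, which is the corollary.
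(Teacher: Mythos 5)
Your proposal is correct and follows essentially the same argument as the paper: the same case split on $\Trans_A = \Trans_B$ versus $\Trans_A \neq \Trans_B$, with $l^+$ increasing by one in the agreement case and $l^-$ decreasing by one or two in the disagreement case depending on whether one or both parties backtrack. Your extra verification that the rolled-back block lies strictly beyond the agreed prefix (so $l^+$ is unchanged) is a detail the paper leaves implicit but is a correct and welcome addition.
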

\begin{proof}
If $\Trans_A = \Trans_B$ then both parties continue computing $\Pi$ from the same place and, since no error happens, both parties correctly add the next $r$ communications of $\Pi$ to their transcripts. This increases $l^{+}$ and therefore also the overall potential $\Phi$ by one. 

If $\Trans_A \neq \Trans_B$ and no hash collision happens then both parties realize this discrepancy and also learn the correct length of the other party's transcript. If $|\Trans_A|=|\Trans_B|$ then both parties backtrack one block which reduces $l^{-}$ by two and thus increases the potential by two. Otherwise, the party with the longer transcript backtracks one block while the other party does not change its transcript. This reduces $l^{-}$ by one and increases the overall potential $\Phi$ by one.
\end{proof}

\begin{corollary}\label{lem:simpleerrorpotential}
Each iterations of \refalgSimpleCompute, regardless of the number of hash collisions and errors, decreases the potential $\Phi$ by at most three.
\end{corollary}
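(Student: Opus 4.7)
The plan is to track the per-iteration change in $\Phi = l^{+} - l^{-}$ via a short algebraic identity. Since in each iteration each party either extends $\Trans$ by $r$ symbols, leaves it unchanged, or rolls back $r$ symbols, the transcript lengths remain multiples of $r$ throughout, and setting $S := (|\Trans_A|+|\Trans_B|)/r$ gives $l^{-} = S - 2l^{+}$, hence the identity $\Delta \Phi = 3 \Delta l^{+} - \Delta S$.

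The core of the argument is two structural invariants on the longest common prefix length $p$ (which governs $l^{+} = \lfloor p/r\rfloor$). Invariant (i): in any iteration $\Delta l^{+} \geq -1$, because the first $|\Trans|-r$ symbols of each party's transcript are left untouched and so $p$ can shrink by at most $r$. Invariant (ii): if neither party backtracks, each new transcript has its old transcript as a prefix, so the old common prefix remains common and $\Delta l^{+} \geq 0$. With these in hand I case-split on whether some party backtracks. If none does, (ii) gives $\Delta l^{+} \geq 0$ and $\Delta S \leq 2$, so $\Delta \Phi \geq -2$. If at least one does, the backtracking party contributes $-1$ to $\Delta S$ while the other contributes at most $+1$, so $\Delta S \leq 0$; combined with (i) this yields $\Delta \Phi \geq -3$. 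Either way $\Delta \Phi \geq -3$. The tight case is the asymmetric one where one party extends while the other rolls back, which is exactly the worst that a hash collision or well-placed error can cause by making the two parties disagree about who is ``ahead''.

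The main obstacle is verifying invariant (i) in the mixed case in which one party extends and the other rolls back. One has to rule out the subtle situation in which the old common prefix already equalled the shorter transcript's length, by checking directly that the new common prefix drops by at most $r$: the shorter transcript after rolling back is a prefix of its old self, and the longer transcript after extending still begins with its old self, so their new common prefix is at least $\min(p,\,\min(|\Trans_A|,|\Trans_B|)-r) \geq p - r$. Once this piece of bookkeeping is pinned down, the remaining arithmetic is the two-line case analysis above.
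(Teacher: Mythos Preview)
Your proof is correct and follows essentially the same approach as the paper: both arguments bound the per-iteration change in $\Phi$ by tracking how the block counts $l^{+}$ and $l^{-}$ (equivalently, your $S$) can move when each party adds, keeps, or removes at most one block. Your decomposition $\Phi = 3l^{+} - S$ and the ensuing case split are a minor algebraic repackaging of the paper's direct bound $|\Delta l^{+}|\le 1$, $|\Delta l^{-}|\le 2$; the main difference is that you spell out carefully why $\Delta l^{+}\ge -1$ holds in the mixed extend/rollback case, which the paper simply asserts.
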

\begin{proof}
No matter what is received during an iteration a party never removes more than one block from its transcript. Similarly, at most one block is added to $\Trans_A$ and $\Trans_B$. Overall in one iteration this changes $l^+$ by at most by one and $l^{-}$ at most by two. The overall potential $\Phi$ changes therefore at most by three in any iteration. 
\end{proof}

Next, we argue that the number of iterations of \refalgSimpleCompute with a hash collision is negligible. To be precise, we say an iteration \emph{suffers a hash collision} if $\Trans_A \neq \Trans_B$ but either $hash_{\strings_B}(\Trans_A) = hash_{\strings_B}(\Trans_B)$ or $hash_{\strings_A}(\Trans_A) = hash_{\strings_A}(\Trans_B)$. In particular, we do not count iterations as suffering a hash collision if the random hash functions sampled would reveal a discrepancy but this detection, e.g., in \Cref{line:comparison}, is prevented by corruptions in the transmission of a hash value or seed. To prove the number of hash collisions to be small we crucially exploit the fact that the randomness used for hashing is sampled afresh in every iteration. In particular, it is sampled after everything that is hashed in this iteration is already irrevocably fixed. This independence allows to use the collision resistance property of \Cref{lem:hashes} which shows that any iteration suffers from a hash collision with probability at most $p = 1/n^5$. %
\finalDel{%
A union bound over all iterations then shows that with high probability no hash collision happens at all:

\begin{corollary}\label{lem:simplehashcollisionsunionbound}
With probability $1 - 1/n^4$ no iteration in \refalgSimpleCompute suffers from a hash collision.
\end{corollary}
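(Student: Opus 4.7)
The plan is a direct union bound over iterations using the single-iteration collision bound from \Cref{lem:hashes}; there is no deep difficulty here, only an independence point that must be pinned down carefully.

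First I would fix an arbitrary iteration $R \in [1, R_{total}]$ and condition on the entire history of communication, corruptions, and randomness used strictly before \Cref{line:simplerandsample} of iteration $R$. Under this conditioning, the transcripts $\Trans_A$ and $\Trans_B$ held by the parties at the start of iteration $R$ are deterministic functions of the history, whereas the seeds $\strings_A, \strings_B$ drawn on \Cref{line:simplerandsample} are fresh, uniformly random $s$-bit strings independent of everything that came before. Since \Cref{algline:simple:hashdef} instantiates the hash family with $p = 1/n^5$ for strings of length at most $n$, \Cref{lem:hashes} yields that whenever $\Trans_A \neq \Trans_B$,
$$P[hash_{\strings_A}(\Trans_A) = hash_{\strings_A}(\Trans_B)] \leq 1/n^5,$$
and the analogous bound holds for $\strings_B$. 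A union bound within the iteration then shows that, conditional on the history, iteration $R$ suffers a hash collision with probability at most $2/n^5$.

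Next I would take a union bound across iterations. Since $R_{total} = \lceil n/r + 32 n \eps \rceil = O(n)$, the total probability that some iteration suffers a hash collision is at most $2 R_{total}/n^5 = O(1/n^4)$, which is below $1/n^4$ for all sufficiently large $n$, giving the claim. Because the per-iteration bound holds under arbitrary conditioning on the history, this union bound does not require independence across iterations and is unaffected by the adversary's adaptivity.

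The only step that needs real care, and thus the main (if mild) obstacle, is the independence justification in the first step. The adversary may correlate her corruption strategy arbitrarily with everything she has observed so far, including all past seeds and hash values, but \Cref{lem:hashes} requires only that the seed be independent of the two strings being compared. The ordering in \refalgSimpleCompute guarantees exactly this: the seed for iteration $R$ is sampled \emph{after} every communication that contributes to $\Trans_A$ and $\Trans_B$ has already occurred, so conditional on the history the seed is uniform and independent of the (now-fixed) transcripts. Once this is stated cleanly, \Cref{lem:hashes} applies verbatim and the union bound is routine.
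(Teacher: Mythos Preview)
Your proposal is correct and essentially identical to the paper's approach: the paper states the corollary as an immediate union bound over the at most $2n$ iterations using the per-iteration collision probability $p = 1/n^5$ guaranteed by \Cref{lem:hashes}, after making precisely the independence observation you spell out (that the fresh seed is sampled after the transcripts are fixed). Your write-up is in fact more careful than the paper's one-line justification, particularly in handling the two seeds per iteration and the conditioning argument against an adaptive adversary.
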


While the success probability guaranteed by \Cref{lem:simplehashcollisionsunionbound} is already quite nice it will be important for subsequent algorithms to realize that one can easily prove stronger guarantees. In particular, using the independence between iterations in combination with a standard tail bound proves that the probability of having a number of hash collisions of the same order of magnitude as the number of errors is at least $1 - 2^{-\Theta(\eps n)}$:

\begin{corollary}\label{lem:simplehashcollisions}
The number of iterations of \refalgSimpleCompute suffering from a hash collision is at most $6n\eps$ with probability at least $1 - 2^{-\Theta(\eps n)}$.
\end{corollary}
\begin{proof}
The hash function family selected in \Cref{algline:simple:hashdef} of \refalgSimpleCompute has, by \Cref{lem:hashes}, the guarantee that a hash collision happens with probability at most $p = 1/n^5$ if the randomness is chosen independently from the strings to be compared. This is the case for \refalgSimpleCompute since the randomness used in an iteration is sampled afresh in \Cref{line:simplerandsample}
right before it is used for hashing. Therefore, even if the adversary influences the transcripts such that the collision probability is maximized, the probability of a collision in any iteration is at most $p$. This remains true even if the adversary learns the random seeds right after they are sampled (which it does since they are sent over the channel). Since at this point the transcripts $\Trans_A$ and $\Trans_B$ to be hashed are fixed and unaffected by any corruptions in this round the adversary does not have any influence on whether the iteration is counted as having suffered a hash collision. Overall, there are at most $2n$ iterations, each with its own independently sampled random seeds. The occurence of hash collisions is thus dominated by $2n$ independent Bernoulli($p$) variables. A Chernoff bound now shows that the probability of having more than $6 n \eps$ hash collisions is at most $p^{\Theta(\eps n)}$.
\end{proof}
} \finalOnly{%
A union bound over all iterations then shows that with high probability no hash collision happens at all. Furthermore, using the independence between iterations in combination with a standard tail bound proves that the probability of having a number of hash collisions of the same order of magnitude as the number of errors is at least $1 - 2^{-\Theta(\eps n)}$:

\begin{corollary}\label{lem:simplehashcollisions}
The number of iterations of \refalgSimpleCompute suffering from a hash collision is at most $6n\eps$ with probability at least $1 - 2^{-\Theta(\eps n)}$.
\end{corollary}
}

We are now ready to prove \Cref{lem:simplealg}:

\begin{proof}[Proof of \Cref{lem:simplealg}]
There are at most $2n\eps$ errors and according to \Cref{lem:simplehashcollisions} at most $6n\eps$ iterations with a hash collision. This results in at most $8n\eps$ iterations in which, according to 
\Cref{lem:simpleerrorpotential} the potential $\Phi$ decreases (by at most three). For the remaining $R_{total} - 8n\eps = \ceil{n / r + 24 n\eps}$ iterations \Cref{lem:simplenoerrorpotential} shows that the potential $\Phi$ increases by one. This leads to a total potential of at least $\ceil{n / r}$ which implies that after the last iteration both parties agree upon the first $n$ symbols of the execution of $\Pi$. This leads to both parties outputing the correct outcome and therefore to \Cref{lem:simplehashcollisions} being a correct robust simulation of $\Pi$. 

To analyze the round complexity and communication rate we note that each of the $R_{total}$ iteration consists of $r$ computation steps and $r_c = \Theta(1)$ symbols exchanged during any verification phase. The total round complexity of \refalgSimpleCompute is therefore 
$R_{total} \cdot (r + r_c) = \ceil{n / r + 6 n\eps} \cdot (r + \Theta(1)) = n + \Theta(n \eps r + n/r + n\eps) = n (1 + \Theta(\eps r + 1/r))$. Choosing the optimal value of $r = \Theta(\frac{1}{\sqrt{\eps}})$, as done in \refalgSimpleCompute, leads to $n(1 + \Theta(\sqrt{\eps})$ rounds in total, as claimed.
\end{proof}

\section{Problems and Solutions for Small Alphabets}\label{sec:problemsandsolutions}

In this section we explain the barriers preventing \refalgSimpleCompute to be applied to channels with small alphabets and then explain the solutions and ideas put forward in this work to circumvent them.

\subsection{Problems with Small Alphabets}

It is easy to see that the only thing that prevents \refalgSimpleCompute from working over a smaller alphabet is that the verification phase uses $\Theta(\log n)$ bits of communication which are exchanged using $r_c = \Theta(1)$ symbols from the large alphabet. For an alphabet of constant size this is not possible and $r_c = \Theta(\log n)$ rounds of verification would be needed\finalOnly{ which is not possible}. \finalDel{ However, as already explained in \Cref{sec:settingrandrc}, a coding scheme in which $r_c$ increases with $n$ is impossible.} In \refalgSimpleCompute the logarithmic amount of communication is used thrice: for the hash function seed, the hash function value, and to coordinate the backtracking by communicating the transcript length.

\subsubsection{Logarithmic Length Information to Coordinate Backtracking}\label{sec:prob:backtrack}

The simplest idea for backtracking would be to have both parties go back some number of steps whenever a non-matching transcript is detected. This works well if both parties have equally long transcripts. Unfortunately, transcripts of different length are unavoidable because the adversary can easily make only one party backtrack while the other party continues. Then, if both parties always backtrack the same number of steps, both parties might reverse correct parts of the transcript without getting closer to each other. This means that with transcripts of different length the parties need to first and foremost come to realize which party is ahead and thus has to backtrack to the transcript length of the other party. Unfortunately, an adversarial channel can easily create transcript length differences of $\Theta(n\eps)$ steps between the two parties. For this it completely interrupts the communication of a simulation, as an ``attacker in the middle'', at a given point of time and simulates a faulty party with Alice, making her backtrack, while simulating a fully compliant party with Bob, making him go forward in an arbitrary wrong direction. With such large length differences it seems hard to achieve synchronization without sending logarithmic size length information.

Another problem is that, especially when dealing with adversarial channels, performing large re-synchronization steps is dangerous. In particular, if there is a way to make a party backtrack for a super-constant number of iterations triggered by only a constant amount of communication then the adversary can exploit this mechanism by faking this trigger. This would lead to $\omega(1)$ backtracking steps for every constant number of errors invested by the adversary and therefore make an optimal communication rate impossible.

\subsubsection{Logarithmic Length Seeds and Hash Values}

In the implementation of \refalgSimpleCompute the seeds used to initialize the hash functions as well as the generated hash value itself are $\Theta(\log n)$ bits long. Looking at \Cref{lem:hashes} reveals that this requirement comes both from the desire to make the collision probability small but also from the fact that we are hashing whole transcripts which are $O(n)$ bits long. One way to try to get around the later problem is to try hashing only the last few rounds. However, in the adversarial setting, it is unavoidable to have errors that go undetected for $n \eps$ rounds since the adversary can completely take over the conversation for this long. Another option would be to look for hash functions with a sub-logarithmic dependence on the length of the strings to be hashed. \finalOnly{Unfortunately, this is not possible (see \cite{H14arxiv}).}\finalDel{
Unfortunately, the next lemma gives a simple argument that this is not possible (unless one uses hash values with almost linear bit-size in which case, e.g., the identity is a good collision free hash function requiring no seed):

\begin{lemma}\label{lem:seedlengthLB}
Any hash function with non-trivial collision probability $p<1$ requires that the seed length $s$ is at least $\log \frac{n \log |\Sigma|}{o}$ where $n \log |\Sigma|$ is of the bit-lengths of the strings it hashes and $o$ is the bit-length of the output.
\end{lemma}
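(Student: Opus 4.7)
The plan is a clean counting/pigeonhole argument. The key observation is that a non-trivial collision probability $p<1$ simply means that for every pair of distinct strings $\stringx \neq \stringy \in \Sigma^{\leq n}$, there exists at least one seed $\strings \in \{0,1\}^s$ under which $h_\strings(\stringx) \neq h_\strings(\stringy)$; otherwise the collision probability on this pair would be exactly $1$. From this distinguishability condition we will extract the bound by counting possible ``signatures''.

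First, I would define the signature map $F : \Sigma^n \to (\{0,1\}^o)^{2^s}$ by
\[
F(\stringx) \;=\; \bigl(h_\strings(\stringx)\bigr)_{\strings \in \{0,1\}^s},
\]
that is, $F(\stringx)$ records the hash value of $\stringx$ under every possible seed. The observation above immediately implies that $F$ is injective: if $\stringx \neq \stringy$, some coordinate of $F(\stringx)$ differs from the corresponding coordinate of $F(\stringy)$.

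Since $F$ is injective, the size of its domain is at most the size of its codomain:
\[
|\Sigma|^n \;\le\; \bigl(2^o\bigr)^{2^s} \;=\; 2^{o \cdot 2^s}.
\]
Taking binary logarithms gives $n \log |\Sigma| \le o \cdot 2^s$, i.e.\ $2^s \ge \frac{n \log |\Sigma|}{o}$, which rearranges to the stated bound $s \ge \log \frac{n \log |\Sigma|}{o}$.

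I do not foresee a real obstacle here — the lemma is essentially a one-line counting argument once ``non-trivial collision probability'' is unpacked as ``every pair of distinct inputs is separated by at least one seed''. The only mild care needed is to note that even allowing strings of length \emph{up to} $n$ (rather than exactly $n$) only enlarges the domain, so the bound derived for length-$n$ inputs is a valid lower bound in the general case as well; and that the conclusion is vacuous (and the bound trivially tight) in the regime $o \ge n \log |\Sigma|$, where the identity function suffices and $s=0$ is allowed.
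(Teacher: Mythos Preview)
Your proof is correct and is essentially the same argument as the paper's: both hinge on the observation that $p<1$ forces every pair of distinct inputs to be separated by some seed, and then derive $|\Sigma|^n \le (2^o)^{2^s}$. The only cosmetic difference is packaging --- you phrase it as injectivity of the full signature map $F(\stringx) = (h_\strings(\stringx))_\strings$, while the paper iteratively applies pigeonhole seed-by-seed to exhibit a colliding pair when the inequality fails; the resulting bound is identical.
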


\newcommand{\proofseedlengthLB}{
\begin{proof}\shortOnly{[Proof of \Cref{lem:seedlengthLB}]}
For any seed $s$ the hash function $h_s$ partitions the $2^{n \log |\Sigma|}$ many strings into $2^o$ partitions according to its output value. By pigeon hole principle there is a group of at least $2^{n \log |\Sigma|} 2^{-o}$ strings which evaluate the same given the lexicographically first seed. Repeating this argument gives that there is a group of at least $2^{n \log |\Sigma|} (2^{-o})^i$ strings which evaluate the same under the $i$ lexicographically first seeds. Since there are exactly $2^s$ possible seeds there is a group of at least $2^{n \log |\Sigma|} (2^{-o})^i$ strings which evaluate the same under all seeds. Since two different strings that evaluate the same under all seeds would lead to a trivial collision probability of $1$ we have that 
$$2^{n \log |\Sigma|} (2^{-o})^{2^s} = 2^{n \log |\Sigma| -o 2^s} \leq 1$$
which implies $n \log |\Sigma| - o 2^s \leq 0$ and therefore $2^s \geq \frac{n \log |\Sigma|}{o}$ as claimed. 
\end{proof}
}\fullOnly{\proofseedlengthLB}
\shortOnly{The proof of \Cref{lem:seedlengthLB} is given in\finalDel{ \Cref{app:proofs}}\finalOnly{~\cite{H14arxiv}}.}
}

\subsection{Our Solutions}

In this section we explain our solutions to the above problems and introduce the working parts and rationale behind \refalgComputeOblivious and \refalgComputeAdv.

\subsubsection{Meeting Point Based Backtracking}

We first explain how our algorithms coordinate their backtracking actions while exchanging only $O(1)$ bits per verification phase. In particular, we explain how the parties in our coding scheme determine where to and when to backtrack once they are aware that their transcripts are not in agreement or not synchronized. 

Recall the second observation from \Cref{sec:prob:backtrack} that parties cannot backtrack for more than a constant number of steps for every verification step, which consists of $O(1)$ bits of communication. 
In order to achieve this it is clear that parties might not be able to backtrack at all, even if non-matching transcripts were detected. Our algorithms implement this by maintaining at each party a threshold $\verif$ for how far the party is willing to backtrack. For every iteration with an unresolved transcript inconsistency, that is, for every iteration since the last computation or backtracking step, this threshold increases by one without the party actually performing a backtracking step. The $\verif$ values are kept synchronized between the two parties by including hashes of them in the verification phase and resetting them if too many discrepancies, measured by the error variable $\error$, are observed.  

Now, with both parties having the same threshold $\verif$ for how far they are able and willing to backtrack we use an idea that goes back to Schulman's first interactive coding paper~\cite{Schulman92} and define \emph{meeting points} at which the parties can meet without having to communicate their position, i.e., their $\Theta(\log n)$ bit transcript length description. For this we create a \emph{scale} $\VP$ by rounding $\verif$ to the next power of two, that is, $\VP = 2^{\floor{\log_2 \verif}}$, and define the meeting points on this scale to be all multiples of $\VP r$. A party is willing to backtrack to either of the two closest such meeting points, namely, $\MPone = \VP r \floor{\frac{|\Trans|}{\VP r}}$ and $\MPtwo = \MPone - \VP r$. It is easy to see that these meeting points are consistent and have the property that any two parties with the same scale $\VP$ and a difference of $l^{-} < 2\VP$ have at least one common meeting point up to which their transcripts agree. In each verification phase both parties send hash values of their transcripts up to these two meeting points, in the hope to find a match. We note that for a scale $\VP$ there are $0.5 \VP$ hash comparisons generated during the time both parties look for a common meeting point at this scale $\VP$. If most of these hashes, e.g., $0.4 \VP$ many, indicate a match a party backtracks to this point. This guarantees that producing an incorrect backtracking step of length $d$ requires $O(d)$ corruptions. 

A potential function argument very similar to the one given for \refalgSimpleCompute in \Cref{sec:simpleproof}, except for obviously involving many more cases, shows that this backtracking synchronization works as well as before while communicating only small hashes instead of logarithmic bit-sized length information.

\subsubsection{Hash Values and Seeds}

Next, we explain the strategies we use to reduce the communication overhead in the verification phase stemming from large hash values and seeds. \shortOnly{The discussion regarding the seed length can be found in \finalOnly{\cite{H14arxiv}}\finalDel{\Cref{app:reducingtheseedlength}}. The reader is however highly encouraged to read it before trying to understand the hashing analysis.}

\paragraph{Constant Size Hash Values}
We \finalDel{first} concentrate on reducing the size of the hash values to a constant.
\finalDel{

We begin with the observation that one can get easily away with $\Theta(\log \frac{1}{\eps})$ size hash values. In particular, setting the hash collision probability $p$ from $1/n^5$ to $\eps/2$ results in the expected number of hash collisions to be at most $n \eps$ in total. The same tail bound as used in the proof of \Cref{lem:simplehashcollisions} shows then that the probability of having more than $2 n \eps$ many hash collisions is still at most $p^{\Theta(\eps n)}$. However, using $r_c = \Theta(\log \frac{1}{\eps})$ bits for the verification would still lead to a suboptimal communication rate of $1 - \sqrt{H(\eps)}$.

}
What comes to the rescue here is the observation that hashing only makes one-sided errors, that is, it only confuses different strings for equal but never the other way around since hash values of the same string will always match. Since the primary cause for a non-matching transcript is an iteration with an error one would furthermore expect that there are few, say $O(n\eps)$, opportunities for such a hash collision to happen. This would make it possible to have a constant hash collision probability without increasing the number of hash collisions beyond $\Theta(n \eps)$. \finalOnly{This intuition is indeed correct and can be formalized relatively easily, as shown in \cite{H14arxiv}.}\finalDel{The following lemma formalizes this intuition and shows that \refalgSimpleCompute indeed still works as-is if hash values with only constant bit-size are used:


\begin{lemma}\label{lem:simplealgconstanthashvalue}
\refalgSimpleCompute still functions as claimed in \Cref{lem:simplealg} even if the hash function used has collision probability $p = 0.1$ and therefore output length $o = |H_{\Trans}| = \Theta(1)$.
\end{lemma}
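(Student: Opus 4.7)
The potential framework from the proof of \Cref{lem:simplealg} carries over unchanged. \Cref{lem:simplenoerrorpotential,lem:simpleerrorpotential} nowhere use the specific collision probability $p$, so every iteration with neither an error nor a hash collision still increases $\Phi$ by at least one, and every other iteration decreases $\Phi$ by at most three. Hence the only step of the old proof that breaks is \Cref{lem:simplehashcollisions}, whose union bound is now vacuous: with $p = 0.1$ the expected total number of collisions over the $R_{total} = \Theta(n/\sqrt{\eps})$ iterations would be $\Theta(n/\sqrt{\eps})$. I only need to re-prove that, with the constant-size hash, the number of hash-collision iterations is still $O(n\eps)$ with probability $1 - 2^{-\Theta(n\eps)}$.

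The key observation is one-sidedness. By the definition used in the proof of \Cref{lem:simplehashcollisions}, a hash collision can be suffered only in iterations where $\Trans_A \neq \Trans_B$ at the moment the seeds are sampled, since identical transcripts never produce mismatched fingerprints. Let $T$ denote the total number of such ``disagreement'' iterations and $C$ the number of them that actually collide. I bound $T$ via a separate potential argument on $l^-$ alone: by the pathwise case analyses inside the proofs of \Cref{lem:simplenoerrorpotential,lem:simpleerrorpotential}, every good iteration with $\Trans_A \neq \Trans_B$ decreases $l^-$ by at least $1$, while every bad iteration (error or collision) changes $l^-$ by at most $2$. Since $l^-$ starts at $0$ and is nonnegative, the number of good disagreement iterations is at most $2(E+C)$, and adding the at most $E+C$ bad disagreement iterations gives $T \leq 3(E+C) \leq 6 n\eps + 3C$, where $E \leq 2 n\eps$ is the total number of errors.

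For the concentration step I use that each iteration samples its seeds afresh in \Cref{line:simplerandsample} \emph{after} the transcripts at the start of the verification phase have been irrevocably fixed. By \Cref{lem:hashes}, the conditional probability that either party's comparison produces a false match is at most $p = 0.1$, so by a union bound over the two hash comparisons the conditional collision probability in any disagreement iteration is at most $2p = 0.2$. Hence the collision indicators along disagreement iterations are stochastically dominated by independent $\mathrm{Bernoulli}(0.2)$ trials. A Chernoff bound gives $\Pr[\mathrm{Bin}(k, 0.2) \geq 0.3 k] \leq 2^{-\Theta(k)}$; summing over integers $k \geq 100\, n\eps$ and using $T \leq R_{total} = O(n)$ yields, with probability $1 - 2^{-\Theta(n\eps)}$, that either $T \leq 100\, n\eps$ (in which case $C \leq T \leq 100\, n\eps$ trivially) or $C < 0.3 T$, which combined with $T \leq 6 n\eps + 3 C$ forces $C = O(n\eps)$. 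Either way the number of bad iterations is $O(n\eps)$, so enlarging the constant $32$ in $R_{total}$ if needed, the potential still reaches $\ceil{n/r}$ by the end of the algorithm, and the round-complexity calculation is unaffected since $r_c$ remains $\Theta(1)$. The main obstacle is closing the circular dependence between $T$ and $C$: the naive expectation bound $\mathbb{E}[C] \leq 0.2 \, (6 n\eps + 3 \mathbb{E}[C])$ gives the right mean but not the required exponential concentration, which is why the union-bound-over-$k$ (or an equivalent stopping-time) argument is essential.
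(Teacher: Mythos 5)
Your proposal is correct and follows essentially the same route as the paper's own proof: reduce everything to re-establishing the $O(n\eps)$ bound on hash collisions, observe that collisions are one-sided and can only occur in disagreement iterations, bound the number of such iterations by $O(E+C)$ via the dynamics of $l^-$, and conclude with a Chernoff/domination argument exploiting the freshly sampled seeds. Your explicit handling of the circular dependence between the number of disagreement iterations and the number of collisions is in fact slightly more careful than the paper's treatment, at the cost of looser constants (which, as you note, only requires enlarging the constant in $R_{total}$).
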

\begin{proof}
The only part of the proof of \Cref{lem:simplealg} that needs to be redone are the arguments in \Cref{lem:simplehashcollisions} which show that the number of iterations suffering a hash collision are at most $6 n \eps$ with probability $1 - 2^{-\Theta(n \eps)}$. 

To show this, we first note that an iteration can have a hash collision only when $l^{-} > 0$, that is, when the transcripts of the two parties disagree. We bound the number of rounds in which this is the case by $6 n \eps$. For this we observe that during any iteration $l^{-}$ increases by at most two while any iteration without any error has an independent probability of at least $1-p$ to reduce $l^{-}$ by one, if it is not zero already. Therefore, if there are $x \geq 6 n \eps$ iterations in which $l^{-} > 0$ then $l^{-}$ must have remained the same or increased during at least $x/3$ iterations of which at most $1.1 n \eps < x/5$  can be attributed to iterations with errors. However, having $x/3 - x/5 > 0.13 x$ hash collisions out of $x$ iterations with $l^{-} >0$ gives an empirical average of $0.13$ among $\Theta(n \eps)$ trials which are dominated by independent Bernoulli trials with probability $p = 0.1$. The same tailbound as before shows that the probability for this to happen is at most $2^{-\Theta(n \eps)}$. This shows that the number of rounds with disagreeing transcripts, and therefore also the number of hash collisions, is at most $6 n \eps$, with probability $1 - 2^{-\Theta(n \eps)}$. 
\end{proof}
}

\newcommand{\reducingseedlength}{
\shortOnly{\section{Reducing the Seed Length via Preshared Randomness} \label{app:reducingtheseedlength}}
\fullOnly{\paragraph{Reducing the Seed Length via Preshared Randomness} \label{app:reducingtheseedlength}}
%
Next, we address how one can reduce the communication overhead caused by \refalgSimpleCompute transmitting in each iteration the seeds that select the random hash functions.

Our general strategy to avoid having to share a logarithmic length seed in every verification step is to share a large amount of randomness at the beginning of the protocol and then use and repeatedly reuse this randomness in every verification step. To share randomness Alice privately samples some uniformly random string $R'$, then encodes this string into a good error correcting code of distance at least $4n \eps$, and finally sends this string to Bob. Since the total number of errors is below $2 n \eps$ Bob can decode correctly. This allows Alice and Bob to agree on some shared random string $R'$ of length $l'$ using only $\Theta(l' + n H(\eps))$ rounds of communication (as long as $l' < n$). A detailed description of this Robust Randomness Exchange algorithm is given as \refalgRobustRandomnessExchange.

This idea however runs into several major obstacles: 
\begin{itemize}
	\item Especially when dealing with a fully adaptive adversary, exchanging randomness, that is used for hashing, seems like a bad idea, because it also informs the adversary about this randomness. The adversary can then choose its corruptions according to the randomness used for hashing which makes it possible to cause hash collisions with certainty. In particular, since hash functions cannot be injective it is easy for an adversary to find two strings $\stringx$,$\stringy$ for which $hash_{\strings}(\stringx) = hash_{\strings}(\stringy)$ if the seed $\strings$ is known to the adversary. We deal with this problem by proving a small failure probability for any oblivious adversary and then showing that the number of (oblivious) strategies an adaptive adversary can adaptively pick from is small enough to apply a union bound. This is similar to the derandomization proofs in \cite{BN13,GH13}. 

	\item Sharing enough randomness to generate an independent $\Theta(\log n)$ bit seed for each of the $n/r$ iterations would require $\Omega(n \log n \sqrt{\eps})$ bits of shared randomness. Sharing such large amounts of randomness would require too much communication. Using a smaller amount of independence is also not directly possible because the transcripts to be hashed in an iteration depend non-trivially on the outcome of all prior (up to) $n/r$ hashing steps. This essentially implies that $n/r$-wise independence is required. On the other hand \Cref{lem:seedlengthLB} shows that the seeds length used in one iteration cannot be made smaller than logarithmic in $n$.  
\end{itemize}

What allows us to circumvent this second obstacle is a more direct use of the approximately $k$-wise independent or $\delta$-biased probability spaces of \cite{NaorNaor}. In particular, we use~\cite{NaorNaor} to deterministically stretch the $l'$ iid random bits in $R'$ to a much longer (pseudo-)random string $R$ of $l$ bits that are $\delta$-biased for some small $\delta$ and therefore are statistically indistinguishable from being independent. For such a seed, \cite{NaorNaor} guarantees the following crucial properties:
\begin{itemize}
	\item Any $\delta$-biased probability space is also $\eps$-statistically close to being $k$-wise independent for $\eps = \delta^{\Theta(1)}$ and $k = \Theta(\log \delta)$. 
	\item This $k$-wise independence extends to linearly independent linear tests. This means that the outcome of $k$ linear tests on a $\delta$-biased probability space is $\eps$-statistically close to $k$ fully independent tests as long as the tests are linearly independent. This holds even if each tests compromises a large number of variables.
	\item Lastly, only $\Theta(\log l + \log \delta)$ random bits are required to create $l$ random bits with bias $\delta$. Even for a large, polynomial sized $R$ the amount of shared randomness required to produce $R$ is dominated by the \emph{additive} $\Theta(\log \delta)$ term. This means that the amount of randomness required is only $\Theta(1)$ times the amount of independence required.
\end{itemize}

In \Cref{sec:innerproducthash} we show how to use these properties with a very simple hash function.

%
}
\fullOnly{\reducingseedlength}

\section{Our Coding Schemes}\label{sec:protocol}

\finalDel{Next, we give a complete description of our coding schemes for oblivious and fully adversarial channels. We start with a description of the Randomness Exchange subroutine which is given as \refalgRobustRandomnessExchange and then describe the hash functions we use in our coding scheme. \refalgComputeOblivious then gives the coding scheme for oblivious channels and \refalgComputeAdv is the variation that works for fully adversarial channels.}

\subsection{The Robust Randomness Exchange Subroutine}

The Robust Randomness Exchange Subroutine is used to exchange some randomness at the beginning of the algorithm using an error correcting code which is then stretched to a longer $\delta$=biased pseudo random string of length $l$ using \cite{NaorNaor}. This string is then used by both parties to provide the random seeds for selecting the hash functions in each iteration\finalOnly{ (see also \cite{H14arxiv})}. 

\begin{algorithm}[htb!]
\caption{Robust Randomness Exchange($l$,$\delta$)} 
\begin{algorithmic}[1]
\algorithmfontsize
\State Input: desired number of bits $l$ and bias $\delta$ of the shared randomness
\State Output: shared random string $R$ of length $l$ and bias $\delta$
\Statex
\State $l' = \Theta(\log \delta + \log l)$
\State $C \gets $ \finalDel{Error Correcting Code}\finalOnly{ECC} $\{0,1\}^{l'} \rightarrow \{0,1\}^{\Theta(l' + n H(\eps))}$ with distance $4n\eps$

\finalDel{\Statex}
\If{Alice}		\finalDel{\Comment{{\bfseries Randomness Exchange} requiring $\Theta(\log \delta + \log l + nH(\eps))$ rounds}}
	\State $R' \gets$  uniform random bit string of length $l'$
	\State Transmit $C(R')$ to Bob  \finalDel{\Comment Send Encoding of $R'$ to Bob}
\ElsIf{Bob}
	\State Receive $C'$ from Alice	\finalDel{\Comment Receive Corrupted Codeword from Alice}
	\State $R' \gets$ Decoding of $C'$ 
\EndIf	

\finalDel{\Statex \Comment Generate shared (pseudo-)random string $R$}
\State $R \gets \delta$-biased pseudo random string of length $l$ derived from $R'$ 

\end{algorithmic}
\label{alg:RobustRandomnessExchange}
\end{algorithm}

\subsection{The Inner Product Hash Function}\label{sec:innerproducthash}

In our algorithms we use the following, extremely simple, \emph{inner product hash function}, which allows for an easy analysis given the $\delta$-biased property of the shared random seed:

\begin{definition}[Inner Product Hash Function]\label{def:innerproducthash}
For any input length $L$ and any output length $o$ we define the inner product hash function $h_\strings(.)$ as doing the following:
For a given binary seed $\strings$ of length at least $2oL$ it takes any binary input string $\stringx$ of length $l \leq L$, concatenates this input with its length $\tilde{\stringx} = (\strings,|\strings|)$ to form a string of length $\tilde{l} = |\tilde{\stringx}| = |\stringx| + \ceil{\log_2 |\stringx|} \leq 2L$ and then outputs the $o$ inner products $\left\langle \tilde{\stringx}, \strings[i \cdot 2L + 1,i \cdot 2L +\tilde{l}]\right\rangle$ for every $i \in [0,o-1]$. 
\end{definition}

The next corollary states the trivial fact that the inner product hash function is a reasonable hash function with collision probability exponential in its output length if a (huge) uniformly random seed is used. It also states that replacing this uniform seed by a $\delta$-biased one does not change the outcome much. This follows directly from the definition of $\delta$-bias:

\begin{corollary}
Consider a pairs of binary strings $\stringx \neq \stringy$ each of length at most $L$, and suppose $h$ is the inner product hash function for input length $L$ and any output length $o$. Suppose furthermore that $\strings$ is seed string of length at least $n \cdot 2 o L$  which is sampled independently of $\stringx, \stringy$. The collision probability $P[h_\strings(\stringx) = h_\strings(\stringy)]$ is exactly $2^{-o}$ if $\strings$ is sampled from the uniform distribution. Furthermore, if the seed $\strings$ is sampled from a $\delta$-biased distribution the collision probability remains at most $2^{-o} + \delta$.
\end{corollary}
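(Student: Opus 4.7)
The plan is to reduce the collision event to a conjunction of $o$ non-trivial $GF(2)$-linear equations over \emph{disjoint} chunks of the seed, and then analyze the resulting probability via a Fourier expansion.

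First I would unpack the definition. For each $i \in [0, o-1]$, let $\strings^{(i)} = \strings[i \cdot 2L + 1, (i+1) \cdot 2L]$ be the $i$-th chunk of the seed; by assumption these chunks are disjoint and cover $2oL$ bits. The $i$-th output bit of $h_\strings(\stringx)$ is $\langle \tilde{\stringx}, \strings^{(i)}[1,\tilde{l}_x]\rangle$, which I rewrite as $\langle u_x, \strings^{(i)}\rangle$, where $u_x \in \{0,1\}^{2L}$ is $\tilde{\stringx}$ padded with zeros to length $2L$. Define $u_y$ analogously and set $v^{(i)} = u_x \oplus u_y$. Then the $i$-th output bits agree iff $\langle v^{(i)}, \strings^{(i)}\rangle = 0$. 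The crucial observation is that $v^{(i)} \neq 0$: if $|\stringx| = |\stringy|$ this is immediate from $\stringx \neq \stringy$; if $|\stringx| \neq |\stringy|$, it follows from the length suffix, which contributes a nonzero high-order bit in the longer string that cannot be matched by the zero-padded shorter string. Checking this carefully — that the length-prefix convention really precludes $u_x = u_y$ when $|\stringx| \neq |\stringy|$ — is the main subtlety; everything else is mechanical.

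For the uniform case, the chunks $\strings^{(0)},\dots,\strings^{(o-1)}$ are disjoint and hence mutually independent, and each non-trivial inner-product parity $\langle v^{(i)}, \strings^{(i)}\rangle$ is exactly Bernoulli$(1/2)$. Thus the joint event has probability exactly $2^{-o}$, matching the first part of the claim.

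For the $\delta$-biased case, I would use the standard Fourier identity
\[
\mathbf{1}\!\left[\bigcap_{i=0}^{o-1}\{\langle v^{(i)}, \strings^{(i)}\rangle = 0\}\right] = 2^{-o}\sum_{T \subseteq [o]} (-1)^{\langle v_T, \strings\rangle},
\]
where $v_T$ is the vector obtained by placing $v^{(i)}$ inside chunk $i$ for each $i \in T$ and zero elsewhere. Since the chunks are disjoint and each $v^{(i)}$ is non-zero, $v_T$ is non-zero for every non-empty $T$. Taking expectations and applying the defining property of a $\delta$-biased distribution, namely $|\mathbb{E}[(-1)^{\langle v, \strings\rangle}]| \le \delta$ for every non-zero $v$, yields
\[
\bigl|\Pr[\text{collision}] - 2^{-o}\bigr| \le 2^{-o}\sum_{\emptyset \neq T \subseteq [o]} \delta = (1 - 2^{-o})\delta \le \delta,
\]
which gives the claimed bound $2^{-o} + \delta$.
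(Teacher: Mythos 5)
Your proof is correct, and it is precisely the argument the paper treats as immediate (the paper offers no proof beyond ``this follows directly from the definition of $\delta$-bias''): the collision event is a conjunction of $o$ non-trivial linear tests on disjoint seed chunks, which the uniform distribution satisfies with probability exactly $2^{-o}$ and which a $\delta$-biased seed perturbs by at most $(1-2^{-o})\delta \le \delta$ via the Fourier expansion over non-empty $T$. You also rightly flag the one point needing care --- that the appended length encoding forces $u_x \neq u_y$ even when $|\stringx| \neq |\stringy|$ --- which is exactly why the definition concatenates the length in the first place.
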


Lastly, the next lemma summarizes the advantage of the inner product hash function in combination with a $\delta$-biased seed, namely that this bias translates directly to the exact same bias on the output distribution. This uses the above mentioned fact from \cite{NaorNaor} that $\delta$-bias also extends beyond variables to any set of linearly independent tests: 

\begin{lemma}\label{lem:independenceoflinearhash}
Consider $n$ pairs of binary strings $(\stringx_1,\stringy_1),\ldots,(\stringx_n,\stringy_n)$ where each string is of length at most $L$, and suppose $h$ is the inner product hash function for input length $L$ and any output length $o$. Suppose furthermore that $\strings$ is a random seed string of length at least $n \cdot 2 o L$ which is sampled independently of the $\stringx$ and $\stringy$ inputs and is cut into $n$ strings $\strings_1,\strings_2,\ldots,\strings_n$. Then the output distribution $(x_1,\ldots,x_n) = (h_{\strings_1}(\stringx_2) - h_{\strings_2}(\stringy_1), \ldots, h_{\strings_n}(\stringx_2) - h_{\strings_n}(\stringy_1))$ for a $\strings$ sampled from a $\delta$-biased distribution is $\delta$-statistically close to the output distribution for a uniformly sampled $\strings$ for which each $x_i$ is equal to zero if 
$\stringx_i= \stringy_i$ and independently uniformly random otherwise (which also implies $P[x_i = 0] = 2^{-o}$). 
\end{lemma}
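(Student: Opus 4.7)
The plan is to reduce the lemma to the Naor--Naor property cited earlier in the excerpt: any $\delta$-biased distribution is $\delta$-close in total variation to uniform under any collection of linearly independent $GF(2)$-linear tests. To apply this, I would exhibit each output bit as an explicit linear function of the seed $\strings$ over $GF(2)$ and verify that the resulting non-trivial linear functions are linearly independent.

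\textbf{Step 1: Linearise each coordinate.} Unfolding \Cref{def:innerproducthash}, the $j$-th bit of $h_{\strings_i}(\stringx_i)$ is the inner product $\bigl\langle \tilde{\stringx}_i,\, \strings_i[j\cdot 2L+1,\, j\cdot 2L+\tilde{l}_i]\bigr\rangle$ with $\tilde{\stringx}_i = (\stringx_i, |\stringx_i|)$ and $\tilde{l}_i = |\tilde{\stringx}_i| \le 2L$. By bilinearity over $GF(2)$, the $j$-th bit of the difference $x_i = h_{\strings_i}(\stringx_i) \oplus h_{\strings_i}(\stringy_i)$ equals $\bigl\langle \tilde{\stringx}_i \oplus \tilde{\stringy}_i,\, \strings_i[j\cdot 2L+1,\, j\cdot 2L+\tilde{l}_i]\bigr\rangle$, which is a linear function of the seed $\strings$. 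If $\stringx_i = \stringy_i$ then $\tilde{\stringx}_i \oplus \tilde{\stringy}_i = 0$, so this coordinate is deterministically $0$ under \emph{any} seed, matching the target distribution exactly and letting me focus on the remaining coordinates.

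\textbf{Step 2: Linear independence via disjoint support.} For each pair $i$ with $\stringx_i \ne \stringy_i$, the $o$ coefficient vectors (one per output bit) are supported in the disjoint windows $\strings_i[j\cdot 2L+1,\, (j+1)\cdot 2L]$ for $j = 0,\dots,o-1$ inside the block $\strings_i$ of length $2oL$; and the blocks $\strings_1,\dots,\strings_n$ themselves are disjoint pieces of $\strings$ by assumption. Thus across all pairs and all output bits, the coefficient vectors have pairwise disjoint supports, and for coordinates with $\stringx_i \ne \stringy_i$ the vector $\tilde{\stringx}_i \oplus \tilde{\stringy}_i$ is nonzero, so each support contains at least one nonzero entry. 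Vectors with pairwise disjoint supports that are each individually nonzero are automatically linearly independent over $GF(2)$.

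\textbf{Step 3: Invoke Naor--Naor.} By the extension of $\delta$-bias to linearly independent linear tests, the joint distribution of all nontrivial bits of $(x_1,\dots,x_n)$ under a $\delta$-biased $\strings$ is $\delta$-close in total variation to the uniform distribution on $\{0,1\}^k$, where $k$ is the total number of such bits. Coupling this with the deterministic agreement on the zero coordinates (which contribute nothing to the variation distance) yields $\delta$-closeness of the full output distribution to the target distribution described in the lemma, since the target is exactly: independent uniform on bits corresponding to $\stringx_i\ne\stringy_i$ and deterministically $0$ elsewhere.

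\textbf{Expected obstacle.} The only nonroutine step is the structural bookkeeping in Step~2 -- checking that the seed windows used by the $o$ output bits across all $n$ pairs are genuinely pairwise disjoint and that the resulting coefficient vectors are nonzero precisely when $\stringx_i\ne\stringy_i$. This is where the seed-length hypothesis $|\strings| \ge n\cdot 2oL$ is used in an essential way; once this disjointness is verified against \Cref{def:innerproducthash}, the remaining content of the lemma is a direct application of the cited Naor--Naor extension, with no further estimation required.
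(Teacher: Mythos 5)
Your proposal is correct and follows the same route the paper intends: each output bit of $x_i$ is a $GF(2)$-linear test of the seed, the tests across pairs and output positions have pairwise disjoint supports (hence are linearly independent whenever $\stringx_i\neq\stringy_i$, using that appending the length makes $\stringx\mapsto\tilde{\stringx}$ injective after zero-padding to a common window of length $2L$), and the Naor--Naor extension of $\delta$-bias to linearly independent linear tests gives the claimed $\delta$-closeness. The only detail worth making explicit is the zero-padding when $|\stringx_i|\neq|\stringy_i|$ so that both inner products are taken over the same window before applying bilinearity; otherwise the argument is complete.
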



\begin{algorithm}[htb!]
\caption{Coding Scheme for Oblivious \finalOnly{Adv.}\finalDel{Adversarial} Channels}
\begin{algorithmic}[1]
\algorithmfontsize
\Statex
\State $\Pi \gets$ $n$-round protocol to be simulated + final confirmation steps
\State $hash \gets $ inner product hash\finalOnly{ fam.}\finalDel{family from \Cref{def:innerproducthash}} with $o = \Theta(1)$ and $s = \Theta(n)$ \label{algline:defhash}

\Statex
\finalOnly{\State $r_c \gets \Theta(1)$; $r \gets \ceil{\sqrt{\frac{r_c}{\eps}}}$; $R_{total} \gets \ceil{n / r + 65 n\eps}$; $\Trans \gets \emptyset$}
\finalDel{\State Initialize Parameters: $r_c \gets \Theta(1)$; $r \gets \ceil{\sqrt{\frac{r_c}{\eps}}}$; \ $R_{total} \gets \ceil{n / r + 65 n\eps}$; \ $\Trans \gets \emptyset$}

\State Reset Status: $\verif,\error,\voteone,\votetwo \gets 0$ \label{line:resetstatusobliv}

\Statex

\State R = Robust Randomness Exchange($l = R_{total} \cdot s$, $\delta = 2^{-\Theta(\frac{n}{r}o)}$) \label{algline:randexchange} \finalDel{\Comment Requires $\Theta(n \sqrt{\eps})$ rounds}

\Statex

\For {$R_{total}$ iterations}

	\Statex				
	\State $\verif \gets \verif + 1$; $\VP \gets 2^{\floor{\log_2 \verif}}$\finalDel{; \ }\finalOnly{\State }$\MPone \gets \VP r \floor{\frac{|\Trans|}{\VP r}}$; $\MPtwo \gets \MPone - \VP r$												
	\Comment{{\bfseries  Verification Phase}}
	\State $\strings \gets s$ new preshared random bits	from $R$					\label{algline:hashingbegin}
	\State Send $(hash_{\strings}(\verif),hash_{\strings}(\Trans),hash_{\strings}(\Trans[1,\MPone]),hash_{\strings}(\Trans[1,\MPtwo]))$
	\State Receive $(H_{\verif}',H_{\Trans}',H_{\MPone}',H_{\MPtwo}')$;
\finalOnly{ 
	\State $h(.) = hash_{\strings}(.)$
	\State $(H_{\verif},H_{\Trans},H_{\MPone},H_{\MPtwo}) \gets {\scriptsize (h(\verif),h(\Trans),h(\Trans[1,\MPone]),h(\Trans[1,\MPtwo]))}$ \label{algline:hashingend}
}\finalDel{
	\State $(H_{\verif},H_{\Trans},H_{\MPone},H_{\MPtwo}) \gets (hash_{\strings}(\verif),hash_{\strings}(\Trans),hash_{\strings}(\Trans[1,\MPone]),hash_{\strings}(\Trans[1,\MPtwo]))$   \label{algline:hashingend} 
	\Statex
	}

	\If{$H_{\verif} \neq H_{\verif}'$}	\label{algline:ComputeAfterHash}
		\State $\error \gets \error + 1$
	\Else
		\If{$H_{\MPone} \in \{H_{\MPone}',H_{\MPtwo}'\}$}
			\State $\voteone \gets \voteone + 1$  \label{line:voteone}
		\ElsIf{$H_{\MPtwo} \in \{H_{\MPone}',H_{\MPtwo}'\}$}
			\State $\votetwo \gets \votetwo + 1$  \label{line:votetwo}
		\EndIf	
	\EndIf	
	
	\Statex																										
	\If {$\verif = 1$ and $H_{\Trans} = H_{\Trans}'$ and $\error = 0$}												\Comment{{\bfseries Computation Phase}}
		\State continue computation and transcript $\Trans$ for $r$ steps
		\State Reset Status: $\verif,\error,\voteone,\votetwo \gets 0$ \label{line:resetcomputation}
	\Else 
		\State do $r$ dummy communications
	\EndIf	

	\Statex				
	\If {$2 \error \geq \verif$}			\label{line:transitionerror}			\Comment{{\bfseries Transition Phase}}
		\State Reset Status: $\verif,\error,\voteone,\votetwo \gets 0$ \label{line:reseterror}
	\ElsIf {$\verif = \VP$ \ \ and \ \ $\voteone \geq 0.4 \cdot \VP$}     \label{line:transitionrollbackone}
		\State rollback computation and transcript $\Trans$ to position $\MPone$   \label{line:rollbackone}
		\State Reset Status: $\verif,\error,\voteone,\votetwo \gets 0$ \label{line:resetrollbackone}
	\ElsIf{$\verif = \VP$ \ \ and \ \ $\votetwo \geq 0.4 \cdot \VP$}			\label{line:transitionrollbacktwo}
		\State rollback computation and transcript $\Trans$ to position $\MPtwo$
		\State Reset Status: $\verif,\error,\voteone,\votetwo \gets 0$ \label{line:resetrollbacktwo}
	\ElsIf{$\verif = \VP$}
		\State $\voteone,\votetwo \gets 0$										\label{line:voteresetdouble}
	\EndIf
\EndFor

\Statex
\State Output the outcome of $\Pi$ corresponding to transcript $\Trans$	    \label{algline:ComputeEnd}

\end{algorithmic}
\label{alg:ComputeOblivious}
\end{algorithm}

\begin{algorithm}[htb!]
\caption{Coding Scheme for Fully Adversarial Channels}
\begin{algorithmic}[1]
\algorithmfontsize
\Statex
\State $\Pi \gets$ $n$-round protocol to be simulated + final confirmation steps
\State $hash_1 \gets $ inner product hash\finalOnly{ fam.}\finalDel{ family from \Cref{def:innerproducthash}} with $o_1 = \Theta(\log \frac{1}{\eps})$ and $s_1 = \Theta(n)$ 
\State $hash_2 \gets $ hash\finalOnly{ fam.}\finalDel{ family from \Cref{lem:hashes}} with $p_2 = 0.1$, $o_2 = \Theta(1)$, and $s_2 = \Theta(\log \log \frac{1}{\eps})$
\Statex
\finalOnly{\State $R_{total} \gets \ceil{n / r} + \Theta(n\eps)$; $r_c \gets \Theta(\log \log \frac{1}{\eps})$; $r \gets \ceil{\sqrt{\frac{r_c}{\eps}}}$; $\Trans \gets \emptyset$}
\finalDel{\State Initialize Parameters: $R_{total} \gets \ceil{n / r} + \Theta(n\eps)$; \ $r_c \gets \Theta(\log \log \frac{1}{\eps})$; \ $r \gets \ceil{\sqrt{\frac{r_c}{\eps}}}$; \ $\Trans \gets \emptyset$}

\State Reset Status: $\verif,\error,\voteone,\votetwo \gets 0$ \label{line:resetstatusadv}

\Statex

\State R = Robust Randomness Exchange($l = R_{total} \cdot s_1$, $\delta = 2^{-\Theta(\frac{n}{r})}$) \finalDel{\Comment Requires $O(n \sqrt{\eps})$ rounds}

\Statex

\For {$R_{total}$ iterations}

	\Statex				
	\State $\verif \gets \verif + 1$; $\VP \gets 2^{\floor{\log_2 \verif}}$\finalDel{; \ }\finalOnly{\State }$\MPone \gets \VP r \floor{\frac{|\Trans|}{\VP r}}$; $\MPtwo \gets \MPone - \VP r$												
	\Comment{{\bfseries  Verification Phase}}
	\State $\strings_1 \gets s_1$ new preshared random bits from $R$\finalDel{; \ }\finalOnly{\State }$\strings_2 \gets  s_2$ ``fresh'' random bits
	\State $hash(.) = hash_{2,\strings_2}(hash_{1,\strings_1}(.))$
	\State Send $(\strings_2,hash(\verif),hash(\Trans),hash(\Trans[1,\MPone]),hash(\Trans[1,\MPtwo]))$\finalDel{; \ }\finalOnly{\State }Receive $(\strings_2',H_{\verif}',H_{\Trans}',H_{\MPone}',H_{\MPtwo}')$;
\finalOnly{%
	\State $h'(.) = hash_{2,\strings_2'}(hash_{1,\strings_1}(.))$
  \State {\scriptsize $(H_{\verif},H_{\Trans},H_{\MPone},H_{\MPtwo}) \gets (h'(\verif),h'(\Trans),h'(\Trans[1,\MPone]),h'(\Trans[1,\MPtwo]))$}
}\finalDel{%
	\State $hash'(.) = hash_{2,\strings_2'}(hash_{1,\strings_1}(.))$
	\State $(H_{\verif},H_{\Trans},H_{\MPone},H_{\MPtwo}) \gets (hash'(\verif),hash'(\Trans),hash'(\Trans[1,\MPone]),hash'(\Trans[1,\MPtwo]))$
}
	\Statex

	\State {\bfseries Remaining Code as in Lines \ref{algline:ComputeAfterHash} to \ref{algline:ComputeEnd} in \refalgComputeOblivious}
	
\EndFor

\end{algorithmic}
\label{alg:ComputeAdv}
\end{algorithm}

\section{Analyses and Proofs of Correctness}\label{sec:proof}

\finalDel{Next, we give the proofs of correctness for \refalgComputeOblivious and for \refalgComputeAdv. }

\begin{theorem}\label{lem:mainoblivious}
Suppose any $n$-round protocol $\Pi$ using any alphabet $\Sigma$. \refalgComputeOblivious is a computationally efficient randomized coding scheme which given $\Pi$, with probability $1 - 2^{-\Theta(n \eps)}$, robustly simulates it over any oblivious adversarial error channel with alphabet $\Sigma$ and error rate $\eps$. The simulation uses $n (1 + \Theta(\sqrt{\eps}))$ rounds and therefore achieves a communication rate of $1 - \Theta(\sqrt{\eps})$.
\end{theorem}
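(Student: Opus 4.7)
I would extend the potential-function argument of \Cref{lem:simplealg} to handle (i) meeting-point backtracking, (ii) the scale/voting mechanism, and (iii) constant-size inner-product hashes seeded by pre-shared $\delta$-biased randomness. Following the warmup, I measure transcript agreement by $l^+ = \floor{(\text{length of the longest common prefix of } \Trans_A,\Trans_B)/r}$ and disagreement by $l^- = (|\Trans_A|+|\Trans_B|)/r - 2l^+$, take a base potential $\Phi_0 = l^+ - l^-$, and subtract a small linear correction whenever the pending state $(\verif,\error,\voteone,\votetwo)$ is out of sync between Alice and Bob, so that the potential remains well-defined mid-scale and monotone in the useful work done.

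\textbf{Per-iteration analysis.} I would case-split each iteration, keyed to the current scale $\VP$. First, a clean iteration starting from the synced state $\verif=1$ with matching transcripts extends both transcripts by one block and increases $l^+$ by $1$, exactly as in the warmup. Second, if $\Trans_A\neq\Trans_B$ but the two transcripts are at distance $l^-<2\VP$, the meeting-point construction guarantees a common meeting point at scale $\VP$; in every clean iteration inside the scale both parties' \voteone or \votetwo counters advance, and once the $0.4\VP$ threshold is reached at $\verif=\VP$ a correct rollback shrinks $l^-$ by $\VP$. Third, an incorrect rollback at scale $\VP$ requires at least $0.4\VP$ non-clean iterations (errors or hash collisions) inside the scale, so the amortised cost per unit of transcript reversed is $O(1)$. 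Fourth, the $\error$-based reset in \Cref{line:transitionerror} only fires after $\Omega(\verif)$ detected mismatches, again amortising any lost progress. The net effect is that each clean iteration contributes $+\Omega(1)$ to $\Phi$ averaged over its scale, while each iteration in which the channel corrupts a symbol or a hash collides costs $\Phi$ at most $O(1)$.

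\textbf{Bounding hash collisions.} Because the channel is oblivious, its corruption pattern is committed in advance and is independent of the random string produced by \refalgRobustRandomnessExchange; hence the pair of transcripts compared in iteration $i$ is a deterministic function of the pre-shared seed and the fixed noise pattern. Applying \Cref{lem:independenceoflinearhash} to the $R_{total}$ disjoint $s$-bit chunks of the $\delta$-biased seed shows that the vector of collision indicators is $\delta$-close in total variation to $R_{total}$ independent $\mathrm{Bernoulli}(2^{-o})$ variables. Choosing $o=\Theta(1)$ large enough that $2^{-o}$ is a suitable small constant, and $\delta = 2^{-\Theta((n/r)\,o)}\ll 2^{-\Theta(\eps n)}$ as in \Cref{algline:randexchange}, a Chernoff bound on these near-independent trials (with the $\delta$ slack absorbed into the failure probability) yields at most $O(\eps n)$ iterations suffering a hash collision, except with probability $2^{-\Theta(\eps n)}$.

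\textbf{Conclusion and main obstacle.} Combining: with at most $2\eps n$ errors and $O(\eps n)$ hash collisions, $\Phi$ grows by at least $R_{total} - O(\eps n) \geq n/r$, so both parties finish with $l^+\geq n/r$ and therefore agree on the full execution of $\Pi$. The round complexity is $R_{total}(r+r_c)$ plus the $O(l' + nH(\eps)) = O(n\sqrt{\eps})$ overhead of \refalgRobustRandomnessExchange, which with $r=\Theta(1/\sqrt{\eps})$ and $r_c=\Theta(1)$ gives $n(1+\Theta(\sqrt{\eps}))$ as claimed. The main technical difficulty I anticipate is the third case above: proving that \emph{every} unwanted rollback -- even those triggered by clever mixtures of errors, hash collisions, and stale state variables left over from a previous $\error$-reset -- costs the adversary $\Omega(\VP)$ corruption-or-collision events. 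This amortisation is exactly what lets the constant-size voting mechanism replace the logarithmic length information transmitted by \refalgSimpleCompute, and is therefore precisely what makes the optimal $1-\Theta(\sqrt{\eps})$ rate possible over a constant-size alphabet.
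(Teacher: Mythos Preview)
Your overall architecture matches the paper's, but there are two genuine gaps.

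\textbf{The hash-collision bound is wrong as stated.} You assert that the collision indicators are $\delta$-close to $R_{total}$ independent $\mathrm{Bernoulli}(2^{-o})$ variables and then Chernoff to get $O(\eps n)$ collisions. But $o=\Theta(1)$, so $R_{total}\cdot 2^{-o}=\Theta(n\sqrt{\eps})\gg n\eps$; a direct Chernoff over all $R_{total}$ iterations cannot possibly give $O(\eps n)$. The point you are missing is that a hash collision can only occur in a \emph{dangerous} iteration, i.e., one where the two parties' states already disagree, and the number $d$ of dangerous iterations itself depends on the number $h$ of hash collisions. The paper breaks this circularity by proving an \emph{upper} bound on the potential (\Cref{lem:totalpotential}): since $\Phi$ can never exceed $R+O(n\eps)$, combining with the per-iteration lower bounds forces $d\le (C^-/C^+ +1)h + O(n\eps)$. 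Only then does the concentration argument bite: if $d$ were large, $h/d$ would have to be a constant bounded away from $2^{-o}$, which has probability $2^{-\Theta(d)}$. Your proposal has no analogue of this potential upper bound and no mechanism to close the loop between $d$ and $h$. Relatedly, your invocation of \Cref{lem:independenceoflinearhash} is not literally valid, since the transcripts hashed in iteration $i$ depend on the seed chunks $\strings_1,\dots,\strings_{i-1}$; the lemma's hypothesis that the inputs are sampled independently of the seed fails.

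\textbf{The potential is underspecified in a way that matters.} ``$l^+-l^-$ minus a small linear correction when out of sync'' is not enough to make every transition non-decreasing. The paper's potential has two regimes with \emph{opposite signs} on $\verif_{AB}$ and $\error_{AB}$: when $\verif_A=\verif_B$, growing $\verif$ is progress and $\error$ is penalized; when $\verif_A\neq\verif_B$, growing $\verif$ is bad but $\error$ is good because it hastens a reset. More importantly, the paper introduces an auxiliary \emph{bad-vote count} $\BVC_{AB}$ (never known to the parties, only to the analysis) that is charged whenever a $\voteone/\votetwo$ increment is spurious or a legitimate match goes uncounted, and is weighted by the largest constant $C_6$. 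This is exactly the device that handles your ``main obstacle'': an incorrect meeting-point transition at scale $\VP$ is paid for by the $\Omega(\VP)$ drop in $\BVC$, and a single meeting-point transition when $\verif_A=\verif_B$ (only one party jumps) is shown to require $\BVC\ge \verif/4$. Your amortized ``$+\Omega(1)$ averaged over its scale'' sketch does not supply this bookkeeping, and without something like $\BVC$ the case analysis in \Cref{lem:potentialincreasetwo} (five transition cases, with the constants $C_2<\dots<C_6$ chosen in order) does not go through.
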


\begin{theorem}\label{lem:mainadv}
Suppose any $n$-round protocol $\Pi$ using any alphabet $\Sigma$. \refalgComputeAdv is a computationally efficient randomized coding scheme which given $\Pi$, with probability $1 - 2^{-\Theta(n \eps)}$, robustly simulates it over any fully adversarial error channel with alphabet $\Sigma$ and error rate $\eps$. The simulation uses $n (1 + \Theta(\sqrt{\eps \log \log \frac{1}{\eps}})$ rounds and therefore achieves a communication rate of $1 - \Theta(\sqrt{\eps \log \log \frac{1}{\eps}})$.
\end{theorem}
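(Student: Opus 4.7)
The plan is to mirror the proof of \Cref{lem:mainoblivious}, reusing the potential function $\Phi = l^+ - l^-$ and the meeting-point based backtracking analysis, and to isolate the one genuinely new ingredient: controlling hash collisions against a \emph{fully adaptive} adversary that sees the entire channel transcript, including the preshared randomness $R$ produced by \refalgRobustRandomnessExchange. The potential step lemmas (analogues of \Cref{lem:simplenoerrorpotential} and \Cref{lem:simpleerrorpotential}) carry over verbatim once ``hash collision'' is reinterpreted as a collision of the composed hash $hash(\cdot) = hash_{2,\strings_2}(hash_{1,\strings_1}(\cdot))$. With $r_c = \Theta(\log \log \frac{1}{\eps})$, $r = \ceil{\sqrt{r_c/\eps}}$, and $R_{total} = \ceil{n/r} + \Theta(n\eps)$, a direct round count gives $R_{total} \cdot (r + \Theta(r_c))$ simulation rounds plus the $O(nH(\eps))$ rounds of the Robust Randomness Exchange, which is $o(n\sqrt{\eps \log\log \frac{1}{\eps}})$ for small $\eps$. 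Optimizing $r$ yields a total of $n(1 + \Theta(\eps r + r_c/r)) = n(1 + \Theta(\sqrt{\eps \log\log \frac{1}{\eps}}))$. The rate claim thus reduces to showing that the number of bad iterations (errors plus hash collisions) is $O(n\eps)$ except with probability $2^{-\Theta(n\eps)}$.

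The two-layer hash $hash_2 \circ hash_1$ is designed precisely to enable this collision bound under adaptivity. The inner layer $hash_{1,\strings_1}$, based on \Cref{def:innerproducthash}, reduces the transcript to a $\Theta(\log \frac{1}{\eps})$-bit fingerprint using $\strings_1$ drawn from the $\delta$-biased preshared string $R$ with $\delta = 2^{-\Theta(n/r)}$. The outer layer $hash_{2,\strings_2}$ then compresses that fingerprint with a fresh $s_2 = \Theta(\log \log \frac{1}{\eps})$-bit seed $\strings_2$ drawn from the family of \Cref{lem:hashes} with $p_2 = 0.1$ and $o_2 = \Theta(1)$. The outer layer is what keeps the verification message to $\Theta(\log \log \frac{1}{\eps})$ bits; the inner layer is what withstands the adaptivity.

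For the inner layer, I would handle the adaptive adversary via a derandomization-style union bound in the spirit of \cite{BN13,GH13}. The number of effective adversarial strategies is bounded by the number of choices of which $\leq \eps N$ rounds to corrupt and to what symbol, i.e., $\binom{N}{\eps N} |\Sigma|^{\eps N} = 2^{O(n H(\eps))}$. For any one fixed such strategy, the transcript pair $(\Trans_A^{(i)},\Trans_B^{(i)})$ compared in iteration $i$ becomes a deterministic function of $R$, so applying \Cref{lem:independenceoflinearhash} iteration by iteration yields inner-hash collision probability at most $2^{-o_1} + \delta$ in each iteration with $\Trans_A^{(i)} \neq \Trans_B^{(i)}$, with near-independence across iterations up to the $\delta$-bias slack. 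Choosing $o_1 = C \log \frac{1}{\eps}$ for a sufficiently large constant $C$, a Chernoff-type bound gives that more than $O(n\eps)$ inner collisions occur for a fixed strategy with probability at most $2^{-\Omega(n\eps \log(1/\eps))}$, which dominates the $2^{O(n H(\eps))}$ union bound and leaves a net failure probability of $2^{-\Theta(n\eps)}$. The outer layer is then handled exactly as in \Cref{lem:simplealgconstanthashvalue}: the seed $\strings_2$ is sampled after the two transcripts it will hash are already fixed by history, so the self-referential empirical-average argument applies and shows that outer collisions also total $O(n\eps)$ with the same probability, feeding back into the potential argument to complete the proof.

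The main obstacle is the adaptive-adversary step for the inner hash: because the adversary sees all of $R$ at the outset and can adaptively steer the transcripts via earlier corruptions, $\strings_1^{(i)}$ is not genuinely ``fresh'' with respect to the strings it ends up hashing, and a naive per-iteration collision bound fails. The $\delta$-biased structure of $R$ together with \Cref{lem:independenceoflinearhash} lets us treat each fixed adversarial strategy as if the seeds were uniform, and the coarse $2^{O(n H(\eps))}$ enumeration of effective strategies is only absorbable because the inner hash carries $\Theta(\log \frac{1}{\eps})$ bits of resolution. Transmitting this wider inner fingerprint once per iteration, compressed through an $o_2 = \Theta(1)$-bit outer hash whose fresh seed costs $\Theta(\log\log \frac{1}{\eps})$ bits, is precisely the origin of the extra $\sqrt{\log\log \frac{1}{\eps}}$ factor in the rate compared to \Cref{lem:mainoblivious}.
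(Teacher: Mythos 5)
Your overall architecture matches the paper's. The genuinely new ingredient relative to \Cref{lem:mainoblivious} is exactly the one you isolate, and you handle it the way the paper does: the inner hash $hash_1$ with output $o_1=\Theta(\log\frac{1}{\eps})$, keyed by the $\delta$-biased preshared string and analyzed for a \emph{fixed} adversarial strategy via \Cref{lem:independenceoflinearhash}, followed by a union bound over the $\binom{2n}{2n\eps}\cdot|\Sigma|^{O(n\eps)}=2^{O(nH(\eps))}$ effective strategies, which is absorbed because the per-strategy failure probability is $\eps^{\Theta(n\eps)}$ (this is \Cref{lem:hashonecollisions}); and the outer hash $hash_2$ with a fresh $\Theta(\log\log\frac{1}{\eps})$-bit seed sampled after the strings it hashes are fixed, handled by the self-referential dangerous-rounds argument (this is \Cref{lem:hashtwocollisions}). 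Your round count and your explanation of where the extra $\sqrt{\log\log\frac{1}{\eps}}$ factor enters are also correct; the only quibble there is that the Robust Randomness Exchange costs $\Theta(n\sqrt{\eps})$ rounds (it must transmit $l'=\Theta(\log\frac{1}{\delta})=\Theta(n/r)$ truly random bits), not $O(nH(\eps))$, but this is still negligible against $n\sqrt{\eps\log\log\frac{1}{\eps}}$.

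The genuine gap is the potential function. The proof of \Cref{lem:mainoblivious} does \emph{not} use $\Phi=l^+-l^-$, and that potential cannot work for \refalgComputeOblivious or \refalgComputeAdv: under meeting-point backtracking a party that has detected a mismatch spends up to $\VP$ iterations merely incrementing $\verif$ and accumulating votes before it is allowed to roll back, and during all of those error- and collision-free iterations neither transcript changes, so $l^+-l^-$ does not increase by one per good iteration as your step lemmas require. Moreover, the adversary can corrupt votes to trigger spurious rollbacks or to desynchronize $\verif_A$ and $\verif_B$, and the cost of an incorrect length-$d$ rollback must be charged against the $\Theta(d)$ corruptions needed to manufacture it. The paper therefore uses the augmented potential combining $l^+$, $l^-$, $\verif_{AB}$, $\error_{AB}$, and the bad-vote counts $\BVC_{AB}$ with hierarchically chosen constants $C_2<\dots<C_6$ and a case split on whether $\verif_A=\verif_B$; the analogue of your step lemmas is the five-case analysis of \Cref{lem:potentialincreasetwo}, and \Cref{lem:totalpotential} (an upper bound on how fast $\Phi$ can grow) is additionally needed to close the dangerous-rounds argument you invoke for both hash layers. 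None of this ``carries over verbatim'' from \Cref{lem:simplenoerrorpotential,lem:simpleerrorpotential}, whose proofs rely on the parties learning each other's exact transcript lengths — precisely the $\Theta(\log n)$-bit luxury that the small-alphabet algorithms must do without.
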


\paragraph{Proof Outline}
We use the same proof structure as already introduced in \Cref{sec:simpleproof}. In particular, we show that the algorithm terminates correctly by defining a potential $\Phi$. We prove that any iteration without an error or hash collision increases the potential by at least a constant while any iteration with an error or hash collision reduces the potential at most by some constant. We do this in two steps: First we show that this statement is true for the computation and verification phase of each iteration only. We then show that any transition in the transition phase does not decrease the potential. As a last step, we bound the number of hash collisions to be of the same order as the number of errors. This is the sole part in which the analyses of \refalgComputeOblivious and \refalgComputeAdv differ. 


\paragraph{Potential}
The potential $\Phi$ is based on the variables $\verif$, $\error$, and $\Trans$ of both parties. For these variables we use a subscript $A$ or $B$ to denote the value of the variable for Alice and Bob respectively. We also denote with the subscript $AB$ the sum of both these variables, e.g., $k_{AB} = k_A + k_B$. To define $\Phi$ we need the following intermediate quantities:

As before, we define the amount of agreement $l^{+}$ and disagreement $l^{-}$ between the two paths computed at Alice and Bob as 
\finalDel{$$l^{+} = \floor{\frac{1}{r} \max \left\{l' \in [1,\min\{|\Trans_A|,|\Trans_B|\}] \ \text{s.th.}\  \Trans_A[1, l'] = \Trans_B[1, l']\right\}}\ \ \text{and} \ \ l^{-} = \frac{|\Trans_A|+|\Trans_B|}{r} - 2 l^+.$$}
\finalOnly{$l^{+} = \floor{\frac{1}{r} \max \left\{l' \ \text{s.th.}\  \Trans_A[1, l'] = \Trans_B[1, l']\right\}}$ and $l^{-} = \frac{|\Trans_A|+|\Trans_B|}{r} - 2 l^+.$}

For sake of the analysis we also define two variables $\BVC_A$ and $\BVC_B$ which count the contribution of hash collisions and corruptions to $\voteone$ and $\votetwo$ at Alice and Bob. In any iteration in which $\voteone$ of either party increases in \Cref{line:voteone} without $\Trans[1,\MPone]$ matching either $\Trans[1,\MPone]$ or $\Trans[1,\MPtwo]$ of the other party we count this as a bad vote and increase \emph{both} $\BVC_A$ and $\BVC_B$. Similarly, we increase both $\BVC$ values if $\votetwo$ of a party increases in \Cref{line:votetwo} without $\Trans[1,\MPtwo]$ matching either $\Trans[1,\MPone]$ or $\Trans[1,\MPtwo]$ of the other party. On the other hand, if one such match occurs but the corresponding vote does not increase, e.g., due to a corruption, then we call this an uncounted vote and also increase $\BVC_A$ and $\BVC_B$ by one. With every status reset (\Cref{line:reseterror,line:resetrollbackone,line:resetrollbacktwo}) we also set the $\BVC$ count of this party to be zero. We remark that the $\BVC$ values are not known to either party; they are merely used to facilitate our analysis.


To weight the various contributions to the potential we use the constants 
\finalDel{$$1 < C_2 < C_3 < C_4 < C_5 < C_6,$$}
\finalOnly{$1 < C_2 < C_3 < C_4 < C_5 < C_6,$}
which are chosen such that $C_i$ is sufficiently large depending only on $C_j$ with $j<i$.
\finalDel{

}%
The potential $\Phi$ is now defined to be %
\finalOnly{ \\$ \mbox{}\ \ l^+ - C_3 \cdot l^- \,\, + \,\ \ \;\ C_2 \cdot \verif_{AB} - C_5 \cdot \error_{AB} - \,2 C_6 \cdot \BVC_{AB}$\\ if $\verif_A = \verif_B$ and\\    $\mbox{}\ \ l^+ - C_3 \cdot l^- - 0.9 C_4 \cdot \verif_{AB} + C_4 \cdot \error_{AB} - \,\; C_6 \cdot \BVC_{AB}$ otherwise.}
\finalDel{%
as follows:
\shortOnly{
$\Phi = \twopartdef%
{l^+ - C_3 \cdot l^- + \ \ \;\ C_2 \cdot \verif_{AB} - C_5 \cdot \error_{AB} - 2 C_6 \cdot \BVC_{AB}}{\verif_A = \verif_B}
{l^+ - C_3 \cdot l^- - 0.9 C_4 \cdot \verif_{AB} + C_4 \cdot \error_{AB} - \,\; C_6 \cdot \BVC_{AB}}{\verif_A \neq \verif_B}$
}
\fullOnly{
$$\Phi = \twopartdef%
{l^+ \ - \  C_3 \cdot l^- \ + \ \ \ \;\ C_2 \cdot \verif_{AB} \ - \ C_5 \cdot \error_{AB} \ - \ 2 C_6 \cdot \BVC_{AB}}{\verif_A = \verif_B}
{l^+ \ - \  C_3 \cdot l^- \ - \ 0.9 C_4 \cdot \verif_{AB} \ + \ C_4 \cdot \error_{AB} \ - \ \,\; C_6 \cdot \BVC_{AB}}{\verif_A \neq \verif_B}$$
}}

\finalDel{\paragraph{Proofs}}

\begin{lemma}\label{lem:potentialincrease1}
In every computation and verification phase the potential decreases at most by a fixed constant, regardless of the number of errors and hash collisions. Furthermore, in the absence of an error or hash collision  the potential strictly increases by a at least one. 
\end{lemma}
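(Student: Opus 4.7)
The plan is to fix an iteration and show, by case analysis on the starting configuration and on which parties execute the computation step, that $\Phi$ changes by at most a bounded constant and increases by at least $1$ in the absence of errors and hash collisions. Write $\Delta \Phi$ for the change in $\Phi$ from before the verification phase to after the computation phase (deferring transition effects to the next lemma). The six ingredients of $\Phi$ (namely $l^+$, $l^-$, $\verif_{AB}$, $\error_{AB}$, $\BVC_{AB}$, and which of the two branches defining $\Phi$ is active) are tracked separately.

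The first claim, that $|\Delta \Phi|$ is bounded by a constant, follows from deterministic per-iteration bounds. At each party the transcript grows by at most one $r$-block, so $l^+$ changes by at most $+1$ and $l^-$ by at most $+2$; the variable $\verif_{AB}$ changes by exactly $+2$ from the increment, minus at most $1$ per party that computes (which is only possible for a party whose pre-iteration $\verif$ was $0$), giving a net change in $[0,+2]$; $\error_{AB}$ changes by at most $+2$ via \Cref{algline:ComputeAfterHash}, possibly offset by a reset; and $\BVC_{AB}$ increases by at most $4$ per iteration, since each party casts at most one vote and each bad or uncounted vote contributes $2$ to $\BVC_{AB}$. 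Combining with the coefficients $C_2, \ldots, C_6$ yields the constant bound immediately.

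For the error- and collision-free direction I would split on whether $\verif_A = \verif_B$ at the start. If both equal $0$ and $\Trans_A = \Trans_B$, both $H_\verif$ and $H_\Trans$ match at both parties, both execute the computation step, $l^+$ grows by exactly $1$, and every status variable resets, giving $\Delta \Phi = +1$. If $\verif_A = \verif_B > 0$, or the two are equal but transcripts differ, then no party computes and the equal-branch formula gains $+2 C_2$ from the $\verif_{AB}$ increment, which exceeds $1$ for $C_2$ chosen at least $1$. In the case $\verif_A \neq \verif_B$ entering the iteration, even without any corruption both parties find $H_\verif \neq H_\verif'$ (there is no hash collision by hypothesis) and both increment $\error$; in the otherwise-branch formula the change is $-0.9 C_4 \cdot 2 + C_4 \cdot 2 = +0.2 C_4$ from the $\verif_{AB}$ and $\error_{AB}$ contributions alone, which is at least $1$ for $C_4$ sufficiently large, with transcript and $\BVC$ contributions vanishing.

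The main obstacle is the branch-switching behaviour of $\Phi$: when exactly one party resets via a computation step, the relation $\verif_A = \verif_B$ can flip between the pre- and post-iteration states, and one must compare two different formulas evaluated at two different status vectors. Here the $-0.9 C_4 \, \verif_{AB}$ term in the otherwise-branch, together with the compensating $+C_4 \, \error_{AB}$ term, must be shown to absorb the jump exactly. This forces the ordering $1 < C_2 < C_3 < C_4 < C_5 < C_6$ where each $C_i$ is chosen sufficiently large depending only on the smaller constants: $C_5$ and $C_6$ dominate the worst-case damage from an error or bad vote, $C_4$ governs the catch-up dynamics across the branch switch, $C_3$ ensures transcript divergence is penalized more heavily than progress rewards on $l^+$, and $C_2$ is kept small enough that the equal-branch formula stays bounded but at least $\tfrac{1}{2}$ to drive the positive direction. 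Plugging in the worst-case per-iteration bounds then gives both halves of the lemma.
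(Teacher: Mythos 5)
Your proposal is correct and follows essentially the same route as the paper: a crude per-iteration constant bound on each ingredient of $\Phi$ for the first claim, and a case split on $\verif_A = \verif_B$ versus $\verif_A \neq \verif_B$ yielding the $+1$ (both compute), $+2C_2$, and $+0.2\,C_4$ gains for the second. The ``branch-switching obstacle'' you flag is actually moot: in an error- and collision-free iteration a party can only compute if $\verif$ was $0$ and the transcripts agree, conditions that are symmetric, so exactly one party resetting cannot occur there, and in iterations with errors the constant bound from the first claim already covers any branch flip.
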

\begin{proof}
All quantities on which the potential depends change at most by a constant during any computation and verification phase. The maximum potential change is therefore at most a constant. Now we consider the case that no error or hash collision happened. In this case, the $\BVC_{AB}$ value does not change. Furthermore, computation only happens if $\Trans_A = \Trans_B$ which implies that the $l^+ - C_3 \cdot l^-$ part of the potential does not decrease. Lastly, both $\verif_A$ and $\verif_B$ increase by one and if they are not equal $\error_{A}$ and $\error_{B}$ increase by one, too. In the first case the increase of $\verif_{AB}$ leads to a total potential increase of $2C_2 > 1$ in the later case the increase of $\verif_{AB}$ and $\error_{AB}$ leads to a total potential change of $2(- 0.9 C_4 + C_4)$ which is at least one for sufficiently large $C_4$. The potential therefore strictly increases by at least one in the computation and update phase when no error or hash collision happens.
\end{proof}

\begin{lemma}\label{lem:potentialincreasetwo}
In every iteration the potential decreases at most by a fixed constant, regardless of the number of errors and hash collisions. Furthermore, in the absence of an error or hash collision the potential strictly increases by at least one. 
\end{lemma}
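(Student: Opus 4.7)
The plan is to reduce to \Cref{lem:potentialincrease1} by showing that the transition phase, considered on its own, (i) decreases $\Phi$ by at most an additive constant no matter what the adversary does, and (ii) does not decrease $\Phi$ at all when the current iteration is free of errors and hash collisions. Combined with the $+1$ guaranteed in the computation and verification phase of \Cref{lem:potentialincrease1}, this yields the claim. First I would note that the only quantities the transition phase can change are $\verif_{AB}$, $\error_{AB}$, $l^+$, $l^-$, and $\BVC_{AB}$ (the vote counters $\voteone,\votetwo$ do not appear in $\Phi$), and that any single transition changes each of these by $O(\VP)=O(\verif)$. This already bounds the potential change by a constant multiple of $\verif$ per iteration; the real content is to show that the large negative contributions from a big $\verif$ get canceled by the simultaneous drop in $C_2\verif_{AB}$ (or $-0.9C_4\verif_{AB}$) and by new terms in $\BVC_{AB}$ that must appear whenever a rollback is incorrect.

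Next I would do a case analysis, branch by branch, for each party independently. For the error-based reset (\Cref{line:reseterror}) the identity $2\error\geq \verif$ means the drop in the $\verif$-term of $\Phi$ is compensated (and then some) by the drop in the $\error$-term, using the hierarchy $C_5,C_4\gg C_2$; also $\BVC$ drops to zero, and we pay for that once and for all when we increment $\BVC_{AB}$, as explained below. For the vote-reset branch (\Cref{line:voteresetdouble}) nothing in $\Phi$ changes. The meaty cases are the two rollbacks (\Cref{line:rollbackone} and its $\MPtwo$ analog): a rollback by $\VP r$ steps either (a) shortens $|\Trans|$ inside the disagreement region, decreasing $l^-$ by $\VP$ and leaving $l^+$ unchanged, or (b) cuts into the agreement region, in which case $l^+$ drops by at most $\VP$ and $l^-$ may also change by $O(\VP)$. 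In case (a) the $l^+-C_3l^-$ part of $\Phi$ actually goes up by $C_3\VP$, easily absorbing the $O(\VP)$ loss from resetting $\verif_{AB}$ and the bounded drop due to the $\Phi$-formula switch when $\verif_A\neq\verif_B$. In case (b) I would argue that a rollback can only fire after $0.4\VP$ votes, so if the rollback is to an \emph{incorrect} meeting point then at least $0.4\VP - O(\text{corruptions in this phase})$ of those votes had the corresponding party's $\MPone$ (respectively $\MPtwo$) hash failing to match any of the other party's two meeting-point hashes, i.e., these are bad votes, each of which already added $1$ to $\BVC_{AB}$. Thus $\BVC_{AB}$ has been charged $\Omega(\VP)$ since the last reset, and with $C_6$ chosen large relative to $C_3$, the $-C_6\BVC_{AB}$ term in $\Phi$ has already paid for the $+C_3\VP$ worst-case swing in $l^+ - C_3 l^-$ produced by the rollback. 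The formula switch between the two cases of $\Phi$ (when a rollback desynchronizes $\verif_A$ and $\verif_B$) shifts the $\verif$-coefficient from $+C_2$ to $-0.9C_4$; since $\verif_{AB}=O(\VP)$ and $\error_{AB}=O(\VP)$ at the moment of the switch, this is an $O(\VP)$ change and is likewise dominated by the $\BVC$ bookkeeping (or by the $l^-$ decrease in case (a)).

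For part (ii), consider an iteration in which both sent verification packets are delivered without corruption and neither party experiences a hash collision. Then the four hash equalities that gate Lines 18--24 are accurate: $\voteone$ and $\votetwo$ are incremented only when the underlying transcripts genuinely match at the indicated meeting points, and never otherwise, so no bad or uncounted votes are created and $\BVC_{AB}$ is unchanged. Since in this iteration $\verif_A=\verif_B$ both before and after the transition, and any rollback that fires falls into case (a) above (hitting a truly common meeting point and strictly decreasing $l^-$ by $\VP$), the transition phase contributes $\geq 0$ to $\Phi$. Adding this to the $+1$ from \Cref{lem:potentialincrease1} gives strict increase by at least one.

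The main obstacle I expect is case (b) of the rollback analysis, specifically the accounting that turns ``fraction $0.4$ of accumulated votes were bad'' into a clean $\Omega(\VP)$ increase of $\BVC_{AB}$ that has \emph{already} been deducted from $\Phi$ in an earlier iteration. This requires the $\BVC$ contribution to be monotonically paid off up-front as votes are cast, so that by the time a spurious rollback is triggered the potential has pre-paid the $C_3\VP$ it is about to lose; choosing $C_6$ large enough relative to $C_3$, and verifying that no double-counting occurs across the two parties' separate transition branches, is the delicate bookkeeping step.
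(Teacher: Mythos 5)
Your overall strategy is the paper's: reduce to \Cref{lem:potentialincrease1}, then argue that the transition phase essentially never decreases $\Phi$, with the $\BVC$-bookkeeping (and a large $C_6$) absorbing the damage of spurious rollbacks and the $C_3 l^-$ gain rewarding correct ones. The mechanism you identify as delicate is indeed the heart of the paper's proof. However, there are two concrete gaps in your case analysis.

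First, your case (a) claims that a correct rollback's gain of $C_3\VP$ "easily absorb[s] $\ldots$ the bounded drop due to the $\Phi$-formula switch." Under the paper's constant hierarchy this fails: when only one party transitions out of a state with $\verif_A=\verif_B$, the formula switches and the $\verif_{AB}$-coefficient changes from $+C_2$ to $-0.9C_4$, costing up to $\Theta(C_4)\cdot\verif$; since $C_4$ is chosen large relative to $C_3$, the $C_3\VP\le C_3\verif$ gain from shrinking $l^-$ cannot cover this. The paper's fix is an observation your proposal applies only to \emph{incorrect} rollbacks: if $\verif_A=\verif_B$ held for the last $\verif$ iterations, then \emph{either both or neither} party should have crossed the $0.4\VP$ vote threshold, so a lone meeting-point transition -- even to a genuinely common meeting point -- forces $\BVC\ge\verif/4$, and it is the reset of $\BVC$ to zero \emph{in that same iteration} (a gain of at least $C_6\verif/4$) that dominates everything. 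Relatedly, your "pre-paid" phrasing should be restated per-iteration: the lemma bounds the drop in \emph{each} iteration, so what saves the rollback iteration is not that $\Phi$ decreased earlier, but that zeroing $\BVC$ at the rollback returns $C_6\cdot\Omega(\VP)$ right then.

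Second, your proof of part (ii) assumes that in an error- and collision-free iteration $\verif_A=\verif_B$ holds before and after the transition. That is false in general: earlier corrupted iterations can leave the parties desynchronized ($\verif_A\ne\verif_B$, $l^->0$), and a clean iteration may still see exactly one party perform an error or meeting-point transition. The paper devotes its first case to this situation, bounding the change by $(0.9C_4-0.5C_4-C_3-1)\verif_A-0.5C_4$ via $\error_A\le\verif_A/2+0.5$, and explicitly notes the sub-case $\verif_A=1$ where the transition phase alone \emph{does} decrease $\Phi$ by a small constant, which must then be charged against the $+1$ from the preceding verification/computation phase. Your argument needs this case to establish the "strictly increases by at least one" claim for desynchronized but clean iterations.
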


\newcommand{\proofpotentialincreasetwo}{
\shortOnly{\begin{proof}[Proof of \Cref{lem:potentialincreasetwo}]}\fullOnly{\begin{proof}}
Given \Cref{lem:potentialincrease1} it suffices to show that a transition phase never decreases the potential. We show exactly this, except for one case, in which the potential decreases by a small constant. In case of the iteration being error and hash collision free this constant is shown to be less than the increase of the preceding computation and verification phase.

We call a transition due to \Cref{line:transitionerror} an Error Transition and any transition due to \Cref{line:transitionrollbackone} or \Cref{line:transitionrollbacktwo} a Meeting Point Transition. We denote with $l^+$,$l^-$,$\verif_A$,$\error_A$,$\BVC_A$,$\verif_B$,$\error_B$,$\BVC_B$ the values before any transition and denote with $l'^+$,$l'^-$,$\verif_A'$,$\error_A'$,$\BVC_A'$,$\verif_B'$,$\error_B'$,$\BVC_B'$ the values after the transition. We also use a $\Delta$ in front of any variable to denote the change of value to this variable during the transition phase. 

We now make the following case distinction according to which combination of transition(s) occurred in the iteration at hand and whether or not the parties agreed in their $\verif$ parameter before the transition:

\begin{itemize}

\item If $\verif_A \neq \verif_B$ and exactly one error or meeting point transition occurred then the disagreement in the $\verif$ parameter remains, that is, $\verif_A' \neq \verif_B'$. We assume, without loss of generality, that Alice does the transition. The bad vote count $\BVC_{AB}$ never increases in an error and collision free iteration so any change in $\BVC_{AB}$ only increases the potential. Furthermore, since the error counts increase at most by one per round and since after any round $2\error < \verif$ holds we have $\error_A \leq \verif_A/2 + 0.5$. Lastly, all quantities measured in the potential change by at most $\verif_A$. This leads to an overall potential increase of at least $(0.9 C_4 - 0.5 C_4 - C_3 - 1)\verif_A - 0.5 C_4$. This is larger than one for $\verif_A > 1$ and a sufficiently large $C_4$. For $\verif_A=1$ the difference can be negative. However, in this case, Alice had a reset status in the directly preceding iteration and is in the same position except for possibly a one block shorter transcript at the end of this iteration while Bob increased $\error_{B}$ and $\verif_{B}$. This leads to an overall potential increase of at least $0.1 C_4 - C_3 - 1$ which is at least one for a sufficiently large $C_4$.


\item If $\verif_A \neq \verif_B$ and any two transitions occurred then $\verif_A' = \verif_B' = \BVC_{A}' = \BVC_B' = 0$ which makes the potential exactly $l'^+ - C_3 \cdot l'^-$ after the transition. The contribution of $\Delta\BVC_{AB} \leq 1$ only increases the potential and it therefore suffices to show that the contributions of $\Delta\verif_{AB}$ and $\Delta\error_{AB}$ are positive. As before we have $\error_A \leq \verif_A/2 + 0.5$ and $\error_B \leq \verif_B/2 + 0.5$ and therefore also $\error_{AB} \leq 0.5 \verif_{AB} + 1$. The overall potential increase is thus at least $0.9 C_4 \verif_{AB} - C_4 \error_{AB} \geq C_4 (0.9 \verif_{AB} - (0.5 \verif_{AB} + 1)) = C_4 (0.4 \verif_{AB} - 1) \geq 1$ where the last inequality follows for large enough $C_4$ because $\verif_{AB} \geq 3$.

\item If $\verif_A = \verif_B$ and at least one error transitions occurred then the potential change is dominated by the reduction or re-weighting of the $\BVC_{AB}$ count or by the reduction in the $\error$ variable(s) of the transitioning party or parties. In particular, the $\BVC$ of a party does not increase and is never weighted higher before the transition than after. Any $\BVC$ change therefore only affects the potential positively. The $\error$ count of the party with the error transition on the other hand is at least $0.5 \verif$ which leads to a potential change of at least $0.5 \verif C_5$. All other quantities influencing the potential are changing at most by $3\verif$ while being associated with smaller constants. This guarantees an overall potential change of at least one for a sufficiently large $C_5$.  

\item If $\verif_A = \verif_B$ and one meeting point transitions occurred alone then both parties had the same $\verif$ count for the last $\verif$ iterations and either both or none of the parties should have transitioned. This guarantees that the $\BVC$-count of the transitioning party is at least $\verif/4$ which guarantees that the overall potential change is dominated by the $C_6 \verif /4$ increase due to the new $\BVC$ count. 

\item The last case is $\verif_A = \verif_B$ with two meeting point transitions. If $l'^{-} \neq 0$ or if $\verif_A = \verif_B \geq 4 l^{-}$ then both parties should have not transitioned or transitioned to a common meeting point since over $l^{-}$ iterations. Both situations can therefore only arise if $\BVC_{AB} \geq 0.4 l^{-}$ in which case the potential increase due to the reduction in the $\BVC$ count dominates any other potential changes. If, on the other hand, $l'^{-} = 0$ and $\verif_A = \verif_B \leq 4 l^{-}$ then the total potential difference is at least $C_3 l^- - (C_2 + 1) \verif_{AB} \geq (C_3/4 - 2 C_2 - 2) \verif$ and therefore at least one for large enough $C_3$. 

\end{itemize}

This completes the proof for an overall increase in potential for any iteration in which no error or hash collision occurred while for iterations with an error or hash collision at most a constant drop in potential can come from the verification and computation phase.  
\end{proof}
}\fullOnly{\proofpotentialincreasetwo}

Next, we show that our hash function families and the randomness used in our algorithms is strong enough to ensure that the total number of hash collisions is small, namely comparable to the number of errors. This allows us to treat any iteration with a hash collisions as adversarially corrupted and thus equivalent to an iteration with errors. 

We start by showing that the potential $\Phi$ cannot grow too fast. In particular, it grows naturally by one per iteration when a correct computation step is performed. On the other hand, any corruption cannot increase this by more than a constant per error:

\begin{lemma}\label{lem:totalpotential}
The total potential $\Phi$ after $R$ iterations is at most $R + 20 C_2 n \eps$. 
\end{lemma}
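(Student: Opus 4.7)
The plan is to upper bound $\Phi(R)$ by examining its defining expression and bounding each piece separately. First, I would observe that every term in $\Phi$ preceded by a minus sign, namely $-C_3 l^-$, $-C_5 \error_{AB}$ (or $-0.9 C_4 \verif_{AB}$), and the $\BVC_{AB}$ terms, is non-positive and can therefore be dropped for an upper bound. This reduces the task to bounding $l^+(R)$ together with the positive state-dependent contribution, which is $C_2 \verif_{AB}(R)$ when $\verif_A(R) = \verif_B(R)$ and $C_4 \error_{AB}(R)$ when $\verif_A(R) \neq \verif_B(R)$.

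The first term is immediate: since each iteration extends each party's transcript by at most one block of length $r$, we have $l^+(R) \leq R$. For the other two I would argue that both $\verif_{AB}(R)$ and $\error_{AB}(R)$ are of order the total number of errors plus hash collisions, which is $O(n\eps)$. For $\error_{AB}(R)$, I would note that $\error$ is incremented only when $H_\verif \neq H'_\verif$, which requires either a transmission corruption of the hash or the state $\verif_A \neq \verif_B$ without a hash collision masking the mismatch; both causes can be charged to an error or hash collision, and since $\error$ is reset at every computation step and every transition, $\error_A(R)$ counts only such events since the last reset and is therefore $O(n\eps)$.

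The main work is bounding $\verif_{AB}(R)$ in the matching-$\verif$ case. My key observation would be that in the complete absence of errors and hash collisions, any iteration starting with $\verif = 0$ meets the computation conditions $\verif = 1$, $H_T = H_T'$, and $\error = 0$, and therefore triggers a reset of $\verif$ back to $0$. Consequently, any run of consecutive iterations during which $\verif_A$ stays nonzero must be fueled by an error or hash collision, and in fact must involve many such events. The stronger quantitative form I would use exploits the meeting-point transitions of \Cref{line:transitionrollbackone,line:transitionrollbacktwo}: for $\VP$ equal to the largest power of two at most $k = \verif_A(R)$, a run of $\VP$ iterations without Alice's reset requires both $\voteone < 0.4\,\VP$ and $\votetwo < 0.4\,\VP$ at the end, which forces at least $0.2\,\VP$ iterations in that run where neither vote incremented; each such iteration must have suffered an error or hash collision affecting Alice. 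This yields $\verif_A(R) = O(n\eps)$, and symmetrically for $\verif_B(R)$.

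Putting these bounds together, and invoking the bound on the total number of hash collisions from the analogue of \Cref{lem:simplehashcollisions}, I expect to obtain $\Phi(R) \leq l^+(R) + O(C_2) \cdot O(n\eps) \leq R + 20 C_2 n\eps$ after choosing the hidden constants appropriately. The main obstacle will be handling the situations where $\verif_A$ and $\verif_B$ are reset at different times, so that the potential switches between the matching-$\verif$ and differing-$\verif$ formulas during the execution; this will require careful accounting to ensure that the charging of growth in $\verif_{AB}$ and $\error_{AB}$ to errors and collisions is not double-counted between the two parties, and to reconcile the $C_4$ coefficient in the differing-$\verif$ case with the $20 C_2$ target constant in the statement.
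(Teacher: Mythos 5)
There is a genuine gap, and it starts with your very first move. Dropping all the negatively-signed terms discards exactly what the paper's argument needs. In the case $\verif_A \neq \verif_B$ you are left having to bound $C_4\,\error_{AB}(R)$ directly, but the clean way to kill this term is the invariant that $2\error < \verif$ holds at the end of every iteration (otherwise \Cref{line:transitionerror} fires and resets), which gives $-0.9C_4\verif_{AB} + C_4\error_{AB} \leq -0.4C_4\verif_{AB} \leq 0$ --- i.e., you need the $-0.9C_4\verif_{AB}$ term you threw away. Your substitute claim that every increment of $\error$ since the last reset is chargeable to an error or hash collision is false: when $\verif_A\neq\verif_B$ and the channel and hashes behave perfectly, both parties \emph{correctly} increment $\error$ every iteration; that is the intended detection mechanism, not a corruption event. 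Similarly, in the case $\verif_A=\verif_B$ you cannot bound $C_2\verif_{AB}(R)$ by $O(n\eps)$ after discarding $-C_3 l^-$: when the transcripts diverge by $l^-$ blocks, the parties legitimately spend $\Theta(l^-)$ error-free, collision-free iterations with neither vote incrementing, simply because no common meeting point exists until the scale $\VP$ reaches roughly $l^-$. So your claim that an iteration in which neither vote increments ``must have suffered an error or hash collision'' is wrong. The correct statement, and the one the argument hinges on, is that only the \emph{excess} $\verif_{AB}-2l^-$ must be fueled by suppressed votes, and this excess is absorbed against $-C_3 l^-$ using $C_3 > 2C_2$.

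The second, independent problem is circularity. You invoke ``the bound on the total number of hash collisions'' to control $\verif_{AB}$ and $\error_{AB}$, but \Cref{lem:totalpotential} is itself an ingredient in the proof of that bound (\Cref{lem:hashcollisionoblivious} derives the $O(n\eps)$ collision count \emph{from} this potential upper bound). The paper escapes this by exploiting the one-sidedness of hashing: a suppressed vote means a genuinely matching meeting point failed to register, which can only be caused by an actual transmission corruption (a hash collision can make unequal strings look equal, never equal strings look unequal). Hence the excess $\verif_{AB}-2l^-$ is charged deterministically to the at most $2n\eps$ channel errors, with no reference to hash collisions at all, yielding $\verif_{AB}-2l^- \leq 20n\eps$ and thus $\Phi \leq R + 20C_2 n\eps$. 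To repair your proof you would need to (i) retain the negative terms $-C_3 l^-$ and $-0.9C_4\verif_{AB}$ and use them as the paper does, and (ii) replace every appeal to the hash-collision count with an argument that uses only the deterministic error budget.
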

\newcommand{\prooftotalpotential}{
\shortOnly{\begin{proof}[Proof of \Cref{lem:totalpotential}]}\fullOnly{\begin{proof}}
We have that $l^{+}$ increases at most by one in each iteration which gives $l^{+} \leq R$. Furthermore, at the end of an iteration it always holds that $2 \error < \verif$  which implies that $-0.9 C_4 \verif_{AB} + C_4 \error_{AB} \leq - 0.4 \verif_{AB} \leq 0$ which leads to the potential $\Phi$ being at most $R$ if $\verif_A \neq \verif_B$. The only way to have a potential larger than $R$ is therefore if $\verif_A = \verif_B$ and $\verif_{AB}$ is large compared to $l^{-}$. This however is not possible without a large number of errors. More precisely, in order to have $\Phi \geq R + x$ it has to be true that $x \leq C_2 \verif_{AB} - C_3 l^{-}$ or $x \leq C_2 (\verif_{AB} - 2l^{-})$ assuming $C_3 > 2 C_2$. However, the only way for $\verif_{AB}$ to be larger than $2 l^{-}$ is if at least ten percent of the votes in the last $\verif_{AB} - 2 l^{-}$ rounds were corrupted to appear non-matching. These kind of corruptions can furthermore not be caused by a hash collision and therefore must be due to an error which implies that $10\% \cdot (\verif_{AB} - 2 l^{-}) < 2 n \eps$ or $(\verif_{AB} - 2 l^{-}) < 20 n \eps$. Putting this together gives $x \leq 20 C_2 n \eps$ and therefore $\Phi \leq R + 20 C_2 n \eps$ as desired. 
\end{proof}
}\fullOnly{\prooftotalpotential}

Now we can show the number of hash collisions in \refalgComputeOblivious to be small:

\begin{lemma}\label{lem:hashcollisionoblivious}
For any protocol $\Pi$ and any oblivious adversary the number of iterations suffering a hash collision in \refalgComputeOblivious is at most $\Theta(\eps n)$, with probability $1 - 2^{-\Theta(n \eps)}$.
\end{lemma}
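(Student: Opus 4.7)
\textbf{Proof plan for \Cref{lem:hashcollisionoblivious}.} Since the adversary is oblivious, its entire corruption pattern is fixed before Alice draws the seed $R'$ inside \refalgRobustRandomnessExchange, and hence the pseudo-random string $R$ feeding all verification phases is independent of the corruptions. The seeds used across the $R_{total}$ iterations are disjoint slices of $R$. By \Cref{lem:independenceoflinearhash}, applied to the collection of (at most four) string pairs compared within each iteration, the joint distribution of all hash outputs produced during the execution is $\delta$-close in statistical distance to the idealized distribution in which, for every pair of unequal strings hashed in any iteration, the two outputs are independent and uniform; the parameter $\delta = 2^{-\Theta((n/r)o)}$ was chosen precisely so that this slack is absorbed into the target failure probability. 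So I will work in the idealized model from now on.

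Call an iteration \emph{contested} if at least one of its four hash comparisons (on $\verif$, $\Trans$, $\Trans[1,\MPone]$, $\Trans[1,\MPtwo]$) pairs unequal strings at Alice and Bob. Only contested iterations can suffer a hash collision, and in the idealized model each contested iteration produces a collision independently with probability at most $p := 4\cdot 2^{-o}$, which is a small constant for the choice $o = \Theta(1)$ of \refalgComputeOblivious. Letting $X$ denote the number of contested iterations and $H$ the number of hash-collision iterations, $H$ is stochastically dominated by $\mathrm{Binomial}(X,p)$, so a Chernoff bound yields $\Pr[H > 2pX + \Theta(n\eps)] \leq 2^{-\Theta(n\eps)}$.

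It remains to bound $X$ itself. Here the plan mirrors the warmup argument behind \Cref{lem:simplealg}: in any contested iteration that is neither corrupted nor a collision, the parties see every mismatch correctly, and by inspection of the transition rules of \refalgComputeOblivious this forces at least one of the underlying disagreement quantities --- $l^-$, $|\verif_A-\verif_B|$, the $\error$ counts, the pending $(\voteone,\votetwo)$ discrepancies, or $\BVC_{AB}$ --- to strictly decrease, while each such quantity can grow by only $O(1)$ per iteration and is bounded below. A standard up-versus-down counting argument then gives $X \leq C(E+H)$ for an absolute constant $C$, where $E \leq 2n\eps$ is the number of corrupted iterations. Substituting into the previous Chernoff bound and solving the resulting self-referential inequality (which closes because we may take $Cp < 1/4$ by making $o$ a large enough constant) yields $H = O(n\eps)$ with failure probability $2^{-\Theta(n\eps)}$.

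The main obstacle I expect is carrying out the up-versus-down accounting in the last paragraph cleanly. The warmup only had to track the one-dimensional quantity $l^-$, whereas in \refalgComputeOblivious several coupled components of ``disagreement'' are present and the potential $\Phi$ of \Cref{lem:potentialincreasetwo} combines them; one must either argue shrinkage on each component separately, or distill a single scalar ``contestation measure'' that provably drops by a positive amount on every uncorrupted, collision-free contested iteration. Once such a measure is in hand the rest of the argument is routine.
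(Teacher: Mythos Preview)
Your plan is correct and follows the same skeleton as the paper: identify ``dangerous'' (your ``contested'') iterations, show that collisions are (close to) independent Bernoulli($2^{-o}$) events over them via \Cref{lem:independenceoflinearhash}, and close the loop by bounding the number $d$ of dangerous iterations linearly in the number $h$ of collisions plus $\Theta(n\eps)$.

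The obstacle you flag in your last paragraph, however, is not one you need to solve from scratch. The ``single scalar contestation measure'' you are looking for is already the potential $\Phi$, and its bookkeeping has been done in \Cref{lem:potentialincreasetwo} and \Cref{lem:totalpotential}. The paper simply combines these two lemmas: by \Cref{lem:potentialincreasetwo} the potential after $R$ iterations is at least $C^{+}(R - h - 2n\eps) - C^{-}(h + 2n\eps)$, while by \Cref{lem:totalpotential} it is at most $R + 20C_2 n\eps$. Subtracting and restricting attention to the dangerous iterations yields $d \le (C^{-}/C^{+} + 1)\,h + \Theta(n\eps)$ immediately. Equivalently, the quantity $R + 20C_2 n\eps - \Phi$ is exactly the nonnegative scalar that drops by at least $C^{+}$ on every error- and collision-free iteration (dangerous or not) and rises by at most $C^{-}$ otherwise; no new measure is needed. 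With this in hand the self-referential inequality you describe closes exactly as you outline.
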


\newcommand{\proofhashcollisionoblivious}{
\shortOnly{\begin{proof}[Proof of \Cref{lem:hashcollisionoblivious}]}\fullOnly{\begin{proof}}
We call an iteration \emph{dangerous} if, at the beginning of the iteration, the states of both parties do not agree, that is, if either $l^{-} > 0$ or $\verif_{AB} > 0$. It is clear that hash collisions can only occur during dangerous iterations, hence their name. Let $d$ be the number of dangerous iterations and let $h$ be the number of hash collisions (during these iterations). We want to prove that for $h = \Theta(n \eps)$ sufficiently large the fraction $\frac{h}{d}$ of hash collisions in dangerous rounds is too large, in particular, much larger than the expected fraction of hash collisions $p = 2^{-o}$, where the output length $o$ is chosen sufficiently large. In \refalgComputeOblivious we have $\log \delta = \Theta(R_{total})$ so large that according to \Cref{lem:independenceoflinearhash} all $R_{total} = \Theta(n \sqrt{\eps})$ hashing steps are statistically close to being fully independent. In particular, hash collisions are statistically close to being dominated by independent Bernoulli($p$) trials. A tail bound then shows that the probability for having such a large deviation from the expectation is exponentially small in $h$:

\Cref{lem:potentialincreasetwo} shows that in each iteration the potential increases at least by a fixed constant $C^{+}$ if no error or hash collision happens while it decreases at most by a fixed constant $C^{-}$ otherwise. The total potential change during dangerous rounds is therefore at least $C^{+} (d - h - 2 \eps n) - C^{-} (h + 2 n \eps) $ while the potential accumulated in non-dangerous rounds is at least $R_{total} - C^{-} (2n\eps)$ for a total potential of at least $R_{total} + C^{+} (d - h - 2 \eps n) - C^{-} (h + 4 n \eps)$. From \Cref{lem:totalpotential} we get however that the total potential is at most $R_{total} + 20 C_2 n \eps$. Together this implies $C^{+} (d - h) - C^{-} h = O(n \eps)$ and therefore also $d \leq (\frac{C^{-}}{C^{+}} + 1) h + \Theta(n \eps)$.
This implies that if $d = \Theta(n \eps)$ is sufficiently large then $h$ needs to be at least $\frac{1}{2(\frac{C^{-}}{C^{+}} + 1)} d$ and if we choose $p = 2^{-o} < \frac{1}{4(\frac{C^{-}}{C^{+}} + 1)}$ 
the probability for having this large of a fraction of collisions during dangerous rounds becomes an arbitrarily small $2^{-\Theta(n \eps)}$. Therefore the number of dangerous rounds is at most $d = \Theta(n \eps)$ with probability $1 - 2^{-\Theta(n \eps)}$. This also implies that the number of hash collisions is $\Theta(n \eps)$ as desired.
\end{proof}
}\fullOnly{\proofhashcollisionoblivious}

Next, we show that the $hash_1$ hash function in \refalgComputeAdv also causes at most $\Theta(\eps n)$ hash collisions, even for a fully adversarial channel:

\begin{lemma}\label{lem:hashonecollisions}
For any protocol $\Pi$ and any fully adversarial channel the number of iterations with hash collisions due to first hashing with the hash function $hash_1$ in \refalgComputeAdv is at most $\Theta(\eps n)$, with probability $1 - \eps^{\Theta(n \eps)}$.
\end{lemma}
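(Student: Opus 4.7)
The plan is to reduce the adaptive-adversary statement to the oblivious-adversary analysis of \Cref{lem:hashcollisionoblivious} by a union bound over the finite set of oblivious strategies the adversary can effectively select once it observes the preshared randomness transmitted during \refalgRobustRandomnessExchange. This is exactly the derandomization template alluded to earlier in the paper and used in \cite{BN13,GH13}; the reason it applies to $hash_1$ (but not to $hash_2$) is that $\strings_1$ is taken from the channel-visible $R$, whereas $\strings_2$ is local randomness sampled afresh each round.

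For the per-strategy step I would fix an oblivious adversary, specified by a set $S \subseteq [N]$ of at most $\eps N$ corruption positions together with the corrupted symbol at each position of $S$. Against such a fixed strategy the entire execution of \refalgComputeAdv is a deterministic function of $R$. By construction of \refalgRobustRandomnessExchange, $R$ is $\delta$-biased with $\log(1/\delta) = \Theta(n/r)$, and the slices $\strings_1$ used by successive iterations are disjoint sub-strings of $R$. \Cref{lem:independenceoflinearhash} applied to the inner-product hash with output length $o_1$ then guarantees that the $R_{total}$ indicator events ``$hash_1$ collides in iteration $i$'' are $\delta$-close to independent Bernoulli$(2^{-o_1})$ trials. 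Choosing the constant inside $o_1 = \Theta(\log 1/\eps)$ large enough forces $2^{-o_1} \le \eps^{c}$ for any desired constant $c$. Repeating the potential argument from \Cref{lem:hashcollisionoblivious}, using \Cref{lem:totalpotential} and \Cref{lem:potentialincreasetwo}, shows that having more than $\Theta(n \eps)$ hash collisions forces $\Omega(n \eps)$ dangerous iterations, each contributing an independent collision; a Chernoff tail bound on these $\eps^{c}$-Bernoullis gives per-strategy failure probability at most $\delta + \eps^{c' n \eps}$ for a constant $c'$ growing linearly with $c$.

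Finally I would take the union bound. Each oblivious strategy is a pair $(S,(v_i)_{i\in S})$ with $|S|\le \eps N$ and $v_i\in\Sigma$, of which there are at most $\binom{N}{\le \eps N}\,|\Sigma|^{\eps N}\le 2^{O(N H(\eps))}=\eps^{O(n\eps)}$ for constant-size $\Sigma$; picking $c$ (hence $c'$) large enough to dominate this exponent yields total failure probability $\eps^{\Theta(n\eps)}$, matching the claim. The main obstacle I anticipate is articulating the reduction cleanly: one must argue that the failure probability of any adaptive, possibly randomised, adversary is bounded by the maximum over the deterministic oblivious strategies (an averaging argument over the adversary's internal coins followed by a maximum over its pure strategies), and verify that under any fixed oblivious strategy the $\stringx,\stringy$ pairs hashed in iteration $i$ are determined before the $i$-th $\strings_1$ slice is consumed, which is precisely the independence hypothesis required by \Cref{lem:independenceoflinearhash}.
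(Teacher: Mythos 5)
Your proposal is correct and follows essentially the same route as the paper: fix an oblivious strategy, rerun the potential/Chernoff argument of \Cref{lem:hashcollisionoblivious} with the enlarged output length $o_1 = \Theta(\log \frac{1}{\eps})$ to get a per-strategy failure probability of $\eps^{\Theta(n\eps)}$, and then union bound over the at most $\binom{2n}{2n\eps} < (4/\eps)^{2n\eps}$ oblivious strategies. Your additional care about counting the corrupted symbol values and about the reduction from (randomized) adaptive to deterministic oblivious adversaries only fills in details the paper leaves implicit.
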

\newcommand{\proofhashonecollisions}{
\shortOnly{\begin{proof}[Proof of \Cref{lem:hashonecollisions}]}\fullOnly{\begin{proof}}
We follow the argument of the proof of \Cref{lem:hashcollisionoblivious} and first analyze the probability of having a large number of iterations with hash collisions or even just a large number of dangerous iterations if the adversary is oblivious. In particular, the probability of having $\Theta(n \eps)$ hash collisions in $\Theta(n \eps)$ dangerous rounds becomes an arbitrarily small $\eps^{\Theta(n \eps)}$ probability if a sufficiently large output length of $o_1 = \Theta(\log \frac{1}{\eps})$ is used in \refalgComputeAdv. 
 Since the number of possible oblivious strategies of selecting at most $2n\eps$ rounds out of at most $2n$ for a corruption is at most $\binom{2n}{2n\eps} < \frac{4}{\eps}^{2n\eps}$ this probability is small enough to take a union bound over all possible adversaries. This extends the proof to fully adversarial channels. 
\end{proof}
}\fullOnly{\proofhashonecollisions}

\Cref{lem:hashonecollisions} implies that even if we treat $hash_1$ hash collisions in \refalgComputeAdv as errors then this only increases the number of possible errors by a constant factor. We can therefore restrict ourselves in the next lemma to analyzing hash collisions due to the $hash_2$ hash function. This hash function however only needs to map the short $o_1 = \Theta(\log \frac{1}{\eps})$ long hash values to a constant output of length $o_2 = \Theta(1)$. Using the hash functions from \Cref{lem:hashes} for this only $\Theta(\log \log \frac{1}{\eps})$ bit sized seeds are necessary. Similar to \Cref{lem:simplehashcollisions} \finalDel{or  \Cref{lem:simplealgconstanthashvalue}} these small seeds are sampled afresh in every iteration which makes the hash collisions due to $hash_2$ being dominated by independent Bernoulli($\Theta(1)$) trials. Again, following the arguments in \Cref{lem:hashcollisionoblivious} having more than $\Theta(n \eps)$ such hash collisions has a probability of at most $2^{-\Theta(n \eps)}$:

\begin{corollary}\label{lem:hashtwocollisions}
For any protocol $\Pi$ and any fully adversarial channel the number of iterations with hash collisions due to hashing with the hash function $hash_2$ in \refalgComputeAdv is at most $\Theta(\eps n)$, with probability $1 - 2^{-\Theta(n \eps)}$.
\end{corollary}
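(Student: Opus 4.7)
The plan is to follow the template used for \Cref{lem:simplehashcollisions} and then plug the result into the potential-based counting of dangerous rounds from \Cref{lem:hashcollisionoblivious}. The key point that distinguishes this argument from that of \Cref{lem:hashonecollisions} is that the short seed $\strings_2$ feeding $hash_2$ is freshly sampled \emph{inside} each iteration, which buys independence across iterations without needing any union bound over adversary strategies.

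First I would reduce to counting collisions inside ``dangerous'' iterations, i.e., iterations in which the two parties' states disagree so that $l^- > 0$ or $\verif_{AB} > 0$. A $hash_2$ collision can only occur in such an iteration, because otherwise the inputs to $hash_2$ at the two parties are identical and the outputs trivially match. Let $d$ denote the number of dangerous iterations and $h_2$ the number of $hash_2$ collisions.

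Second, I claim the $h_2$ indicators are stochastically dominated by independent Bernoulli$(p_2)$ trials. At the moment $\strings_2$ is drawn in an iteration, the transcripts $\Trans_A, \Trans_B$ and the previously-computed $hash_1$-values at both parties are already determined by the past and by the randomness $R$ exchanged at the start (which the adversary may even know). The seed $\strings_2$ is fresh uniform randomness drawn after these quantities are fixed, so by \Cref{lem:hashes} applied to two distinct $o_1 = \Theta(\log \frac{1}{\eps})$-bit strings with an $s_2 = \Theta(\log \log \frac{1}{\eps})$-bit seed and output length $o_2 = \Theta(1)$, the per-iteration collision probability is at most $p_2 = 0.1$. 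Because each iteration draws its own $\strings_2$ independently, this bound is conditional on all history and so gives genuine independence across iterations. No union bound over adversary strategies is needed, in contrast to \Cref{lem:hashonecollisions}, since the fresh randomness is never exposed before the strings it hashes are committed.

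Third, I would replay the counting from \Cref{lem:hashcollisionoblivious}. Combining \Cref{lem:potentialincreasetwo} with the upper bound of \Cref{lem:totalpotential}, and treating the $2n\eps$ true corruptions and the $\Theta(n\eps)$ $hash_1$ collisions furnished by \Cref{lem:hashonecollisions} as ``errors,'' one obtains $d \le C\,(h_2 + n\eps)$ for an absolute constant $C$ depending only on the $C_i$'s. Hence for any $h_2 \ge c_0 n\eps$ with $c_0$ large enough, the empirical collision fraction on dangerous rounds satisfies $h_2/d \ge 1/(2C)$, which exceeds $2 p_2$ once $o_2$ is chosen large enough to force $p_2 \le 1/(4C)$. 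A standard Chernoff bound on independent Bernoulli$(p_2)$ trials then yields a deviation probability of $2^{-\Theta(h_2)} = 2^{-\Theta(n\eps)}$, giving the claimed bound. The only delicate step is verifying the independence-to-past claim used above; I expect this to be the main (but still mild) obstacle, since it requires checking that the adversary's observation of $\strings_2$ after transmission cannot retroactively influence what was hashed in the same iteration, both $\Trans$ and the $hash_1$-values having been committed to before $\strings_2$ was drawn.
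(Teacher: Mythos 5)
Your proposal is correct and follows essentially the same route as the paper's proof: reduce to dangerous iterations, use the potential argument (via \Cref{lem:potentialincreasetwo}, \Cref{lem:totalpotential}, and the $\Theta(n\eps)$ bound from \Cref{lem:hashonecollisions}) to get $d \le C(h_2 + n\eps)$, and then exploit the fresh per-iteration seeds $\strings_2$ to dominate the collisions by i.i.d.\ Bernoulli$(p_2)$ trials and apply a Chernoff bound, with no union bound over adversaries needed. The only cosmetic difference is that the paper defines ``dangerous'' as the iterations where the $hash_1$ values of the two parties disagree (a subset of your state-disagreement iterations), which changes nothing in the argument.
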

\newcommand{\proofhashtwocollisions}{
\shortOnly{\begin{proof}[Proof of \Cref{lem:hashtwocollisions}]}\fullOnly{\begin{proof}}
We call an iteration in which the hash values of both parties after applying $hash_1$ do not agree a \emph{dangerous} iteration. It is clear that $hash_2$ hash collisions can only occur during dangerous iterations. Let $d$ be the number of dangerous iterations and let $h$ be the number of such $hash_2$ hash collisions (during these iterations). 

According to \Cref{lem:hashonecollisions} the number of iterations with errors or $hash_1$ hash collisions is at most $\Theta(n \eps)$. The exact same calculation as in \Cref{lem:hashcollisionoblivious} therefore shows that $d \leq (\frac{C^{-}}{C^{+}} + 1) h + \Theta(n \eps)$. If $d = \Theta(n \eps)$ is sufficiently large then the fraction of dangerous rounds with a hash collision is at least $\frac{1}{2(\frac{C^{-}}{C^{+}} + 1)}$. If we choose $p_2 = 2^{-o_2} < \frac{1}{4(\frac{C^{-}}{C^{+}} + 1)}$ each dangerous rounds produces a hash collision with a probability which is at least a constant times smaller than this fraction. Since the random $hash_2$ seeds are sampled afresh and independently at the beginning of each iteration the hash collisions are dominated by i.i.d. Bernoulli variables with probability $p_2$ and a Chernoff bound shows that the probability of having this large of a fraction of collisions during dangerous rounds becomes an arbitrarily small $2^{-\Theta(n \eps)}$. 
\end{proof}
}\fullOnly{\proofhashtwocollisions}

With these $\Theta(n \eps)$ bounds on the total number of hash collisions in both \refalgComputeOblivious and \refalgComputeAdv we can prove our main results:

\begin{proof}[Proof of \Cref{lem:mainoblivious} and \Cref{lem:mainadv}]
\Cref{lem:hashcollisionoblivious,lem:hashonecollisions,lem:hashtwocollisions} show that both in \refalgComputeOblivious and \refalgComputeAdv at most $\Theta(n \eps)$ hash collisions or errors happen. \Cref{lem:potentialincreasetwo} shows that the potential drop in these rounds is bounded by a fixed $\Theta(n \eps)$ while the potential increases by at least one in the remaining $R_{total} - \Theta(n \eps)$ rounds. For a sufficiently large $R_{total} = \ceil{n / r} + \Theta(n \eps)$ this implies a potential of at least $\Phi > \ceil{n / r} + \Theta(n \eps)$ in the end. Following the arguments in \Cref{lem:totalpotential} we get that $l^{+} \geq \ceil{n / r}$ which implies that the parties agree upon the first $n$ symbols of the execution of $\Pi$ and therefore both output the correct outcome.

The total round complexity of the main loop in both algorithms is $R_{total} (r + r_c) = (\ceil{n / r} + \Theta(n \eps)) r (1 + \frac{r_c}{r}) = n (1 + \Theta(r \eps)) (1 + \frac{r_c}{r}) = n (1 + \Theta(r \eps + \frac{r_c}{r}))$ and in both algorithms $r$ is set to the (asymptotically) optimal value $r = \ceil{\sqrt{\frac{r_c}{\eps}}}$ which makes this round complexity equal to $n (1 + \Theta(\sqrt{r_c \eps}))$. In \refalgComputeOblivious $r_c = \Theta(1)$ which leads to a round complexity of $n (1 + \Theta(\sqrt{\eps}))$ in the main loop. In \refalgComputeAdv $r_c = \Theta(\log \log \frac{1}{\eps})$ which leads to a round complexity of $n (1 + \Theta(\sqrt{\eps \log \log \frac{1}{\eps}}))$ in the main loop. In both algorithms the communication performed by the randomness exchange is $\Theta(n \sqrt{\eps})$ many rounds and therefore negligible. This shows the communication rate of \refalgComputeOblivious to be $1 - \Theta(\sqrt{\eps})$ and the communication rate of \refalgComputeAdv to be $1 - \Theta(\sqrt{\eps \log \log \frac{1}{\eps}})$ as desired. 
\end{proof}

\finalDel{We conclude with some remarks:

\begin{itemize}
	\item One can also achieve the $1 - \Theta(\sqrt{\eps})$ communication rate against fully adversarial channels in the setting of \cite{FGOS13} where the communicating parties have access to some source of shared randomness that is hidden from the adversary. No such asymptotic gap between having or not having shared randomness exists in the standard one-way setting. To achieve the $1 - \Theta(\sqrt{\eps})$ communication rate one simply uses \refalgComputeOblivious but instead of presharing randomness using the Robust Randomness Exchange, which would expose the randomness to the fully adaptive adversary while it still has the possibility to influence the transcripts to be hashed, parties simply use fresh shared randomness in \Cref{algline:hashingbegin}. 
	\item Similarly, if one assumes a computationally bounded fully adaptive adversary and sufficiently strong computational hardness assumptions to allow for key-agreement the parties can simulate a hidden shared random source by securely exchanging a short random seed at the beginning of the algorithm and then using a cryptographic PRG to stretch this randomness. Since the resulting random string is indistinguishable from a truly random string by the adversary one can use it as a shared hidden source of randomness and again use \refalgComputeOblivious.
	\item A straight forward implementation of our algorithms runs in quadratic time, because each of the $O(n)$ iterations requires a hashing step over an $O(n)$ long transcript. Using ideas similar to \cite{GH13}, but much simpler, one can also achieve a near linear computational complexity. 
	\item One efficient way to implementation of our algorithms, especially in settings where extensive computations are performed between communication steps, is to use \emph{checkpointing}, that is, storing a snapshot of the current application state to enable fast back-tracking of (non-reversible) computations. It is possible to have an implementation in which never more than $\log n$ checkpoints are stored while the total amount of extra computation steps compared to a noise-free execution is at most $\Theta(n \sqrt{\eps})$. 
\end{itemize}
}

\bibliographystyle{abbrv}
\bibliography{ref}

\finalDel{

\newpage

\appendix

\setlength{\parskip}{0.2cm}

\shortOnly{
\section{Important Remarks Regarding the Interactive Coding Settings}\label{sec:importantremarks}

\importantremarks
}

\section{Discussion of the $1 - \Theta(\sqrt{\eps \log \frac{1}{\eps}})$ Bound of \cite{KR}}\label{app:hepsboundexplanaition}

In this section we try to briefly give some explanation and intuition regarding the extra assumptions on the coding scheme the were made in \cite{KR} and how these assumptions lead to the slightly stronger upper bound of $1 - \Omega(\sqrt{\eps \log \frac{1}{\eps}})$ which, as the algorithms in this paper show, disappears for alternating protocols or for adaptive-simulations. In particular, the following non-adaptivity assumptions of \cite{KR} is crucial in proving this bound: 

The order in which Alice and Bob talk during the simulation is predetermined a priori and therefore independent of when errors happen. In particular, this explicitly forbids Bob to adapt the length of a clarification provided to Alice depending on whether or how many errors have happened or how much Alice (reportedly) already understood. 

The protocol which was chosen by the impossibility result of \cite{KR} to be simulated is furthermore structurally more complex than an alternating protocol: \cite{KR} essentially assumes a uniformly random protocol over an alphabet of much larger bit size. In particular, the bit size $B$ grows with $1/\eps$ and is chosen to be $B = \frac{\sqrt{\log \frac{1}{\eps}}}{\sqrt{\eps}}$. This means that communicating any of these messages on its own requires more than the $r = \frac{1}{\sqrt{\eps}}$ rounds we have until we need to add redundancy.

In such an assumed setting Alice could, as before, try to add a parity check bit to detect errors, for example, at the end of her first question. Because of the non-adaptivity assumption however, even when an error was detected the predetermined order does not allow the parties to adjust their communication adaptively. In particular, the parties need to essentially decide a priori how much time Alice spends to communicate the first $B$ bit long question before Bob starts answering. If, by this pre-allocated time, Bob is not clear on the question his whole slot, which is reserved for his $B$ bit long answer, will essentially go to waste. A single error happens with probability $\eps B$ and if one fails to resolve it with constant probability this would lead to a rate loss of $\Theta(\frac{\eps B^2}{B}) = \Theta(\eps B)$. Since the value of $B$ is chosen large enough by the second assumption this would be a rate loss which cannot be tolerated. Alice therefore needs to determine a priori how many redundant steps she needs to add to allow Bob to resolve one error (the unlikely case of more than one error which happens with probability $\Theta((\eps B)^2)$ can be ignored). This however means that Alice needs to send the position of the erroneous symbol. This requires $\log B = \log \frac{1}{\eps}$ bits. In short, over a binary channel detecting an error costs only one (parity) bit while correcting one error requires $\log B$ extra bits. Overall these $\Theta(\log B)$ extra transmissions for an error correction need to be planned in for every $B$ bit answer or question even if both parties are aware that no error happened. This leads to a rate loss of $\log B / B$. The overall rate loss is therefore either $\eps B$ or $\log B / \eps$ which for $B = \frac{\sqrt{\log \frac{1}{\eps}}}{\sqrt{\eps}}$ is a rate loss of $1 - \Theta(\sqrt{\frac{1}{\eps} \log \frac{1}{\eps}})$ either way. 

In summary, having a large $B$ in combination with making the algorithm to decide a priori who talks at what time forces the algorithm to supply not just sufficient information to detect errors (since adaptive backtracking is not possible) but it needs to plan in enough redundancy ahead of time to be able to also correct these errors. The number of symbols needed for such a correction is $\log B$ for a binary channel. This forces the algorithm to waste $\log B$ bits of error correction for every answer even if the parties have already determined that no error has happened. The combination of a predetermined order and a sufficiently non-regular protocol to be simulated therefore leads to an overall rate loss of $1 - \Theta(\sqrt{\frac{1}{\eps} \log \frac{1}{\eps}})$. On the other hand allowing for adaptive algorithms or simulating any periodic protocol with period at most $\frac{1}{\sqrt{\eps}}$ allows for a better communication rate of $1 - \Theta(\sqrt{\eps})$ at which point one hits a hard and fundamental limit as shown in \Cref{sec:impossibility}.

\shortOnly{\reducingseedlength}

\shortOnly{
\section{Proofs}\label{app:proofs}

\proofseedlengthLB

\bigskip
\proofpotentialincreasetwo

\bigskip
\prooftotalpotential

\bigskip
\proofhashcollisionoblivious

\bigskip
\proofhashonecollisions

\bigskip
\proofhashtwocollisions
}

}

\end{document}